\definecolor{codegreen}{rgb}{0,0.6,0}
\definecolor{codegray}{rgb}{0.5,0.5,0.5}
\definecolor{codepurple}{rgb}{0.58,0,0.82}
\definecolor{backcolour}{rgb}{0.95,0.95,0.92}
\lstdefinestyle{mystyle}{
  backgroundcolor=\color{backcolour},   commentstyle=\color{codegreen},
  keywordstyle=\color{magenta},
  numberstyle=\tiny\color{codegray},
  stringstyle=\color{codepurple},
  basicstyle=\ttfamily\footnotesize,
  breakatwhitespace=false,         
  breaklines=true,                 
  captionpos=b,                    
  keepspaces=true,                 
  numbers=left,                    
  numbersep=5pt,                  
  showspaces=false,                
  showstringspaces=false,
  showtabs=false,                  
  tabsize=2
}
\crefname{hypothesis}{Hypothesis}{Hypotheses}
\begin{document}

\newcommand\relatedversion{}

\title{\Large Generating pivot Gray codes for spanning trees of complete graphs in constant amortized time\relatedversion}
    \author{Bowie Liu\thanks{Faculty of Applied Sciences, Macao Polytechnic University (\email{bowen.liu@mpu.edu.mo}, \email{cwong@mpu.edu.mo}, \email{ctlam@mpu.edu.mo}).}
    \and Dennis~Wong\footnotemark[1]    
    \and Chan-Tong Lam\footnotemark[1] 
    \and Sio-Kei Im\thanks{Macao Polytechnic University
  (\email{marcusim@mpu.edu.mo}).}   
}

\date{}

\maketitle


\fancyfoot[R]{\scriptsize{Copyright \textcopyright\ 2026 by SIAM\\
Unauthorized reproduction of this article is prohibited}}





\begin{abstract} 
We present the first known pivot Gray code for spanning trees of complete graphs, listing all spanning trees such that consecutive trees differ by pivoting a single edge around a vertex. 
This pivot Gray code thus addresses an open problem posed by Knuth in \emph{The Art of Computer Programming, Volume 4} (Exercise 101, Section 7.2.1.6, [Knuth, 2011]), rated at a difficulty level of 46 out of 50, and imposes stricter conditions than existing revolving-door or edge-exchange Gray codes for spanning trees of complete graphs. 
Our recursive algorithm generates each spanning tree in constant amortized time using $O(n^2)$ space. 
In addition, we provide a novel proof of Cayley's formula, $n^{n-2}$, for the number of spanning trees in a complete graph, derived from our recursive approach. 
We extend the algorithm to generate edge-exchange Gray codes for general graphs with $n$ vertices, achieving $O(n^2)$ time per tree using $O(n^2)$ space. 
For specific graph classes, the algorithm can be optimized to generate edge-exchange Gray codes for spanning trees in constant amortized time per tree for complete bipartite graphs, $O(n)$-amortized time per tree for fan graphs, and $O(n)$-amortized time per tree for wheel graphs, all using $O(n^2)$ space.
\end{abstract}

\section{Introduction.}\label{sec:intro}
A \emph{complete graph} on $n$ vertices, denoted by $K_n$, is a simple labeled graph in which every pair of distinct vertices is connected by exactly one edge. 
In other words, each vertex in a complete graph is adjacent to every other vertex. 
For example, the complete graphs on one to five vertices, that is $K_1$ to $K_5$, are shown in Figure~\ref{fig:k1k5}:

\begin{figure}[H]
    \centering
    \includegraphics[width=0.9\linewidth]{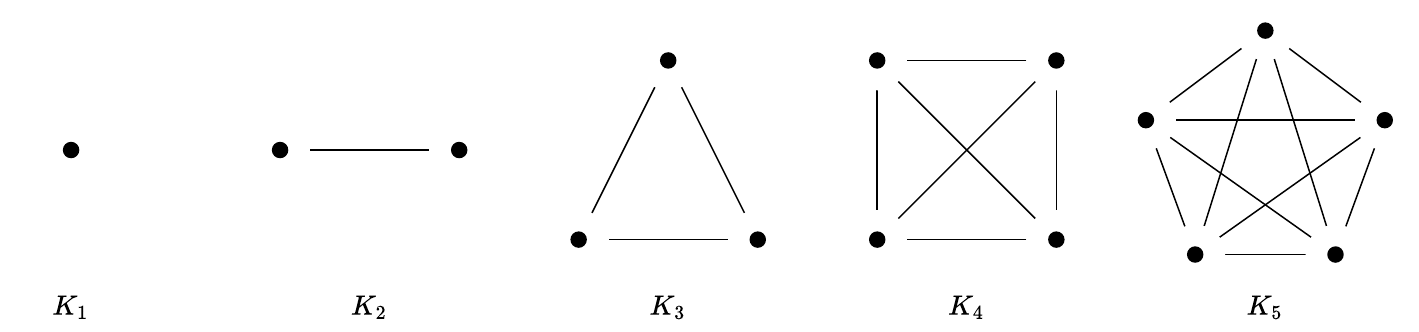}
        \caption{Complete graphs from $K_1$ to $K_5$.}
    \label{fig:k1k5}
\end{figure}

A \emph{spanning tree} of a simple graph $G$ is a subgraph that includes all vertices of $G$ and forms a tree. 
In other words, it is a subgraph of $G$ that connects all vertices of $G$ without cycles. 
As an example, Figure~\ref{fig:span16} illustrates the 16 spanning trees of $K_4$. 

The number of spanning trees in a complete graph $K_n$ with $ n $ vertices is given by Cayley's formula~\cite{Cayley1889}. 
This formula, named after the mathematician Arthur Cayley, gives the number of labeled spanning trees in $K_n$ as:
$$n^{n-2}.$$
This enumeration sequence corresponds to sequence A000272 in the Online Encyclopedia of Integer Sequences~\cite{OEIS2025A000272}. 
For example, the number of spanning trees for complete graphs $K_1$ to $K_8$ is 1, 1, 3, 16, 125, 1296, 16807, and 262144, respectively.
Cayley provided an informal proof of the formula for the number of spanning trees in a complete graph $K_n$~\cite{Cayley1889}. 
Later, Pr\"{u}fer~\cite{Prufer1918} and Kirchhoff~\cite{Kirchhoff1847} offered simpler proofs by establishing a one-to-one correspondence between spanning trees and Pr\"{u}fer sequences, and by applying the matrix-tree theorem, respectively.
M\"{u}tze~\cite{mutze2023combinatorial}, however,  commented that in \emph{The Art of Computer Programming, Volume 4}~\cite{knuth2011} (Ex. 101 in Sec. 7.2.1.6), Knuth possibly sought a more fine-grained explanation of Cayley's formula for the number of spanning trees in complete graphs.

The study of spanning trees of complete graphs and other graphs has a long history and has traditionally attracted considerable interest from mathematicians and graph theorists~\cite{Berger1967, Cayley1889, Chakraborty2019, Gabow1978, Kirchhoff1847,  knuth2011, Mayeda1965, Minty1965, mutze2023combinatorial, Naddef1981, Prufer1918, ruskey1996combinatorial, Sav97, Shank1968, Shioura1997, Winter1985}.  
The task of efficiently enumerating all spanning trees of graphs is fundamental in computer science and finds applications across diverse scientific fields including epidemiology, networking, chemistry, biology, astronomy, archaeology, image processing, and social media. Notably, one of the earliest known works on enumerating all spanning trees of a graph was conducted by the German physicist Wilhelm Feussner in 1902, motivated by applications in electrical network analysis \cite{feussner1902}. 
For more applications on spanning trees of graphs, see \cite{Chakraborty2019, knuth2011}.
\begin{figure}[t]
    \centering
    \includegraphics[width=1\linewidth]{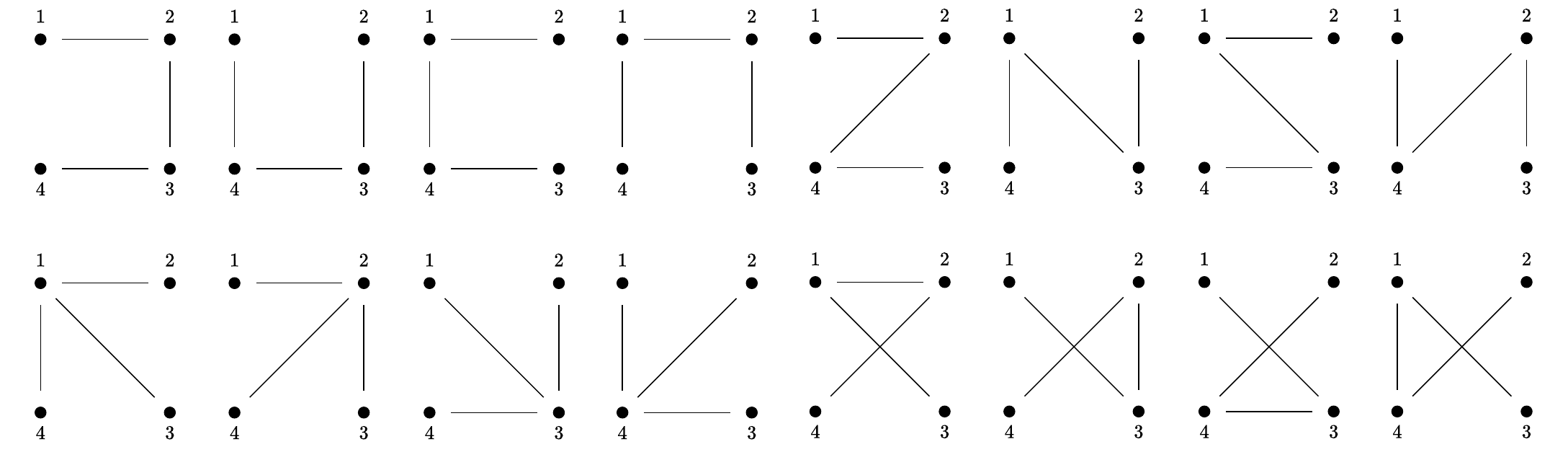}
    \caption{The 16 spanning trees of $K_4$.}
    \label{fig:span16}
\end{figure}

One of the most important aspects of combinatorial generation is to list the instances of a combinatorial object so that consecutive instances differ by a specified \emph{closeness condition} involving a constant amount of change. 
Lists of this type are called \emph{Gray codes}. 
This terminology is due to the eponymous \emph{binary reflected Gray code} (BRGC) by Frank Gray, which orders the $2^n$ binary strings of length $n$ so that consecutive strings differ by one bit. 
For example, when $n = 4$ the order is
$$0000, 1000, 1100, 0100, 0110, 1110, 1010, 0010, 0011, 1011, 1111, 0111, 0101, 1101, 1001, 0001.$$ 
Variations that reverse the entire order or the individual strings are also commonly used
in practice and in the literature. 
For more information about binary reflected Gray code and its variations, see~\cite{10.1007/978-3-030-85088-3_15,flipswap,Vajnovszki2007GrayCO}. 
We note that the order above is \emph{cyclic} because the last and first strings also differ by the closeness condition, and this property holds for all $n$.
The BRGC listing is a \emph{1-Gray code} in which consecutive strings differ by one symbol change. 
In this paper, we utilize a generalized version of the binary reflected Gray code to construct Gray codes for spanning trees of complete graphs and general graphs.

An interesting problem related to graphs is thus to discover a Gray code for spanning trees of graphs.
But before addressing this problem, we first define the nature of changes allowed in Gray codes for spanning trees of graphs. 
An \emph{edge-exchange Gray code} for the spanning trees of a graph is a list of all spanning trees of the graph such that each consecutive spanning trees differ by the removal of one edge and the addition of another. 
For complete graphs, this problem was posed as an open problem by Knuth in \emph{The Art of Computer Programming, Volume 4} \cite{knuth2011} (Ex. 101 in Sec. 7.2.1.6), with a difficulty rating of 46 out of 50:
\begin{itemize}
\item Is there a simple revolving-door way to list all $n^{n-2}$ spanning trees of the complete graph $K_n$? (The order produced by Algorithm S is quite complicated.) 
\end{itemize}
The \emph{revolving-door} property, as described by Knuth, refers to an edge-exchange Gray code in which consecutive spanning trees differ by the substitution of one edge for another \cite{knuth2011}. 
Cummins first demonstrated that the spanning trees of any connected graph can be listed cyclically via edge exchanges, thereby proving the existence of edge-exchange Gray codes for spanning trees of general graphs \cite{cummins1966}. 
Later, Kamae \cite{kamae19777}, Kishi and Kajitani \cite{kishi1967}, and Holzmann and Harary \cite{holzmann19721} provided more detailed proofs of the existence of edge-exchange Gray codes for spanning trees of graphs, with extensions to matroids. 
However, these results are existential and do not provide algorithms for generating such Gray codes. 
There are also many algorithms for generating spanning trees of graphs~\cite{Aichholzer2007, Berger1967, Chakraborty2019, Char1968, Gabow1978, Hakimi1961, Kapoor1995, Kapoor2000, Katoh2009a, Katoh2009b, Matsui1997, Mayeda1965, Minty1965, Shioura1995, Winter1985}, however, the spanning trees generated by these algorithms do not result in an edge-exchange Gray code. 
The problem of generating an edge-exchange Gray code for spanning trees of an arbitrary graph was resolved by Smith~\cite{smith1997}, and more recently by Merino, M\"{u}tze, and Williams~\cite{Merino2022}, and Merino and M\"{u}tze~\cite{doi:10.1137/23M1612019}.
Notably, Smith’s algorithm corresponds to the Algorithm S referenced in Knuth’s open problem.
Their algorithm produces the edge-exchange Gray code for $K_5$ as shown in Figure~\ref{fig:smith}:
\begin{figure}[h]
    \centering
    \includegraphics[width=1\linewidth, trim={25 0 75 0}, clip]{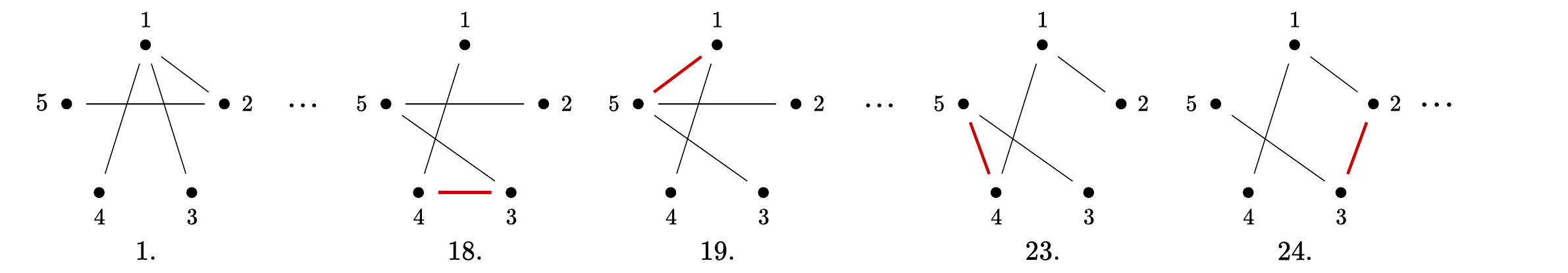}
    \caption{Edge-exchange Gray code for spanning trees of $K_5$ by Smith's algorithm (Algorithm S in \cite{knuth2011}). Only part of the listing is shown. }
    \label{fig:smith}
\end{figure}

\noindent
Interestingly, edge-exchange Gray codes of a graph $G$ correspond to Hamilton paths on the 0/1-polytope that is obtained
as the convex hull of the characteristic vectors of all spanning trees of $G$~\cite{Schrijver2003}.
Among these algorithms, Smith’s algorithm is the only one that generates the Gray code listing in constant amortized time per tree.
The algorithms of Merino, M\"{u}tze, and Williams~\cite{Merino2022} and Merino and M\"{u}tze~\cite{doi:10.1137/23M1612019} generate edge-exchange Gray codes in $O(m\log n(\log \log n)^3)$ and $O(m \log n)$ time per tree, respectively.

Cameron, Grubb and Sawada~\cite{cameron2024} proposed the concept of pivot Gray code (also known as a \emph{strong revolving-door Gray code}~\cite{knuth2011}) for spanning trees of graphs, which imposes stricter constraints on the transitions allowed in Gray codes.
A \emph{pivot Gray code} for the spanning trees of a graph is an edge-exchange Gray code with the additional requirement that the removed and added edges share a common vertex. 
For example, the edge-exchange Gray code generated by Smith’s algorithm is not a pivot Gray code, as the transitions from tree 18 to tree 19, and the transitions from tree 23 to tree 24 in Figure~\ref{fig:smith} for $K_5$ generated by their algorithm lack the pivot property.

Cameron et al.~\cite{cameron2024} developed a greedy strategy to list the spanning trees of a fan graph in a pivot Gray code order. 
For example, the fan graph $F_4$ with four vertices and its pivot Gray code, generated by Cameron et al.'s algorithm, are presented in \cref{fig:fan}. 
The edge-exchange Gray codes provided by Merino, M{\"u}tze, and Williams~\cite{Merino2022} and Merino and M{\"u}tze~\cite{doi:10.1137/23M1612019} are not pivot Gray codes for complete graphs.
\begin{figure}[h]
    \centering
    \includegraphics[width=0.7\linewidth, trim={0 0 0 0}, clip]{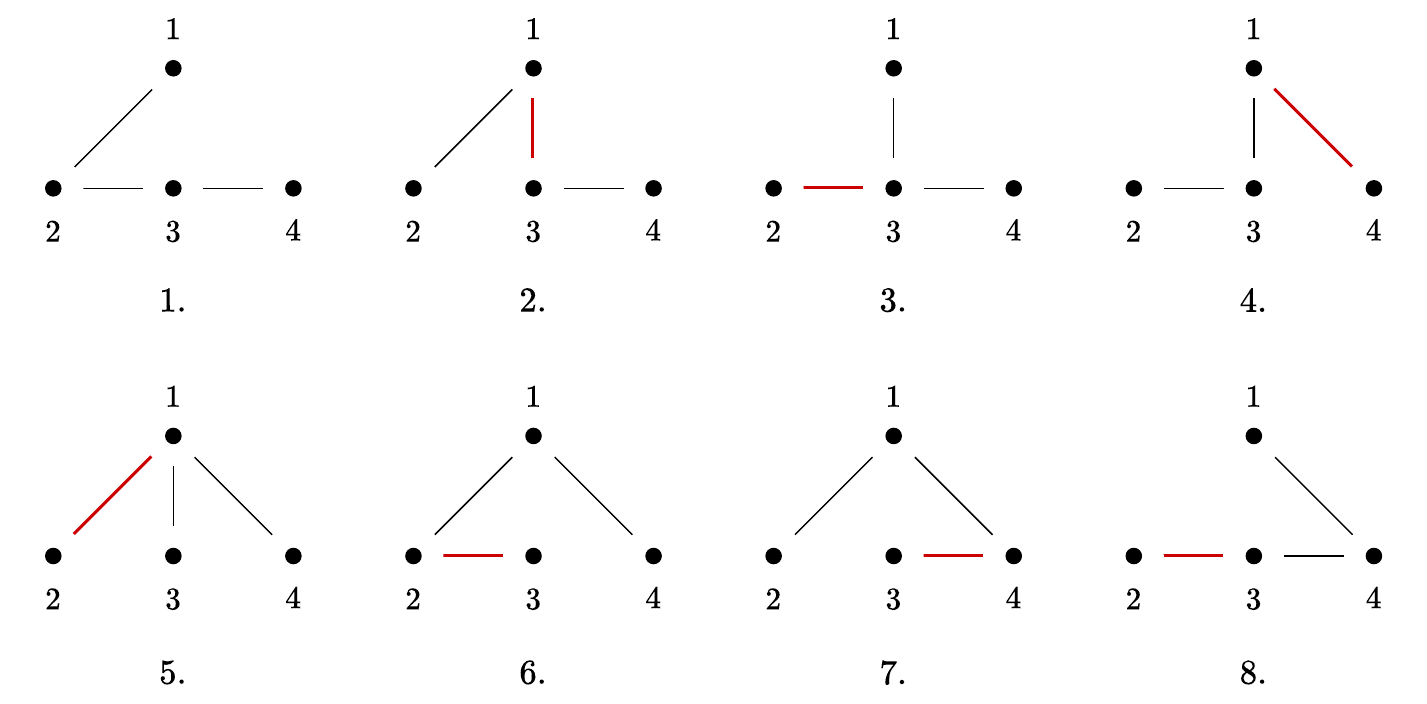}
    \caption{Pivot Gray code for spanning trees of $F_4$.}
    \label{fig:fan}
\end{figure}
The algorithm by Cameron et al. can be optimized to generate the pivot Gray codes for fan graphs in constant amortized time per tree.
They also generalized the algorithm to produce pivot Gray codes for wheel graphs.
More recently, Behrooznia and M\"{u}tze~\cite{behrooznia2025} provided an algorithm that generates pivot Gray codes for outerplanar graphs (which include fan graphs) in $O(n \log n)$ time per tree.
Cameron et al., and Behrooznia and M\"{u}tze posed the following open problems regarding complete graphs:
\begin{itemize}
    \item It remains an open question to find similar results (pivot Gray codes) for other classes of graphs, including the $n$-cube and the complete graph~\cite{cameron2024}.
    \item Knuth~\cite{knuth2011} asked in Exercise 101 in Section 7.2.1.6 whether the complete graph $K_n$ admits a ``nice'' Gray code listing of its spanning trees. One interpretation of this question, in the spirit of Problem 1, could be: Is there a pivot-exchange Gray code for the spanning trees of $K_n$? Alternatively, is there a Gray code listing that provides a more fine-grained explanation of Cayley's formula $n^{n-2}$ for the number of spanning trees~\cite{behrooznia2025}?
\end{itemize}
In this paper, we address these open problems and Knuth's open problem by providing the first known pivot Gray code for spanning trees of complete graphs.
We also provide a simpler proof of Cayley's formula for the number of spanning trees of complete graphs. 
Our algorithm generates each spanning tree in constant amortized time using $O(n^2)$ space.
We further extend our algorithm to generate edge-exchange Gray codes for general graphs with $n$ vertices, achieving $O(n^2)$ time per tree using $O(n^2)$ space.
The runtime can be optimized for specific classes of graphs, including complete bipartite graphs, fan graphs, and wheel graphs, which can be generated in constant amortized time, $O(n)$-amortized time, and $O(n)$-amortized time per tree, respectively, using $O(n^2)$ space.

The rest of the paper is outlined as follows.
In Section~\ref{sec:pivot}, we describe a simple recursive algorithm for generating pivot Gray codes for spanning trees of complete graphs. 
Then in Section~\ref{sec:algorithm}, we generalize our algorithm to generate edge-exchange Gray codes for general graphs. 
We provide some final remarks, including lemmas on related problems, in Section~\ref{sec:remark}. 

\section{Pivot Gray codes for spanning trees of complete graphs.}
\label{sec:pivot}

In this section, we present an algorithm to generate the first known pivot Gray code for spanning trees of $K_n$.
We first introduce a novel encoding for spanning trees of a graph.
We then describe a recursive algorithm to generate pivot Gray codes for spanning trees of $K_n$, which relies on generating Gray codes for $k$-ary strings.
Finally, we present a simple algorithm to efficiently generate a Gray code for $k$-ary strings by leveraging an algorithm for generating Gray codes for reflectable languages.

\subsection{Encoding spanning trees in a graph.}
In our proposed representation for spanning trees of a graph, each spanning tree is represented by an $h$-tuple $T = (t_1, t_2, \ldots, t_h)$, where $h$ is the height of the rooted tree corresponding to the spanning tree rooted at vertex 1. 
The $\ell$-th element $t_\ell$ of $T$ encodes the connections between vertices at level $\ell$ and those at level $\ell-1$ of the rooted tree.

Suppose the rooted tree from level 0 to level $\ell-1$ contains $q_{\ell-1}$ vertices, with $p$ vertices at level $\ell-1$. 
The $\ell$-th element $t_\ell$ of $T$ is a string of length $n-q_{\ell-1}$, that is $t_\ell = a_1 a_2 \cdots a_{n-q_{\ell-1}}$, where each character $a_i$ ranges from 0 to $p$. 
Each character $a_i$ indicates whether the $i$-th smallest remaining vertex (from the $n-q_{\ell-1}$ vertices not yet assigned to levels 0 through $\ell-1$) connects to a vertex at level $\ell-1$.
If $a_i = 0$, then this vertex is not connected to any vertex at level $\ell-1$. 
Otherwise if $a_i \neq 0$, then it connects to the $(a_i)$-th smallest vertex at level $\ell-1$.
For example, the tuple $(00100, 1100, 22)$ corresponds to the spanning tree of $K_6$ with its associated rooted tree shown in Figure~\ref{fig:k6_spanning}.
\begin{figure}[t]
\centering
\includegraphics[width=0.6\linewidth]{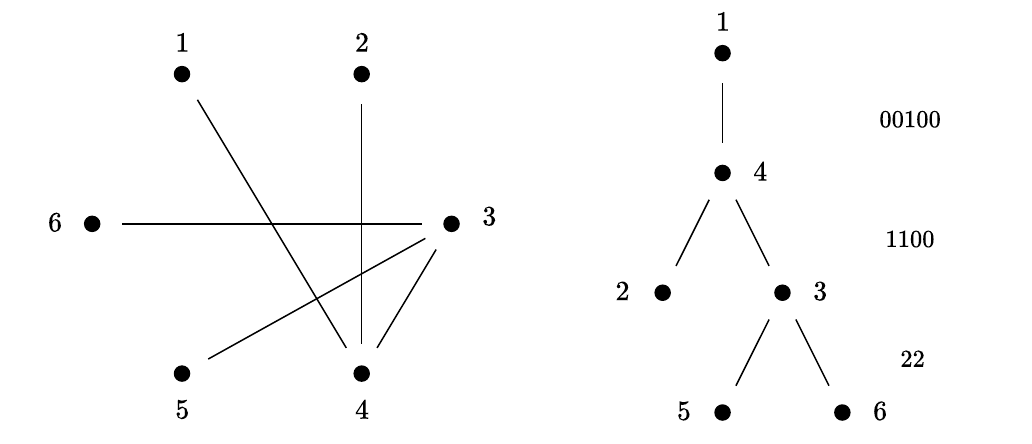}
\label{fig:k6_spanning}
\caption{The spanning tree and rooted tree of $K_6$ that corresponds to the tuple $T = (00100, 1100, 22)$. }
\end{figure}
\noindent Note that in order to have a valid spanning tree, each vertex at level $\ell > 0$ must have a parent and thus $t_\ell \neq 0^{n-q_{\ell-1}}$.

\subsection{Pivot Gray code construction.}\label{sec2:pivotGenS}

Our recursive algorithm leverages Gray codes for $k$-ary strings to enumerate all spanning trees of $K_n$. We first provide an overview of the algorithm, with details and optimizations described in the next subsection. The general approach is to generate all possible tuples $T$ corresponding to the spanning trees of $K_n$, while minimizing differences between consecutive tuples to ensure a pivot Gray code for the spanning trees.

To achieve this, our algorithm recursively generates all possible rooted trees rooted at vertex 1 at level zero, with a fixed subtree $S_{\ell-1}$ spanning levels 0 to $\ell-1$. 
Each generated rooted tree includes $S_{\ell-1}$ as its subtree and extends it by connecting the remaining vertices (those not in $S_{\ell-1}$) to vertices at level $\ell-1$ in all valid configurations, ensuring that consecutive trees differ by a pivot change. 
Initially, $S_0$ only contains vertex 1, and the algorithm generates all rooted trees containing vertices $\{2, 3, \ldots, n\}$ with $S_0$ as the subtree at level 0, maintaining the pivot Gray code property. 
The algorithm recursively applies this process until all rooted trees with a given subtree $S_{\ell-1}$ are exhaustively generated at level $\ell$. 
It then backtracks to level $\ell-1$, generates the next configuration of $S_{\ell-1}$ with the same subtree $S_{\ell-2}$, and recursively generates all rooted trees with the updated $S_{\ell-1}$ at level $\ell$. 
This continues until all configurations at each level are exhausted.

Our algorithm begins with the spanning tree $1-2-\cdots-n$. 
As an example, Figure~\ref{fig:kn_GrayR} illustrates a pivot Gray code listing for $K_4$, starting with the spanning tree $1-2-3-4$. 
At level three of a rooted tree rooted at vertex 1, only one configuration exists for the given subtree $S_2$ (e.g., vertex 4 connected to vertex 3). 
The algorithm backtracks to level two, where two configurations with the same subtree $S_1$ have not yet appeared (vertex 2 connected to both vertices 3 and 4, and vertex 2 connected to vertex 4). 
A pivot change is applied (for example, reassigning the parent of vertex 4 from vertex 3 to vertex 2), and the algorithm attempts to generate all configurations at level three. 
For the configuration with vertex 2 as the parent of vertices 3 and 4, no vertices remain at level three, so no further configurations are generated. Then, for the configuration with vertex 2 as the parent of vertex 4, level three has one configuration (vertex 3 connected to vertex 4). 
After exhausting all configurations with the given $S_1$ and $S_2$ at levels two and three, the algorithm backtracks to level one, generates the next configuration of $S_1$, and recursively generates all rooted trees with the updated $S_1$ at level two and subsequently level three. 
This process continues until all spanning trees of $K_4$ are generated, as shown in Figure~\ref{fig:kn_GrayR}.

\begin{figure}[t]
\centering
\includegraphics[width=1\linewidth, trim={0 0 25 0}, clip]{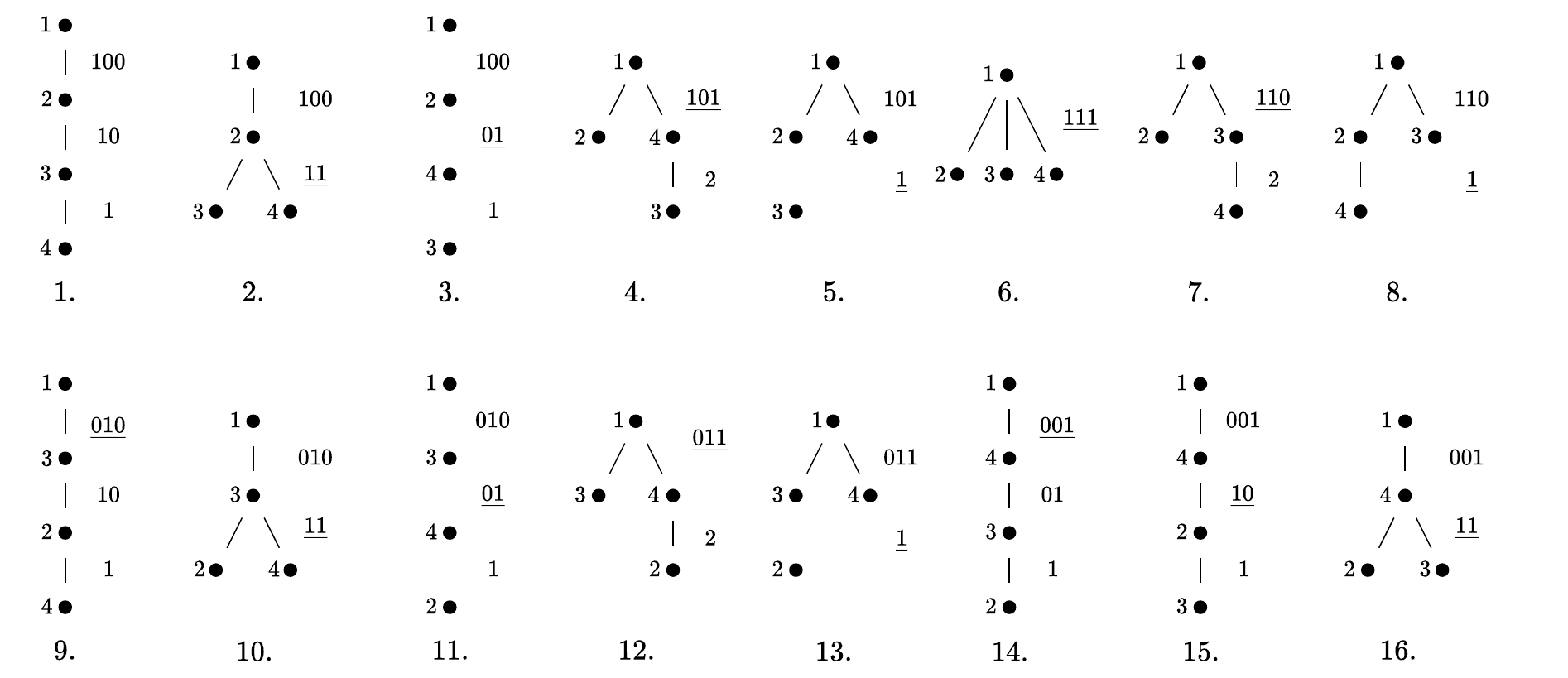}
    \caption{Pivot Gray code for spanning trees of $K_4$.}
\label{fig:kn_GrayR}
\end{figure}

The question then becomes: How can we ensure that repeatedly applying pivot changes generates all possible configurations. We first prove the following lemmas.

\begin{lemma}~\label{lem:exhaust}
Let $T = (t_1, t_2, \ldots, t_h)$ be the tuple corresponding to a spanning tree of $K_n$ rooted at vertex 1, with $p$ vertices at level $\ell-1$ and $q_{\ell-1}$ vertices across levels 0 to $\ell-1$.
To exhaustively generate all spanning trees of $K_n$ rooted at vertex 1 that share the same subtree $S_{\ell-1}$ spanning levels 0 to $\ell-1$ (i.e., those whose corresponding tuples are of the form $T' = (t_1, t_2, \ldots, t_{\ell-1}, t_\ell', t_{\ell+1}', \ldots, t_h')$) exactly once, it is necessary to list every $(p+1)$-ary string of length $n-q_{\ell-1}$ exactly once for $t_\ell'$, excluding the all-zero string $0^{n-q_{\ell-1}}$.
\end{lemma}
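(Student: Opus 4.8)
The plan is to establish a bijection between the set of spanning trees of $K_n$ that extend the fixed subtree $S_{\ell-1}$ and the set of valid choices for the remaining tuple entries $(t_\ell', t_{\ell+1}', \ldots, t_h')$, then show that the first coordinate $t_\ell'$ of this data ranges exactly over the $(p+1)$-ary strings of length $n-q_{\ell-1}$ minus the all-zero string, and that each such string is actually realized by at least one full extension. The encoding defined in the previous subsection already does most of the work: once $S_{\ell-1}$ is fixed, the $n-q_{\ell-1}$ vertices not yet placed are determined as a set, and $t_\ell'$ is by definition a string $a_1 a_2 \cdots a_{n-q_{\ell-1}}$ over the alphabet $\{0,1,\ldots,p\}$, where $a_i$ records which of the $p$ vertices at level $\ell-1$ (if any) is the parent of the $i$-th smallest unplaced vertex. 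So the ``necessity of listing every such string'' claim reduces to two assertions: (i) distinct spanning trees extending $S_{\ell-1}$ can share the same $t_\ell'$ only if they differ in some later coordinate, and (ii) every nonzero $(p+1)$-ary string of length $n-q_{\ell-1}$ occurs as the $\ell$-th coordinate of at least one spanning tree extending $S_{\ell-1}$.

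For the first step I would argue that the map $T' \mapsto (t_\ell', t_{\ell+1}', \ldots, t_h')$ is injective on trees extending $S_{\ell-1}$, which is immediate from the encoding: the tuple $T'$ is literally $(t_1,\ldots,t_{\ell-1},t_\ell',\ldots,t_h')$ with the first $\ell-1$ coordinates pinned down by $S_{\ell-1}$, and the encoding is a bijection between spanning trees rooted at vertex $1$ and valid tuples. Hence no string for $t_\ell'$ may be repeated across trees that are identical in coordinates $\ell+1,\ldots,h$; more to the point, to hit every tree exactly once, each realizable value of $t_\ell'$ must be visited, and the all-zero string is excluded because the remark following the encoding notes that every vertex at level $\ell>0$ must have a parent, forcing $t_\ell' \neq 0^{n-q_{\ell-1}}$ whenever level $\ell$ is nonempty.

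For the second step — surjectivity onto the nonzero strings — I would take an arbitrary nonzero string $s = a_1\cdots a_{n-q_{\ell-1}}$ over $\{0,\ldots,p\}$ and construct a concrete spanning tree realizing it: attach each unplaced vertex $v_i$ with $a_i \neq 0$ as a child of the designated level-$(\ell-1)$ vertex, placing those at level $\ell$; this is nonempty since $s \neq 0^{n-q_{\ell-1}}$. The remaining vertices (those with $a_i = 0$) can then be attached greedily, e.g. chained as descendants of any vertex now at level $\ell$, producing a legitimate rooted spanning tree of $K_n$ whose $\ell$-th coordinate is exactly $s$; the completeness of $K_n$ guarantees all required edges are present. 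This shows the correspondence between extensions of $S_{\ell-1}$ and the coordinate data is onto the full set of nonzero $(p+1)$-ary strings of the right length, so any enumeration visiting every extending spanning tree exactly once must, by projecting to the $\ell$-th coordinate, list each such string (and only such strings) — possibly with multiplicity coming from the later coordinates, but with every string appearing.

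I expect the main obstacle to be phrasing the ``exactly once'' part precisely: a given string $t_\ell'$ will in general correspond to many spanning trees (one for each valid completion of the lower levels), so the statement is not that the trees biject with the strings, but that the strings are exactly the set of possible $\ell$-th coordinates with the all-zero word removed. Making this distinction clean — separating the role of $t_\ell'$ from the recursive generation of $(t_{\ell+1}',\ldots,t_h')$ that will be handled at deeper levels of the recursion — is the subtle point, and I would handle it by explicitly factoring the set of extending trees as a disjoint union indexed by the value of $t_\ell'$ and noting each block is nonempty precisely for the nonzero strings.
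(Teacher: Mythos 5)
Your proposal is correct and rests on the same two facts as the paper's own (much terser) proof: connectivity of the spanning tree forces $t_\ell' \neq 0^{n-q_{\ell-1}}$, and completeness of $K_n$ makes every other $(p+1)$-ary string of length $n-q_{\ell-1}$ realizable as the level-$\ell$ connection pattern. Your extra care in separating injectivity of the encoding, the explicit realizing construction, and the disjoint-union factoring by the value of $t_\ell'$ merely spells out details the paper leaves implicit.
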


\begin{proof}
If $q_{\ell-1}>0$, then since the spanning tree is connected, at least one vertex at level $\ell$ must connect to a vertex at level $\ell-1$ and thus $t_\ell' \neq 0^{n-q_{\ell-1}}$. Moreover, since $K_n$ is a complete graph, all possible connections from the $n-q_{\ell-1}$ vertices not in levels 0 to $\ell-1$ to the $p$ vertices at level $\ell-1$ are valid. 
These possible connections correspond to all possible $(p+1)$-ary strings of length $n-q_{\ell-1}$ for $t_\ell'$ except $0^{n-q_{\ell-1}}$.
\end{proof}


\begin{lemma}~\label{lem:diff}
Let $T = (t_1, t_2, \ldots, t_h)$ be the tuple corresponding to a spanning tree of $K_n$ rooted at vertex 1 with $q_{\ell-1}$ vertices across levels 0 to $\ell-1$.
Suppose $t_\ell'$ is obtained by either changing a symbol of $t_\ell$, or by swapping a 0 and an $r > 0$ of $t_\ell$ when $t_\ell = 0^{n-q_{\ell-1}-b-1}r0^b$ for some $b \ge 0$.
Then there exists a tuple $T' = (t_1, t_2, \ldots, t_{\ell-1}, t_\ell', t_{\ell+1}', \ldots, t_{j}')$ corresponding to a spanning tree of $K_n$ rooted at vertex 1 that is obtainable from $T$ by a single pivot change.
\end{lemma}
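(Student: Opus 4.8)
\emph{Proof plan.}
The idea is to read off, from the local change $t_\ell\to t_\ell'$, which vertex has its parent edge re-routed, and then to realize that re-routing as a single pivot; in every case $T'$ is taken to be the encoding of the spanning tree produced by that pivot, so the work lies in checking that (a) the stated tree operation really is a single pivot, (b) it yields a spanning tree, and (c) re-encoding it gives a tuple of the form $(t_1,\dots,t_{\ell-1},t_\ell',t_{\ell+1}',\dots,t_j')$. Throughout, let $v_1<v_2<\dots<v_m$ with $m=n-q_{\ell-1}$ be the vertices not on levels $0$ through $\ell-1$ of the rooted tree of $T$, let $L_j$ be the set of vertices on level $j$, and recall from the encoding that the $i$-th symbol of $t_\ell$ is nonzero exactly when $v_i\in L_\ell$, its value $c$ then meaning that the parent of $v_i$ is the $c$-th smallest element of $L_{\ell-1}$.

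First I would treat the three single-symbol changes, at a position $i$, from $c$ to $c'\ne c$. If $c,c'>0$, then $v_i$ stays on level $\ell$ and only changes its parent inside $L_{\ell-1}$; the tree operation deletes the edge from $v_i$ to the $c$-th smallest vertex of $L_{\ell-1}$ and inserts the edge to the $c'$-th smallest, a pivot around $v_i$ that changes nothing else, so $T'=(t_1,\dots,t_{\ell-1},t_\ell',t_{\ell+1},\dots,t_h)$. If $c=0$ and $c'>0$, then $v_i$ currently lies on some level $>\ell$; I delete its parent edge and insert the edge from $v_i$ to the $c'$-th smallest vertex $x$ of $L_{\ell-1}$. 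Since $x$ is on level $\ell-1$ it is not in the subtree rooted at $v_i$, so the result is a spanning tree; the two edges share $v_i$; and in the new tree $v_i$ sits on level $\ell$ with parent $x$ while its former subtree is merely shifted upward, so among the first $\ell$ coordinates only the $i$-th symbol of $t_\ell$ changes, becoming $c'$. If $c>0$ and $c'=0$, then since $t_\ell'$ must remain a legal (nonzero) string we have $|L_\ell|\ge 2$; I delete $v_i$'s parent edge and insert the edge from $v_i$ to any $y\in L_\ell\setminus\{v_i\}$, which is not a descendant of $v_i$, a pivot around $v_i$ that drops $v_i$ to level $\ell+1$, after which re-encoding again changes only the $i$-th symbol of $t_\ell$, to $0$.

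Next I would treat the transposition: write $t_\ell=0^{a}r\,0^{b}$ and $t_\ell'=0^{a'}r\,0^{b'}$ with $a\ne a'$ and $a+1+b=a'+1+b'=m$. Then $L_\ell=\{v_{a+1}\}$, so every other $v_j$ is a descendant of $v_{a+1}$; let $x$ be the parent of $v_{a+1}$, i.e.\ the $r$-th smallest vertex of $L_{\ell-1}$. I delete the edge $\{x,v_{a+1}\}$ and insert the edge $\{x,v_{a'+1}\}$. Since $v_{a'+1}$ lies in the subtree rooted at $v_{a+1}$, this reconnects the two pieces into a spanning tree; the two edges share $x$; and in the new tree that subtree is re-rooted so that $v_{a'+1}$ becomes the unique vertex on level $\ell$, with parent $x$. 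Hence the new encoding still agrees with $T$ on levels $0$ through $\ell-1$ and has $\ell$-th string $t_\ell'$, and the tail $t_{\ell+1}',\dots,t_j'$ is whatever this tree produces.

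In each case the spanning tree of $T'$ differs from that of $T$ by exactly one pivot, so all that remains is the bookkeeping in point (c), which I expect to be the main (if shallow) obstacle: one must confirm that each pivot relocates only $v_i$ together with its subtree, or, in the transposition case, re-roots only the subtree below $v_{a+1}$, and hence never alters the level-$\ell$ membership of the other $v_j$ nor any vertex on levels $0$ through $\ell-1$. Because $L_{\ell-1}$ is untouched throughout, the phrase ``$c$-th smallest vertex of $L_{\ell-1}$'' is unambiguous before and after the pivot, which keeps that verification routine. One also sees here why the transposition hypothesis is the natural one: it is precisely when $t_\ell$ has a single nonzero symbol that lowering it to $0$ would leave the forbidden string $0^m$, so moving that symbol to a new position is the only legal alternative, matching the behavior of the underlying Gray code for $k$-ary strings.
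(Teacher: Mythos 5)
Your proposal is correct and follows essentially the same route as the paper: the same four-case analysis (nonzero-to-nonzero, zero-to-nonzero, nonzero-to-zero, and the 0/$r$ transposition when $t_\ell$ has a single nonzero symbol), realized respectively as a pivot around the affected vertex or, in the transposition case, around its level-$(\ell-1)$ parent with the new child taken from the descendants of the unique level-$\ell$ vertex. Your write-up is in fact slightly more explicit than the paper's about the re-encoding bookkeeping, but the underlying argument is the same.
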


\begin{proof}
Let $t_\ell = a_1 a_2 \cdots a_{n-q_{\ell-1}}$, where $a_v$ in $t_\ell$ corresponds to the connection of a vertex $v$  to a vertex at level $\ell-1$, where $v$ is not in $S_{\ell-1}$. We consider the following four cases for a pivot change on an edge involving a vertex at level $\ell-1$, while preserving the subtree $S_{\ell-1}$ spanning levels 0 to $\ell-1$:

\begin{enumerate}[label=(\alph*)]
\item $v$ at level $q > \ell$ disconnects from its parent $w$ and connects to a vertex $u$ at level $\ell-1$: Initially, $a_v = 0$, and it updates to $a_v = r$ for some $r > 0$, requiring a single symbol change in $t_\ell$;
\item $v$ at level $\ell$ disconnects from its parent $u$ at level $\ell-1$ and connects to a vertex $w$ at level $q \geq \ell$: Initially, $a_v = r$ for some $r > 0$, and it updates to $a_v = 0$, requiring a single symbol change in $t_\ell$;
\item $v$ at level $\ell$ disconnects from its parent $u$ at level $\ell-1$ and connects to another vertex $w$ at level $\ell-1$: Initially, $a_v = r$ for some $r > 0$, and it updates to $a_v = j$ for some $j > 0$ where $j \neq r$, requiring a single symbol change in $t_\ell$;
\item A vertex $u$ at level $\ell-1$ disconnects from its child $v$ at level $\ell$ and connects to a vertex $w \neq v$ at level $q > \ell$, with $v$ being the only vertex at level $\ell$: Let $a_w$ be the character in $t_\ell$ that corresponds to the connection of vertex $w$ to a vertex at level $\ell-1$.
Initially, $a_v = r$ for some $r > 0$, and $a_w = 0$ since $w$ is at level $q > \ell$; they update to $a_v = 0$ and $a_w = r$, requiring a swap between a 0 and a non-zero character in $t_\ell$. To maintain the connectivity, $w$ must be a descendant of $v$. 
\end{enumerate}
These four cases are illustrated in Figure~\ref{fig:kn_Gray}.
\begin{figure}[t]
    \centering
    \includegraphics[width=0.95\linewidth, trim={25 0 0 0}, clip]{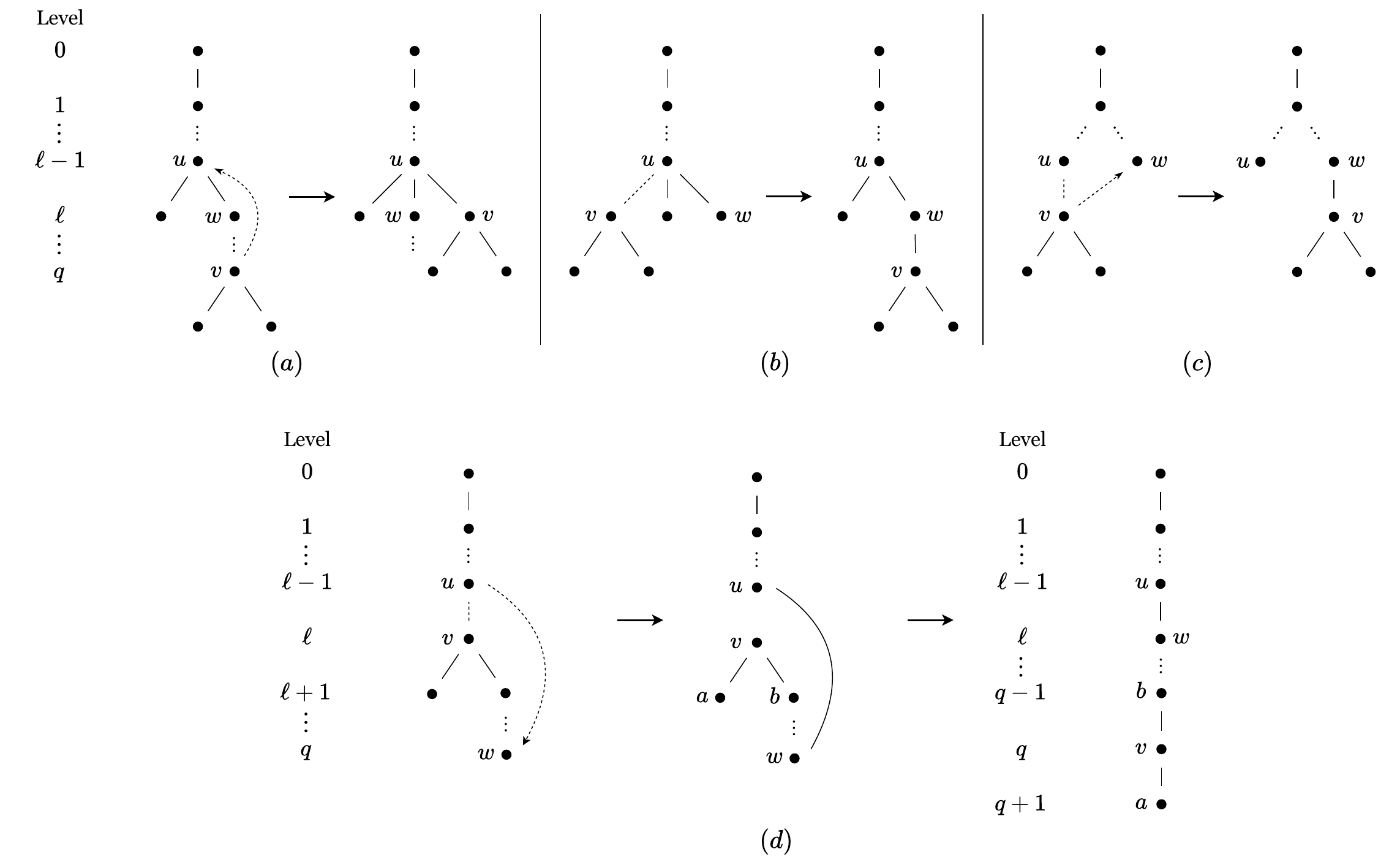}
    \caption{The four cases for a pivot change
on an edge involving a vertex at level $\ell - 1$, while preserving the subtree $S_{\ell-1}$ spanning levels 0 to $\ell - 1$.}
    \label{fig:kn_Gray}
\end{figure}
Each case corresponds to applying a pivot change to an edge involving a vertex at level $\ell-1$. 
The first three cases (cases (a), (b), and (c)) correspond to a single symbol change in $t_\ell$, and the last case (case (d)) corresponds to the swapping of a 0 and an $r > 0$ in $t_\ell$ when $t_\ell = 0^{n-q_{\ell-1}-b-1}r0^b$ for some $b \ge 0$.
\end{proof}

As previously described, our recursive algorithm generates pivot Gray codes for $K_n$ by recursively enumerating all possible rooted trees rooted at vertex 1, with a given subtree $S_{\ell-1}$ fixed for levels 0 to $\ell-1$. 
By Lemma~\ref{lem:exhaust}, this corresponds to generating all possible $(p+1)$-ary strings for $t_\ell$ in the tuple $T = (t_1, t_2, \ldots, t_h)$, excluding $0^{n-q_{\ell-1}}$, while keeping $t_1$ to $t_{\ell-1}$ unchanged. 
Thus, the algorithm can be interpreted as traversing listings of $(p+1)$-ary strings of length $n-q_{\ell-1}$ that satisfy the following conditions:
\begin{enumerate}
\item Consecutive strings differ by a single symbol change, or by swapping a 0 and a non-zero character, which both strings have exactly one non-zero character (Lemma~\ref{lem:diff});
\item 
The listing exhaustively includes all possible $(p+1)$-ary strings of length $n-q_{\ell-1}$. The listing must be able to start from any valid $(p+1)$-ary string. This is necessary because a pivot change at level $\ell$ may require strings at levels greater than $\ell$ to begin with an arbitrary $(p+1)$-ary string (see tree 14 in Figure~\ref{fig:kn_GrayR});
\item The string $0^{n-q_{\ell-1}}$ is excluded.
\end{enumerate}

Such listings can be produced using Gray codes for $k$-ary strings. 
There are many algorithms for generating 1-Gray codes for the set of $k$-ary strings~\cite{mutze2023combinatorial, ruskey1996combinatorial, Sav97, 10.1007/978-3-030-85088-3_15,flipswap,Vajnovszki2007GrayCO,williams2013greedy}. 
With minor modifications, many of these algorithms can be adapted to meet the above conditions. 
For example, consider the following cyclic 1-Gray code listings $L^3_2$, $L^2_2$, $L^1_3$, and $L^1_2$ for the sets of binary strings of length three, binary strings of length two, ternary strings of length one, and binary strings of length one, respectively:
\begin{itemize}
\item $L^3_2$: 100, 101, 111, 110, 010, 011, 001, 000;
\item $L^2_2$: 10, 11, 01, 00;
\item $L^1_3$: 2, 1, 0;
\item $L^1_2$: 1, 0.
\end{itemize}
Let $L'$ denote the sequence of strings of length $j$ obtained by removing the string $0^{j}$ from a listing $L$. 
By removing the string $0^{j}$ from each of the above listings, we obtain the following sequences that satisfy the conditions:
\begin{itemize}
\item ${L^3_2}'$: 100, 101, 111, 110, 010, 011, 001;
\item ${L^2_2}'$: 10, 11, 01;
\item ${L^1_3}'$: 2, 1;
\item ${L^1_2}'$: 1.
\end{itemize}
Note that $L^3_2$, $L^2_2$, $L^1_3$, and $L^1_2$ are cyclic, meaning the first and last strings also differ by a single symbol change or a swap between a 0 and a non-zero character.

We now demonstrate how our algorithm uses the listings ${L^3_2}'$, ${L^2_2}'$, ${L^1_3}'$, and ${L^1_2}'$ to construct a pivot Gray code for the spanning trees of $K_4$. Initially, vertex 1 is at level zero of the rooted tree, and the algorithm recursively enumerates all possible strings for $t_\ell$ in the tuple $T = (t_1, t_2, \ldots, t_h)$. For the connection between levels zero and one, since level zero only contains vertex 1, then $n-q_{0} = n-1 = 3$, and thus $t_1$ is a binary string of length three. 
Thus, our algorithm enumerates each string in ${L^3_2}'$. 
The first string in ${L^3_2}'$ is 100, so $t_1 = 100$, which corresponds to vertex 2 at level one connecting to vertex 1 at level zero.
Next, the algorithm recursively considers level two, where $n-q_{1} = n-2 = 2$ since the remaining vertices are vertices 3 and 4, and level one contains only vertex 2. 
Thus, $t_2$ is a binary string of length two, and the algorithm enumerates each string in ${L^2_2}'$. 
The first string in ${L^2_2}'$ is 10, so $t_2 = 10$, which corresponds to vertex 3 at level two connecting to vertex 2 at level one. 
The algorithm then recursively considers level three, with the only remaining vertex 4, and level two containing only vertex 3. 
Thus, $t_3$ is a binary string of length one, and the algorithm enumerates each string in ${L^1_2}'$. 
The only string in ${L^1_2}'$ is 1, so $t_3 = 1$ which corresponds to vertex 4 at level three connecting to vertex 3 at level two.
Since ${L^1_2}'$ contains only one string, all strings in ${L^1_2}'$ are exhausted. 
The algorithm then backtracks to level two and generates the next string in ${L^2_2}'$ after 10, which is 11. 
This implies that vertex 2 at level one is connected to both vertices 3 and 4 at level two. 
This process continues and the resulting listing of spanning trees for $K_4$ is shown in Figure~\ref{fig:kn_GrayR}.
Note that in some cases, the algorithm may not start with the first string in these sequences. 
For example, in the transition from tree 13 to tree 14 in Figure~\ref{fig:kn_GrayR}, the pivot change disconnects the parent of vertex 3 and connects vertex 3 to vertex 4. 
At level two of tree 14, the connection between vertices 4 and 3 corresponds to $t_2 = 01$, which is not the first string in ${L^2_2}'$. 
In such cases, the algorithm starts with 01 in ${L^2_2}'$ and continues generating strings until it reaches the end of the sequence, then cycles to the front of the sequence to exhaust all strings in ${L^2_2}'$. 
Since ${L^2_2}'$ is cyclic, consecutive strings differ by a single symbol change or by swapping a 0 and a non-zero character, ensuring the resulting listing remains a Gray code.

\begin{algorithm}[t]
    \caption{Generating pivot Gray codes for spanning trees of complete graphs}
    \label{alg:spanning_complete}
    \begin{algorithmic}[1]
        \Procedure{\sc GenS}{$V, t_{\ell-1} = a_1 a_2 \cdots a_{|V|}$}
        \State{$P \gets v_i \in V \mathrm{~where~} a_i > 0$}
        \State{$C \gets v_i \in V \mathrm{~where~} a_i = 0$}
        \If{$\vert C \vert = 0$}
            \State{Print spanning tree}
            \State{\textbf{return}}
        \EndIf
        \State{$t_{\ell} = b_1 b_2 \cdots b_{|C|} \gets 0^{\vert C\vert}$}
        \For{$c_i \in C$}
        \If{$\pi(c_i) \in P$}{$~b_i \gets j $~where~$\pi(c_i) = p_j \in P$}
        \EndIf
        \EndFor
        \State{{\sc GenS}($C, t_{\ell}$)}
        
        \While{{\sc NextK}$(t_{\ell}, \vert P\vert +1) \ne \emptyset$}
            \State{$t_{\ell}, \delta \gets ${\sc NextK}$(t_{\ell}, \vert P \vert+1)$}
            \If{$\vert \delta \vert > 1$}
                \State{$\pi(c_{\delta_1}) \gets \pi(c_{\delta_0})${\color{blue}$\quad \quad \quad  \quad  \quad  \quad  \quad \quad \hspace{0.5em} \triangleright  \ $case (d)}}
                \State{{\sc Lift}$(c_{\delta_1})$}
            \ElsIf{$b_{\delta_0} > 0$}
                \State{$\pi(c_{\delta_0}) \gets p_{b_{\delta_0}} \in P$ {\color{blue}$\quad \quad  \quad  \quad  \quad \quad  \quad  \hspace{0.25em}\triangleright \  $case (a) and (c)}}
            \Else
                \State{$\pi(c_{\delta_0}) \gets c_j \in C$ where $b_j > 0$ {\color{blue}$\quad  \quad \hspace{0.1em}\triangleright \  $case (b)}}
            \EndIf
            \State{{\sc GenS}($C, t_{\ell}$)}
        \EndWhile
        \EndProcedure
    \end{algorithmic}
\end{algorithm}

Our algorithm {\sc GenS} generates a pivot Gray code for spanning trees of $K_n$ takes two parameters, where $V$ is the set of vertices not yet included in the rooted tree up to level $\ell-2$, and $t_{\ell-1} = a_1 a_2 \cdots a_{|V|}$ is a $k$-ary string encoding parent assignments for vertices at level $\ell-1$. 
A parent pointer is maintained for each vertex $v$, where $\pi(v)$ denotes its parent in the rooted tree. 
The algorithm initializes with the spanning tree $1-2-\cdots-n$, setting $\pi(v) = v-1$ for each vertex $v \geq 2$ and $\pi(1) = 0$ (indicating the root).
The algorithm maintains two sets: $P = \{v_i \in V \mid a_i > 0\}$, containing vertices at level $\ell-1$, and $C = \{v_i \in V \mid a_i = 0\}$, containing vertices not in levels 0 to $\ell-1$. 
If $|C| = 0$, then the rooted tree includes all $n$ vertices of $K_n$, so the algorithm prints the spanning tree and returns. 
Otherwise, our algorithm updates $t_\ell = b_1 b_2 \cdots b_{|C|}$ by checking the current connections between vertices in $P$ and vertices in $C$.
It then calls {\sc GenS}$(C, t_\ell)$ to recursively construct rooted trees with the same subtree $S_{\ell-1}$.
Next, the algorithm iteratively updates $t_\ell$ using {\sc NextK}$(t_\ell, |P|+1)$, which generates the next $k$-ary string in Gray code order with base $k = |P|+1$. 
The function {\sc NextK} returns the updated $t_\ell$ and the indices $\delta$ which stores the positions changed from the previous string.
Efficient generation of $k$-ary Gray codes is discussed in Section~\ref{sec:k-ary}. 
If $|\delta| > 1$, then this corresponds to case (d) of Lemma~\ref{lem:diff}, the algorithm sets
the parent of $c_{\delta_1} \in C$ to the parent of $c_{\delta_0} \in C$. 
We then ``lift'' the vertex $c_{\delta_1} \in C$ to level $\ell$ and updating the parent pointers of ancestors of $c_{\delta_1}$(line 15, Algorithm~\ref{alg:spanning_complete} by calling {\sc Lift}$(c_{\delta_1})$). 
If $|\delta| = 1$, there are two cases:
\begin{itemize}
\item If $b_{\delta_0}$ is updated to a value $r > 0$, then this case corresponds to cases (a) or (c) of Lemma~\ref{lem:diff}, the algorithm sets the parent of $c_{\delta_0}$ to $p_r \in P$;
\item If $b_{\delta_0}$ is updated to 0, then this case corresponds to case (b) of Lemma~\ref{lem:diff}, the algorithm sets the parent of $c_{\delta_0}$ to a vertex $c_j \in C$ where $b_j > 0$ (a sibling at level $\ell$ with a parent).
\end{itemize}
The algorithm then calls {\sc GenS}$(C, t_\ell)$ to recursively construct rooted trees with the updated subtree $S_\ell$. 
Pseudocode of the algorithm is shown in Algorithm~\ref{alg:spanning_complete}. 
To run the algorithm, we make the initial call {\sc GenS}$(\{1, 2, 3, \ldots, n\}, (10^{n-1}))$ which sets $V = \{1, 2, 3, \ldots, n\}$, and $t_0 = 10^{n-1}$. 
A complete Python implementation of the algorithm, incorporating optimizations discussed in later subsections, is provided in Appendix \ref{sec2:gens_opt_code}.

\begin{theorem}~\label{thm:GrayKN}
The algorithm {\sc GenS} generates a pivot Gray code for the spanning trees of $K_n$.
\end{theorem}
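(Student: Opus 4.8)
The plan is to establish two things by induction on the recursion depth: (i) \emph{correctness} --- every tuple printed corresponds to a genuine spanning tree of $K_n$, every spanning tree is printed, and none is printed twice; and (ii) the \emph{pivot Gray code property} --- consecutive printed trees differ by a single pivot move. The backbone of the argument is the recursive structure of {\sc GenS}: a call {\sc GenS}$(V, t_{\ell-1})$ has already fixed the subtree $S_{\ell-1}$ on levels $0$ through $\ell-1$, and its job is to enumerate, in a suitable order, all completions of $S_{\ell-1}$ into a spanning tree. I would phrase the inductive hypothesis so that it records not just ``all completions appear exactly once'' but also the stronger statement needed to glue calls together: the listing of completions produced by {\sc GenS}$(V, t_{\ell-1})$ starts with whatever completion is currently encoded by the parent pointers and, because the underlying $k$-ary listing $L$ is \emph{cyclic}, can begin from an arbitrary valid $t_\ell$ and still exhaust all of them.

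First I would verify the base case: when $|C| = 0$ the subtree already spans all $n$ vertices, the routine prints and returns, and there is exactly one completion. For the inductive step, fix the call {\sc GenS}$(V, t_{\ell-1})$ with $|P| = p$ vertices at level $\ell-1$ and $|C| = c = n - q_{\ell-1}$ remaining vertices. By Lemma~\ref{lem:exhaust}, the completions of $S_{\ell-1}$ are in bijection with the $(p+1)$-ary strings of length $c$ other than $0^c$, via $t_\ell$; the routine walks through exactly these strings by starting at the $t_\ell$ read off from the current pointers and iterating {\sc NextK}$(\cdot, p+1)$ until the cyclic listing ${L'}$ is exhausted. Here I must argue that {\sc NextK} realises a cyclic $1$-Gray-code-style listing of $(p+1)$-ary strings of length $c$ with the all-zero string deleted, whose only ``jumps'' are either a single-symbol change or the $0 \leftrightarrow r$ swap allowed in Lemma~\ref{lem:diff} (this is exactly condition (1)--(3) spelled out before the algorithm, and its efficient realisation is deferred to Section~\ref{sec:k-ary}); for the purposes of Theorem~\ref{thm:GrayKN} I would simply cite that {\sc NextK} has this property. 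Each step of {\sc NextK} changes $t_\ell$ in one of the four ways of Lemma~\ref{lem:diff}, and the case analysis in lines 14--21 of Algorithm~\ref{alg:spanning_complete} performs precisely the corresponding pivot on the parent pointers (case (d): reassign $\pi(c_{\delta_1})$ and {\sc Lift}; cases (a)/(c): point $c_{\delta_0}$ at $p_{b_{\delta_0}}$; case (b): point $c_{\delta_0}$ at a sibling with a parent), so by Lemma~\ref{lem:diff} consecutive trees within this call differ by a single pivot and each intermediate object is again a valid spanning tree with the subtree $S_{\ell-1}$ intact.

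Next I would handle the recursive descent and the glue. For each fixed $t_\ell$ in the listing, {\sc GenS}$(C, t_\ell)$ is invoked; it fixes the subtree $S_\ell$ and, by the inductive hypothesis, lists exactly once every completion of $S_\ell$ into a spanning tree, with consecutive trees differing by a pivot, and \emph{starting from the completion encoded by the current pointers}. Two gluing points need care. (a) Between the sub-call for $t_\ell$ and the sub-call for the next string $t_\ell^{\mathrm{new}}$: the last tree of the first sub-call and the first tree of the second differ only in that the level-$\ell$--to--level-$(\ell-1)$ connections were changed from $t_\ell$ to $t_\ell^{\mathrm{new}}$ by a single {\sc NextK} step, i.e. by one pivot (Lemma~\ref{lem:diff}); I must check that the level-$(\ell+1)$-and-below part of the tree is identical across this transition, which holds because lines 14--21 only touch the pointers of vertices in $C$ relative to $P$ (and, in case (d), {\sc Lift} only relabels ancestors without changing the unrooted edge set below). (b) The very first tree of the first sub-call equals the starting tree passed into {\sc GenS}$(V,t_{\ell-1})$, so the whole listing produced by {\sc GenS}$(V,t_{\ell-1})$ begins at the current pointer configuration, which is what the inductive hypothesis at the next level up needs. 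Uniqueness follows because distinct pairs $(t_\ell, \text{completion of }S_\ell)$ give distinct tuples, and by induction the inner completions are distinct; surjectivity follows from Lemma~\ref{lem:exhaust} plus the inductive surjectivity. Finally, the top-level call {\sc GenS}$(\{1,\dots,n\}, 10^{n-1})$ starts from the path $1\!-\!2\!-\!\cdots\!-\!n$ and, by the $\ell=1$ instance of the above, lists every spanning tree of $K_n$ exactly once with consecutive trees a single pivot apart, proving the theorem.

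The main obstacle I anticipate is the bookkeeping in gluing point (a) together with case (d): one has to be scrupulous that {\sc Lift} --- which reassigns parent pointers of the ancestors of $c_{\delta_1}$ when it is promoted from some deep level to level $\ell$ --- does not silently alter which \emph{unrooted} tree is being represented, and that after the lift the recursive call {\sc GenS}$(C, t_\ell)$ is handed a pointer configuration that is a legitimate completion of the new $S_\ell$ in the precise sense the inductive hypothesis requires (in particular that ``start from an arbitrary valid $t$'' clause is genuinely used here, cf. tree~14 in Figure~\ref{fig:kn_GrayR}). Everything else is a routine, if somewhat lengthy, translation between the four pivot cases of Lemma~\ref{lem:diff} and the four branches of the algorithm.
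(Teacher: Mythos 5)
Your proposal is correct and rests on exactly the same two pillars as the paper --- Lemma~\ref{lem:exhaust} (completions of a fixed $S_{\ell-1}$ are in bijection with the nonzero $(p+1)$-ary strings for $t_\ell$) and Lemma~\ref{lem:diff} (each {\sc NextK} step translates into one of the four pivot cases) --- together with the cyclic, start-anywhere property of the underlying $k$-ary listing. The difference is organizational: the paper dispatches the pivot-adjacency claim in a single sentence citing Lemma~\ref{lem:diff}, and proves coverage by contradiction, peeling off levels of a hypothetically missing tuple $(t_1,\ldots,t_h)$ until it contradicts the fact that all of $t_1$'s values are generated at level zero; you instead run a structural induction on the recursion with a strengthened hypothesis (the sub-listing starts at the pointer configuration handed to it and may begin at an arbitrary valid $t_\ell$), and you glue consecutive sub-calls explicitly. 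Your route makes explicit two points the paper leaves implicit: that the transition between the last tree of one sub-call and the first tree of the next is precisely the single level-$\ell$ pivot (because lines 14--21, including {\sc Lift} in case (d), change the unrooted edge set by exactly one exchange and leave the deeper structure otherwise intact), and that the listing is repetition-free (the paper's contradiction argument only addresses ``appears at least once''; your observation that distinct pairs $(t_\ell,\text{completion})$ yield distinct tuples supplies the ``at most once'' half). So the two proofs are essentially contrapositive formulations of the same argument, with yours somewhat more detailed at the recursion boundaries.
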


\begin{proof}
By Lemma~\ref{lem:diff}, the algorithm {\sc GenS} ensures that consecutive spanning trees differ by a single pivot change. To establish the pivot Gray code property, it suffices to show that each spanning tree of $K_n$ appears exactly once in the listing. We prove this by contradiction.

Suppose there exists a spanning tree of $K_n$ that does not appear in the listing generated by {\sc GenS}. Such a spanning tree can be encoded as a tuple $T = (t_1, t_2, \ldots, t_h)$. By Lemma~\ref{lem:exhaust}, the absence of $T$ in the listing implies that no spanning tree in the listing has a subtree $S_{h-1}$, spanning levels 0 to $h-1$, that matches the first $h-1$ components $(t_1, t_2, \ldots, t_{h-1})$ of $T$. Otherwise, {\sc GenS} would exhaustively generate a Gray code for $t_h$ (via {\sc GenK}) to include $T$. Applying the same argument recursively, no spanning tree in the listing has a subtree $S_{h-2}$, spanning levels 0 to $h-2$, matching $(t_1, t_2, \ldots, t_{h-2})$. Continuing applying the same reasoning implies that no spanning tree has a subtree $S_0$ at level zero matching $t_1$. However, at level zero, {\sc GenS} generates a Gray code of length $n-1$ (via {\sc GenK}) that includes all binary strings except $0^{n-1}$, covering all possible $t_1$ for spanning trees of $K_n$, a contradiction.
\end{proof}

Storing the entire Gray code listing for $k$-ary strings, however, requires exponential space. 
In the next subsection, we present a simple algorithm that generates the next $k$-ary string in the Gray code listing using only a linear amount of space.

\subsection{Efficient generation of the underlying Gray codes for \texorpdfstring{$k$}{k}-ary strings and the pivot Gray codes for spanning trees of \texorpdfstring{$K_n$}{Kn}.}\label{sec:k-ary}

To generate a Gray code for $k$-ary strings, excluding the string $0^{n-q_{\ell-1}}$, that satisfies the three conditions discussed in Section \ref{sec2:pivotGenS}, we adopt the reflectable language framework described in \cite{li2009gray}. 
In~\cite{li2009gray}, Li and Sawada developed a framework for generating Gray codes for reflectable languages. 
A language $L$ over an alphabet $\Sigma$ is reflectable if, for every $i > 1$, there exist two symbols $x_i, y_i \in \Sigma$ such that if $w_1 w_2 \cdots w_{i-1}$ is a prefix of a word in $L$, then both $w_1 w_2 \cdots w_{i-1} x_i$ and $w_1 w_2 \cdots w_{i-1} y_i$ are also prefixes of words in $L$. 
By reflecting the order of the children and using the special symbols $x_i$ and $y_i$ as the first and last children of each node at level $i-1$, Li and Sawada devised a generic recursive algorithm to generate Gray codes for reflectable languages, including combinatorial objects such as $k$-ary strings, restricted growth functions, and $k$-ary trees. 
For example, their algorithm generates the binary reflected Gray code discussed in Section~\ref{sec:intro} when setting $x_i = 0$ and $y_i = 1$ for every level $i > 1$.

\begin{lemma}
The set of $k$-ary strings of length $j$, excluding the string $0^j$, is a reflectable language when $k > 2$.
\end{lemma}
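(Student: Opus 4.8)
The plan is to verify the definition of a reflectable language directly by exhibiting, for each position $i$ with $2 \le i \le j$, the witnessing pair $x_i = 1$ and $y_i = 2$. These are legitimate members of the alphabet $\Sigma = \{0, 1, \ldots, k-1\}$ precisely because $k > 2$, so $\Sigma$ contains at least two distinct nonzero symbols. Having fixed this (constant) choice, the whole argument reduces to checking the prefix-closure property in the definition, i.e.\ that for every prefix $w_1 w_2 \cdots w_{i-1}$ of a word of $L$, both $w_1 \cdots w_{i-1} x_i$ and $w_1 \cdots w_{i-1} y_i$ are again prefixes of words of $L$.

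The core of the verification is a padding observation: for any string $u = w_1 w_2 \cdots w_{i-1}$ of length $i-1 < j$ over $\Sigma$, the string $u\,1$ has length $i \le j$ and contains the nonzero symbol $1$, so appending $0^{\,j-i}$ produces a $k$-ary string of length $j$ that is distinct from $0^j$ and hence lies in $L$; therefore $u\,1$ is a prefix of a word of $L$. The same argument with $2$ in place of $1$ shows $u\,2$ is such a prefix as well. In particular this applies whenever $u$ is itself a prefix of a word of $L$, which is all the definition requires, so the reflectability condition holds at every level $i$ from $2$ to $j$.

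I do not expect a genuine obstacle; the only point that needs care — and the reason the hypothesis $k > 2$ is essential — is the boundary level $i = j$, where the prefix $w_1 \cdots w_{j-1}$ may equal $0^{j-1}$. There, both $w_1 \cdots w_{j-1} x_j$ and $w_1 \cdots w_{j-1} y_j$ must avoid $0^j$, which forces $x_j$ and $y_j$ to be two \emph{distinct} nonzero symbols; such a pair exists exactly when $|\Sigma| \ge 3$. When $k = 2$ no valid choice is possible, which is consistent with the paper treating the binary listings ${L^3_2}'$, ${L^2_2}'$, and ${L^1_2}'$ by explicit construction rather than through this lemma.
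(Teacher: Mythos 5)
Your proof is correct and follows essentially the same route as the paper: exhibit two distinct nonzero symbols as the witnesses $x_i, y_i$ (the paper takes $x_i = k-1$, $y_i = k-2$, you take $1$ and $2$, which is immaterial for the lemma itself), with the nonzero choice being exactly what the exclusion of $0^j$ and the hypothesis $k>2$ require. Your write-up just spells out the padding/prefix verification that the paper leaves implicit.
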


\begin{proof}
By setting $x_i = k-1$ and $y_i = k-2$ for each level $i$, $k$-ary strings of length $j$ are reflectable.
\end{proof}

Since the set of $k$-ary strings of length $j$, excluding the string $0^j$, is a reflectable language, we can leverage the reflectable language framework described in~\cite{li2009gray} to generate a 1-Gray code for this set in constant amortized time per string using only a linear amount of space. 
The framework is also quite versatile, allowing the Gray code to start with any arbitrary string. 
This is particularly useful in the construction of our pivot Gray code for spanning trees of $K_n$, where by the second condition in Section~\ref{sec2:pivotGenS}, a pivot change at level $\ell$ may require strings at levels greater than $\ell$ to begin with an arbitrary $k$-ary string of length $j$.

\begin{figure}[t]
\centering
\includegraphics[width=1\linewidth,trim={10 0 30 0}, clip]{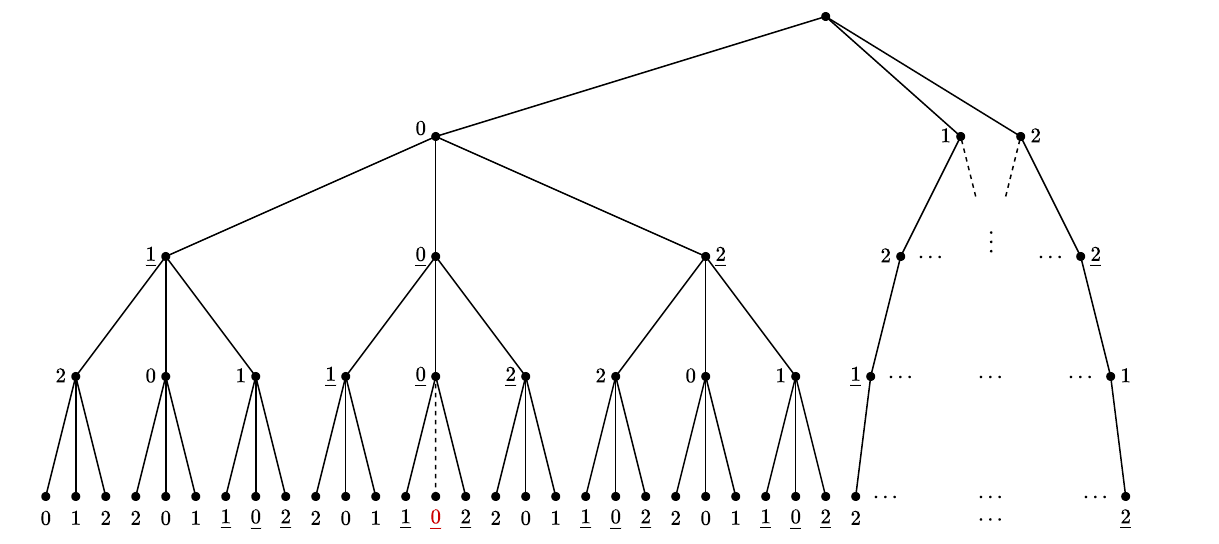}
    \caption{Reflectable Gray code of ternary strings of length four generated by {\sc GenK}, starting with 0120. }
\label{fig:reflectable}
\end{figure}

\begin{algorithm}[t]
    \caption{Generating $k$-ary strings starting with $\alpha$ in reflectable Gray code order.}
    \label{alg:k-ary}
    \begin{algorithmic}[1]
        \Procedure{\sc GenK}{$i, k, \alpha = a_1 a_2 \cdots a_j$}
        \For{$r = 1 $~to~$k$}
        \If{$i = \vert \alpha \vert$}
        \If{$s \ne 0$}
        \State {Output $\alpha, \delta$}
        \State {$\delta \gets \emptyset$}
        \EndIf
        
        \Else
        \State{{\sc GenK}$(i+1, k, \alpha)$}
        \If{$a_{i+1} = k-1$}{~$d_{i+1} \gets 1$}
        \Else{~$d_{i+1} \gets -d_{i+1}$}
        \EndIf
        \EndIf

        \If{$r < k$}
        \State{$s \gets s + \big((a_i + d_i + k) \mod k\big) - a_i$}
        \State{$a_i \gets (a_i + d_i + k) \mod k$}
        \State{$\delta \gets \delta, i$}
        \EndIf
        \EndFor
        \EndProcedure
    \end{algorithmic}
\end{algorithm}

Our algorithm for generating a 1-Gray code for $k$-ary strings of length $j$, excluding $0^j$, maintains a global direction array $d$, where $d_i$ indicates the ordering direction for the $i$-th character. 
If $d_i = 1$, the children of the $i$-th character start with $a_i$ and increment by 1 (modulo $k$) until reaching $a_i-1$. 
If $d_i = -1$, the children start with $a_i-1$ and decrement by 1 (modulo $k$) until reaching $a_i$. 
The algorithm avoids generating the string $0^j$. Two special cases ensure the listing remains a 1-Gray code when $k > 2$ by preventing $x_i = 0$ or $y_i = 0$, which would disrupt the reflectable property if $0^j$ were removed:
\begin{itemize}
\item If the $i$-th character of the initial string $\alpha = a_1 a_2 \cdots a_{j}$ has $a_i = 1$, we set $d_i = -1$, so its children start with $y_i = 1$ and end with $x_i = 2$;
\item If the $i$-th character of the initial string $\alpha = a_1 a_2 \cdots a_{j}$ has $a_i = 0$, we set $d_i = 1$ for the first $2k$ characters, updating $a_i$ from 0 to $k-1$, and then from $k-1$ to $k-2$ for subsequent strings. 
We then set $x_i = k-1$ and $y_i = k-2$, and apply the usual reflectable strategy for the remaining updates of $a_i$.
\end{itemize}

As an example, our algorithm generates a 1-Gray code for ternary strings of length four excluding the string 0000, starting with the initial string 0120 as shown below: 
\begin{center}
0120, 0121, 0122, 0102, 0100, 0101, 0111, 0110, 0112, 0012,  0010, 0011, 0001, 0002, 0022, 0020,\\ 0021, 0221, 0220, 0222,  0202, 0200, 0201, 0211, 0210, 0212, 1212, 1210, 1211, 1201,  1200, 1202,\\ 1222, 1220, 1221, 1021, 1020, 1022, 1002, 1000,  1001, 1011, 1010, 1012, 1112,  1110, 1111, 1101,\\ 1100, 1102,  1122, 1120, 1121, 2121, 2120, 2122, 2102, 2100, 2101, 2111,  2110, 2112, 2012, 2010,\\ 2011, 2001, 2000, 2002, 2022, 2020,  2021, 2221, 2220, 2222, 2202,   2200, 2201, 2211, 2210, 2212. 
\end{center}
The recursive construction tree is shown in Figure~\ref{fig:reflectable}. 
Pseudocode of the algorithm is shown in Algorithm~\ref{alg:k-ary}. 
To run the algorithm, we make the initial call {\sc GenK}$(i, k, \alpha)$, where $i$ is the current level of the recursive computation tree (initialized to 1), $k$ is the base of the $k$-ary strings, and $\alpha$ is the initial string of the listing. 
A complete Python implementation of an advanced of {\sc GenK}, namely {\sc GenMR}, which generates Gray codes for mixed-radix strings, is provided in Appendix \ref{sec:genk_code}. 
This implementation can generate Gray codes for $k$-ary strings by setting $k_i = k$ for all positions.

\begin{lemma} \label{lem:genk_gc}
Consecutive strings generated by {\sc GenK} differ by a single symbol change, or by swapping a 0 and a 1 which the strings have only one single non-zero character. 
\end{lemma}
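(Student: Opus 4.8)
The plan is to read \textsc{GenK} as an instance of the reflectable-language Gray code framework of Li and Sawada~\cite{li2009gray}, and then to split on the base $k$. For $k>2$ the relevant language --- the set of $k$-ary strings of length $j$ with the word $0^j$ deleted --- is itself reflectable (by the preceding lemma, with $x_i=k-1$ and $y_i=k-2$), so the framework delivers single-symbol changes directly, and the only real work is to confirm that the direction-array bookkeeping of Algorithm~\ref{alg:k-ary} faithfully implements it when the listing is started from an arbitrary word $\alpha\neq 0^j$. For $k=2$ this language is \emph{not} reflectable; there \textsc{GenK} instead runs a $1$-Gray code over all $2^j$ binary strings while suppressing the output of $0^j$, and I would show that this single suppression produces exactly one exceptional transition, of the kind allowed by the statement. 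The single-symbol-change property itself is the induction on the string length $j$ carried out in~\cite{li2009gray}; I would quote it rather than reproduce it.

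For $k>2$, since the language of $k$-ary strings of length $j$ with $0^j$ removed is reflectable with $x_i=k-1\neq 0$ and $y_i=k-2\neq 0$, and \textsc{GenK} is the corresponding recursive listing started at $\alpha$, the correctness of the framework gives that every word of this language is listed exactly once, consecutive words differ in a single symbol, and $0^j$ is never generated (at the innermost position with an all-zero prefix the admissible values are only $1,\dots,k-1$, visited in an order beginning at $k-1$ and ending at $k-2$, so $0$ is simply skipped there). What must be checked is that the direction array $d$, together with the two special rules triggered when $\alpha$ has a symbol equal to $0$ or to $1$, really does realize the choice $x_i=k-1$, $y_i=k-2$: that is, that at each position $i$ the admissible values are visited in an order whose first and last entries are $k-1$ and $k-2$ and which is reversed between consecutive values of position $i-1$. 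Granting this invariant, consecutive words inside a block and at block boundaries differ in one symbol, and the claim follows by induction on $j$.

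For $k=2$ (and $j\ge 2$, the case $j=1$ being vacuous since only the word $1$ remains), \textsc{GenK} runs a $1$-Gray code over all $2^j$ binary strings of length $j$ --- the reflectable framework applied to the full binary language --- and suppresses the output of $0^j$. In any $1$-Gray code the two cyclic neighbours of $0^j$ are binary strings at Hamming distance $1$ from $0^j$, hence are two weight-one strings; they are distinct (each string occurs once, and $2^j>2$), and two distinct weight-one strings are at Hamming distance $2$ from each other. Therefore suppressing $0^j$ turns those two neighbours into a consecutive pair that differs by swapping a $0$ and a $1$, with each of the two strings having exactly one non-zero character --- precisely the exception allowed in the statement --- while every other consecutive pair is a single-bit change inherited from the Gray code over all binary strings. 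Combining the two cases proves the lemma.

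The main obstacle is the verification demanded in the $k>2$ case: checking that the direction-array bookkeeping of Algorithm~\ref{alg:k-ary}, and in particular the rule $a_i=1\Rightarrow d_i=-1$ and the rule forcing $d_i=1$ for the first $2k$ updates of a position initially equal to $0$, faithfully implements a reflectable listing with $x_i=k-1$, $y_i=k-2$ when the listing is started from an arbitrary word $\alpha\neq 0^j$, so that every change of position $i$ occurs between two words that agree on positions $i+1,\dots,j$ and $0^j$ is the unique word never produced. Once that invariant is in hand, the single-symbol changes inside each block and at block boundaries, and the identification of the single exceptional transition when $k=2$, are routine, and the remainder of the argument can be imported from~\cite{li2009gray}.
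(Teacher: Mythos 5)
Your proposal is correct and follows essentially the same route as the paper: for $k>2$ invoke the reflectable-language framework with the non-zero choices $x_i=k-1$, $y_i=k-2$, and for $k=2$ run the Gray code over all binary strings and observe that deleting $0^j$ leaves either an unaffected listing or a consecutive pair of weight-one strings differing by a $0$--$1$ swap. The only cosmetic difference is that the paper explicitly separates the subcase where $0^j$ falls at the end of the listing rather than appealing to cyclic neighbours, and, like you, it delegates the direction-array bookkeeping to the cited framework rather than verifying it in detail.
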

\begin{proof}
If $k > 2$, the set of $k$-ary strings of length $j$, excluding the string $0^j$, forms a reflectable language. 
The algorithm {\sc GenK} selects $x_i$ and $y_i$ as non-zero characters, ensuring the listing is a 1-Gray code by the results in~\cite{li2009gray}. 
Thus, consecutive strings differ by a single symbol change. 
Otherwise, if $k = 2$, observe that the set of binary strings of length $j$, including the string $0^j$, is a reflectable language, and {\sc GenK} generates a 1-Gray code for this set. 
When the string $0^j$ is removed, there are two cases: if $0^j$ is at the end of the listing, its removal does not affect the Gray code property; otherwise, if $0^j$ appears between two strings, say $0^x 1 0^{j-x-1}$ and $0^y 1 0^{j-y-1}$ where $x \neq y$, these strings differ by a swap between a 0 and a 1.
\end{proof}

The algorithm {\sc GenK} efficiently generates pivot Gray codes for spanning trees of $K_n$. 
For example, the algorithm {\sc GenK} produces the following 1-Gray code listings for $k$-ary strings of length $j$, starting with an initial string $\alpha$:
\begin{itemize}
\item $j = 3, k = 2, \alpha = 100$: 100, 101, 111, 110, 010, 011, 001;
\item $j = 2, k = 2, \alpha = 10$: 10, 11, 01;
\item $j = 2, k = 2, \alpha = 11$: 11, 01, 10;
\item $j = 1, k = 3, \alpha = 2$: 2, 1;
\item $j = 1, k = 3, \alpha = 1$: 1, 2.
\end{itemize}
The resulting listing of spanning trees for $K_4$, generated using these underlying Gray codes for $k$-ary strings of length $j$ by {\sc GenS} (Algorithm~\ref{alg:spanning_complete}), is shown in Figure~\ref{fig:kn_GrayR}.

We now prove the complexity of our algorithm {\sc GenS} (Algorithm~\ref{alg:spanning_complete}) for generating pivot Gray codes for spanning trees of $K_n$.
As standard for generation algorithms, the time required to output a spanning tree is not part of the analysis.
\begin{theorem} \label{thm:gens_complexity}
Spanning trees of complete graphs $K_n$ can be generated in pivot Gray code order in constant amortized time per tree, using $O(n^2)$ space.
\end{theorem}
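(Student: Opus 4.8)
The plan is as follows. The pivot Gray code property is already established by Theorem~\ref{thm:GrayKN}, and Cayley's formula gives exactly $n^{n-2}$ spanning trees, so the work is entirely in the running-time and space accounting. I would analyse the computation tree $\mathcal{T}$ of {\sc GenS}: its leaves are precisely the $n^{n-2}$ printed spanning trees, and its internal nodes are the recursive calls with $|C|\ge 1$. The two things to prove are that $|\mathcal{T}| = O(n^{n-2})$ and that the work done at each node, excluding recursive calls, is proportional to its number of children; together these give total time $O(|\mathcal{T}|) = O(n^{n-2})$.

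First I would pin down the branching structure. A call {\sc GenS}$(V,t_{\ell-1})$ with $p = |P|$ vertices at level $\ell-1$ and $m = |C|$ remaining vertices has exactly $(p+1)^{m} - 1$ children, one per nonzero $(p+1)$-ary string of length $m$ (Lemma~\ref{lem:exhaust}). Since $p \ge 1$, this is at least $2$ unless $m = p = 1$, i.e. $|V| = 2$; such a node has a single child, which is necessarily a leaf, so the number of single-child internal nodes is at most the number of leaves, and a standard rooted-tree count then yields $|\mathcal{T}| = O(n^{n-2})$. The key quantitative ingredient is the \emph{local balance} inequality $(p+1)^{m} - 1 \ge \tfrac{1}{2}(p+m)$ for all $p,m \ge 1$, whose proof is a short case check (the ratio is maximized at $p = m = 1$); it says that every internal node $x$ satisfies $\#\mathrm{children}(x) = \Omega(|V_x|)$.

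Next I would bound the work $W(x)$ at an internal node $x$ outside recursive calls: $O(|V_x|)$ to read $t_{\ell-1}$ and form $P,C$; $O(|C_x|)$ to initialise $t_\ell$ and the state of {\sc NextK}; then one {\sc NextK} step per iteration of the \texttt{while} loop. By the reflectable-language framework of Li and Sawada~\cite{li2009gray} (cf. Lemma~\ref{lem:genk_gc}), generating a whole $k$-ary stream of $N$ words --- even starting from an arbitrary word --- costs $O(N)$ in total, so these steps contribute $O((p+1)^m)$. The only other cost is the pivot update: $O(1)$ when $|\delta| = 1$, and for $|\delta| > 1$ (case (d) of Lemma~\ref{lem:diff}) a call to {\sc Lift}. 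This is the step I expect to be the main obstacle, since a naive bound of $O(n)$ per {\sc Lift} with one per node would only give $O(n)$ amortized time. I would resolve it with two observations: (i) in {\sc GenK} a $|\delta| > 1$ transition arises only from deleting $0^{m}$ from a base-$2$ reflected Gray code, which happens a bounded number of times per stream and never for base $> 2$, so case (d) occurs $O(1)$ times at each node; and (ii) the path reversed by {\sc Lift}$(c_{\delta_1})$ lies entirely among the level-$\ge\ell$ vertices, i.e. inside $C_x$, so each {\sc Lift} costs $O(|C_x|)$. Hence the {\sc Lift} contribution to $W(x)$ is $O(|C_x|) = O(|V_x|)$, and by local balance $W(x) = O(|V_x| + (p+1)^m) = O(\#\mathrm{children}(x))$. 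For a leaf I would have the parent pass along the number of nonzero symbols of $t_\ell$ --- maintainable incrementally inside {\sc NextK} at $O(1)$ extra cost per step --- so that a leaf detects $|C| = 0$ and returns in $O(1)$ time; this is necessary, since $\sum_{\text{leaves}}|V_z|$ can be $\Theta(n\cdot n^{n-2})$.

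Summing, the total time is $\sum_{x\ \mathrm{internal}} W(x) + \sum_{x\ \mathrm{leaf}} O(1) = O\big(\sum_x \#\mathrm{children}(x)\big) + O(n^{n-2}) = O(|\mathcal{T}|) = O(n^{n-2})$, i.e. constant amortized time per tree. For space: the parent-pointer array is $O(n)$; the recursion stack of {\sc GenS} has depth at most $n$ (each call strictly shrinks $|V|$) with frames of size $O(n)$ (storing $V$, $t_{\ell-1}$, $t_\ell$); and the state carried by the $k$-ary generator at each active level (direction array, current string, $O(1)$ counters) is $O(n)$ over at most $n$ levels. This gives $O(n^2)$ space overall, completing the plan.
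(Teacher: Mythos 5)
Your proposal is correct and takes essentially the same approach as the paper: constant-amortized generation of the underlying $k$-ary Gray codes via the reflectable-language framework, charging the $O(|V|)$ per-call setup and the rare $O(|C|)$ {\sc Lift} (at most one swap transition per binary listing) to the configurations enumerated in that call, $O(1)$ leaf detection by passing along the count of nonzero symbols (exactly the paper's appendix optimization), and $O(n)$ generator state over at most $n$ levels giving $O(n^2)$ space. The only difference is presentational: you make the paper's amortization explicit through a computation-tree count with the local balance inequality $(p+1)^m-1\ge\tfrac12(p+m)$ and by charging each {\sc Lift} to the node's children, whereas the paper argues directly that the while loop produces at least $t-1$ configurations and that the swap occurs at most once per listing of $2^t$ strings.
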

\begin{proof}
The algorithm {\sc GenS} recursively constructs spanning trees of $K_n$ by enumerating $k$-ary strings for $t_\ell$ in the tuple $T = (t_1, t_2, \ldots, t_h)$, where $t_\ell$ encodes connections from level $\ell$ to level $\ell-1$. By Lemma~\ref{lem:genk_gc}, the function {\sc GenK} generates 1-Gray code listings for $k$-ary strings of length $|t_\ell|$, excluding $0^{|t_\ell|}$, in constant amortized time per string, as established in~\cite{li2009gray}. Using a function iterator, {\sc NextK} updates $t_\ell$ in constant amortized time per string by tracking the position that differs from the previous string, which is achieved by storing the last level of the computation tree where the character is not $x_i$ or $y_i$.

The algorithm {\sc GenS} has at most $n-1$ levels in the rooted tree. The space required by {\sc GenK} includes the direction array $d$ of length at most $n$, which uses $O(n)$ space for each level $\ell$, thus utilizing $nO(n) = O(n^2)$ space. For {\sc GenS}, let $t = |t_{\ell-1}|$. Updating sets $P$ and $C$ (lines 2 and 3 of Algorithm~\ref{alg:spanning_complete}) takes $O(t)$ time when $P$ and $C$ are maintained as linked lists. 
Computing $t_\ell = b_1 b_2 \cdots b_{|C|}$ which describes connections between $P$ and $C$, can be optimized to $O(t)$ by maintaining a hash table $m$ that maps $\pi(c_i) = p_j \in P$ to $j$. The details of the hash table $m$ are provided in Appendix \ref{sec:gens_opt}.
Thus, operations before the while loop (lines 2--10 of Algorithm~\ref{alg:spanning_complete}) require $O(t)$ time. 
However, the while loop (lines 11--20 of Algorithm~\ref{alg:spanning_complete}) generates \( (|P| + 1) \cdot |C| - 1 \) configurations. In the worst case, it produces at least \( t - 1 \) configurations when \( |C| = 1 \), which differs by a constant factor; otherwise, the number of configurations grows exponentially and is far larger than $t$. 
This amortizes the \( O(t) \) time to constant time per tree. 

Within the while loop, the lift operation (line 15 of Algorithm~\ref{alg:spanning_complete}) requires $O(t)$ time in the worst case. However, this operation is only triggered by a swap between a 0 and a non-zero character, which occurs at most once per 1-Gray code listing of $2^t$ binary strings ($k=2$). Thus, its cost is amortized to constant time per tree. 
For the operation in line 19 of Algorithm~\ref{alg:spanning_complete}, which connects a disconnected \( c_i \in C \) to its sibling, a hash table \( C' \) with chaining via a doubly linked list can be maintained to efficiently store and access the siblings of $c_i$ at level \( \ell \) in constant time. 
The implementation details are   provided in Appendix \ref{sec:gens_opt}.
Other operations, such as updating $\pi$ and recursive calls, require constant time and use $O(n)$ space for the hash table $\pi$, tuple $T$, and sets $P$ and $C$. Therefore, {\sc GenS} generates each spanning tree of $K_n$ in constant amortized time per tree, using $O(n^2)$ space.
\end{proof}
Details of the optimizations for the algorithms {\sc GenS} and {\sc GenK} are provided in Appendix \ref{sec:gens_opt}. 
The spanning trees of $K_5$ generated by {\sc GenS} and {\sc GenK} are listed in Appendix \ref{sec:gens_k5}.

\subsection{Enumeration.}
The number of spanning trees in a complete graph $K_n$ is given by Cayley's formula, which states that there are $n^{n-2}$ spanning trees for $K_n$. 
As previously mentioned, Cayley provided an informal proof of this formula for $K_n$~\cite{Cayley1889}. Later, Pr\"{u}fer~\cite{Prufer1918} and Kirchhoff~\cite{Kirchhoff1847} offered simpler proofs by establishing a one-to-one correspondence between spanning trees and Pr\"{u}fer sequences and by applying the matrix-tree theorem, respectively. M\"{u}tze~\cite{mutze2023combinatorial}, however, commented that in \emph{The Art of Computer Programming, Volume 4}~\cite{knuth2011} Knuth sought a more detailed combinatorial explanation of Cayley's formula. 
This subsection thus provides a simpler proof of Cayley's formula for the number of spanning trees in complete graphs, based on observations from our recursive algorithm.

\begin{theorem}
    The number of spanning trees in a complete graph $K_n$ is $n^{n-2}$. 
\end{theorem}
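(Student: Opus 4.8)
The plan is to read off from the recursive structure of {\sc GenS} a clean two–variable recurrence for the number of spanning trees and then solve it in closed form. Root every spanning tree of $K_n$ at vertex $1$; this puts spanning trees of $K_n$ in bijection with rooted trees on $\{1,\dots,n\}$ with root $1$, and each such tree corresponds to a unique tuple $T=(t_1,\dots,t_h)$. I would define $g(m,p)$ to be the number of ways to complete the rooted tree given that the current frontier (the last completed level) contains $p$ vertices and $m$ vertices have not yet been placed; then the number of spanning trees of $K_n$ equals $g(n-1,1)$, since {\sc GenS} starts with level $0$ equal to the root alone ($p=1$) and $m=n-1$ vertices remaining. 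Set $g(0,p)=1$.

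First I would establish the recurrence
\[
g(m,p)=\sum_{k=1}^{m}\binom{m}{k}\,p^{k}\,g(m-k,k)\qquad(m\ge 1).
\]
This is exactly the case analysis underlying {\sc GenS}: by the encoding and Lemma~\ref{lem:exhaust}, the next slice $t_\ell$ ranges over all $(p+1)$-ary strings of length $m$ except $0^{m}$; a string with exactly $k\ge 1$ nonzero symbols is obtained by choosing the $k$ positions ($\binom{m}{k}$ ways) and, for each, one of the $p$ frontier vertices as its parent ($p^{k}$ ways); the $k$ newly placed vertices become the new frontier and the remaining $m-k$ stay unplaced, contributing $g(m-k,k)$. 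Distinct choices give distinct tuples, and every tuple arises this way, so the counts add up exactly.

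Next I would prove by induction on $m$ that $g(m,p)=p\,(p+m)^{m-1}$ for all $m\ge 1$ and $p\ge 1$. The base case $m=1$ gives $g(1,p)=p=p\,(p+1)^{0}$. For the inductive step, substitute the induction hypothesis $g(m-k,k)=k\,m^{\,m-k-1}$ for $k<m$ together with the base case $g(0,m)=1$ for $k=m$, use the identity $\binom{m}{k}\,k=m\binom{m-1}{k-1}$, and re-index with $j=k-1$; the sum collapses, by the binomial theorem, to $p\sum_{j=0}^{m-1}\binom{m-1}{j}p^{j}m^{\,m-1-j}=p\,(p+m)^{m-1}$. Specializing to $m=n-1$ and $p=1$ yields $g(n-1,1)=n^{\,n-2}$, which is the number of spanning trees of $K_n$.

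The routine part is the binomial manipulation, which I would not belabor. The part that needs care is the justification that the recurrence for $g(m,p)$ is exact — that the recursive decomposition induced by the tuple encoding is a genuine bijection, with no spanning tree counted twice or missed. This reduces to facts already recorded in Section~\ref{sec:pivot}: a rooted tree is determined by, and determines, its tuple $(t_1,\dots,t_h)$; the frontier is always nonempty, so a valid (nonzero) slice always exists; and every unplaced vertex is eventually placed because each recursive call with $m\ge 1$ places at least one vertex, so the process terminates precisely when all $n$ vertices have been used. Once this is in hand, the closed form for $g$, and hence Cayley's formula, follow immediately.
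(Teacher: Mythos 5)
Your proposal is correct and is essentially the paper's own argument: your $g(m,p)$ is the paper's $f(k,n)$ with the arguments swapped, your recurrence $g(m,p)=\sum_{k=1}^{m}\binom{m}{k}p^{k}g(m-k,k)$ is identical to the paper's $f(k,n)=\sum_{i=1}^{n}\binom{n}{i}k^{i}f(i,n-i)$, and the induction with $\binom{m}{k}k=m\binom{m-1}{k-1}$ and the binomial theorem yielding $p(p+m)^{m-1}$, then specializing to a single root and $n-1$ remaining vertices, is exactly the paper's derivation of $n^{n-2}$. No further comparison is needed.
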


\begin{proof}

In Algorithm~\ref{alg:spanning_complete}, {\sc GenS} takes $V$ as an input, which is a set that contains vertices not in levels 0 to $\ell-2$ of the rooted tree. 
It computes sets $P$ and $C$, which contain the vertices at level $\ell-1$ and the vertices not at levels 0 to $\ell-1$, respectively. We now derive the total number of spanning trees that can be formed given a subtree spanning levels 0 to $\ell-2$ and the set of remaining vertices $V$.

Let $k = |P|$ and $n = |C|$. 
Define $f(k, n)$ as the number of spanning trees that can be formed given a subtree spanning levels 0 to $\ell-2$, with $P$ as the set of vertices at level $\ell-1$ and $C$ as the set of vertices not in levels 0 to $\ell-1$. 
If $n = 0$, there is only one possible spanning tree, where all vertices in $V$ are at level $\ell$. 
If $n > 0$, we select $i$ vertices from $C$ to connect to vertices in $P$, where each selected vertex can connect to any of the $k$ vertices in $P$. By the product rule, there are $\binom{n}{i} k^i$ possible connections between vertices at level $\ell-1$ and vertices at level $\ell$. 
For the $n-i$ vertices in $C$ not selected to connect to $P$, there are $f(i, n-i)$ ways to form a spanning tree with the $i$ vertices at level $\ell$. 
Summing over all possible values of $i$ from 0 to $n$, the total number of spanning trees is given by the recursive formula: 
\begin{equation}\nonumber
    f(k, n) = 
    \begin{cases}
        1, & \mathrm{if~} n=0; \\
        \sum^n_{i=1} {n \choose i} k^i f(i, n-i), & \mathrm{otherwise.}
    \end{cases}
\end{equation}
We now prove by strong induction on $n$ that
\begin{equation}\nonumber
    f(k, n) = k(k+n)^{n-1}.
\end{equation}
For the base case, when $n = 0$:
\begin{align}\nonumber
    f(k, 0) &= 1 \\\nonumber
    &= k(k+0)^{-1}
\end{align}
For $n = 1$:
\begin{align}\nonumber
    f(k, 1) &= \sum^1_{i=1} {1 \choose i} k^i f(i, 1-i) \\\nonumber
    &= {1 \choose 1} k^1 f(1, 0) \\\nonumber
    &= k \cdot 1 \\\nonumber
    &= k(k+1)^{0}.
\end{align}
Assume by strong induction that $f(k, t) = k (k + t)^{t-1}$ for all $t < n$. For $t = n$:
\begin{align}\nonumber
    f(k, n) &= \sum^n_{i=1} {n \choose i} k^i f(i, n-i) \\\nonumber
    &= \sum^n_{i=1} {n \choose i} k^i \cdot in^{n-i-1} 
\end{align}
Extract $\frac{n}{i}$ from $\binom{n}{i} = \frac{n}{i} \binom{n-1}{i-1}$ for $i \geq 1$:
\begin{align}\nonumber
    f(k, n) &= \sum^n_{i=1} {n-1 \choose i-1} \cdot \frac{n}{i} \cdot k^i \cdot in^{n-i-1} \\\nonumber
    &= \sum^n_{i=1} {n-1 \choose i-1} k^i \cdot n^{n-i} \\\nonumber
    &= k\sum^n_{i=1} {n-1 \choose i-1} k^{i-1} \cdot n^{n-i}
\end{align}
By setting $j = i - 1$:
\begin{align}\nonumber
    f(k, n) &= k\sum^{n-1}_{j=0} {n-1 \choose j} k^{j} \cdot n^{n-j - 1},
\end{align}
The sum $\sum_{j=0}^{n-1} \binom{n-1}{j} k^j n^{n-1-j}$ is the binomial expansion of $(k + n)^{n-1}$, so:
\begin{equation}\nonumber
    f(k, n) = k(k+n)^{n-1}.
\end{equation}
For a complete graph $K_{n+1}$ with $n+1$ vertices, set $k = 1$ (root at level 0) and $|C| = n$ (remaining vertices), so:
\begin{equation}\nonumber
f(1, n) = 1 \cdot (1 + n)^{n-1} = (n + 1)^{n-1}.
\end{equation}
Adjusting for the complete graph $K_n$, we correct the index: $f(1, n-1) = 1 \cdot (1 + (n-1))^{n-2} = n^{n-2}$, which matches Cayley's formula for $K_n$.
\end{proof}

\section{Edge-exchange Gray codes for spanning tree of general graphs.}
\label{sec:algorithm}
With modifications, our algorithm {\sc GenS} can be extended to generate edge-exchange Gray codes for connected general graphs. 
The approach for generating edge-exchange Gray codes for general graphs is similar to that for generating pivot Gray codes for complete graphs. 
However, a key difference arises between complete and general graphs. 
In {\sc GenS}, given sets $P$ and $C$ in the rooted tree, where $P$ contains vertices at level $\ell-1$ and $C$ contains vertices not in levels 0 to $\ell-1$, every vertex in $C$ can connect to any vertex in $P$ in a complete graph, yielding $|P|^{|C|}$ possible configurations for connections between levels $\ell-1$ and $\ell$. 
In general graphs, however, not all such connections may exist, as an edge between $p \in P$ and $c \in C$ may be absent in the original graph. 
Thus, additional handling is required to account for valid connections between vertices in $P$ and $C$.

Below, we describe three mechanisms to modify our algorithm {\sc GenS} to generate edge-exchange Gray codes for connected general graphs. We illustrate the handling required for each case using the example graph $G$ shown in Figure~\ref{fig:general_graph}.
\begin{figure}[H]
    \centering
    \includegraphics[width=0.55\linewidth]{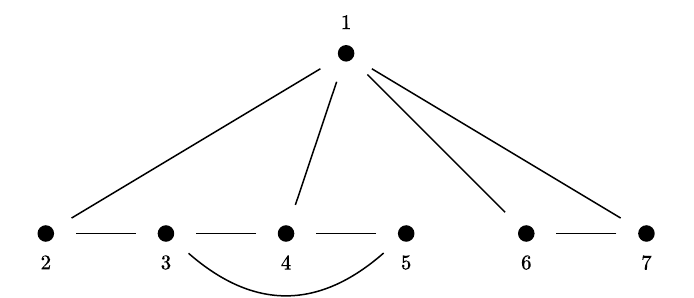}
    \caption{A connected general graph $G$ with seven vertices.}
    \label{fig:general_graph}
\end{figure}

The three modifications to {\sc GenS} are discussed in the following three subsections.

\subsection{Mixed-radix string representation.}

As previously noted, given sets $P$ and $C$ in the rooted tree, where $P$ contains vertices at level $\ell-1$ and $C$ contains vertices not in levels 0 to $\ell-1$, not all connections between vertices in $P$ and $C$ may exist, as an edge between $p \in P$ and $c \in C$ may be absent in the original graph. The challenge is to represent all valid configurations when some connections are not possible.

Consider the spanning tree of the graph $G$ in Figure~\ref{fig:general_graph} and its corresponding rooted tree shown in Figure~\ref{fig:enter-label}.  
\begin{figure}[H]
    \centering
    \includegraphics[width=0.3\linewidth]{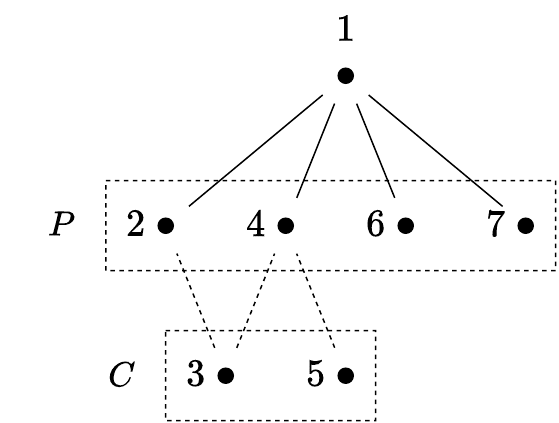}
    \caption{The rooted tree that corresponds to a spanning tree of $G$. }
    \label{fig:enter-label}
\end{figure} 
\noindent At levels 1 and 2 of the rooted tree, we have \( P = \{2, 4, 6, 7\} \) and \( C = \{3, 5\} \). 
In this case, vertex 3 has two possible connections to vertices in \( P \), while vertex 5 has only one possible connection.
Unlike in complete graphs, where \( t_\ell \) is a \( |C| \)-length string with each character \( a_i \) ranging from 0 to \( |P| \), here each \( a_i \) ranges from 0 to the number of valid connections for the corresponding vertex in \( C \). 
We thus use mixed-radix strings to encode parent assignments between vertices in $P$ and $C$ for general graphs, unlike using $k$-ary strings for complete graphs.

A \emph{mixed-radix string} is a sequence where each position has its own base (or radix), determining the range of valid values for that position. 
Each element \( t_\ell = a_1 a_2 \cdots a_{n-q_{\ell-1}} \) in the tuple \( T \) is now a mixed-radix string of length \( n-q_{\ell-1} \), where each \( a_i \) ranges from 0 to the number of vertices in \( P \) adjacent to the \( i \)-th smallest vertex in \( C \). 
If \( a_i = 0 \), then the \( i \)-th smallest vertex at level \( \ell \) is not connected to any vertex at level \( \ell-1 \). 
Otherwise if \( a_i > 0 \), then it is connected to the \( (a_i) \)-th smallest vertex in \( P \) that is adjacent to the \( i \)-th smallest vertex in \( C \) in the original graph \( G \).
For example, in Figure~\ref{fig:general_graph}, if $P=\{2, 4, 6, 7\}$ and $C=\{3, 5\}$ where vertex 3 is adjacent to vertices 2 and 4 in \( G \), and vertex 5 is adjacent to vertex 4 in $G$, then \( t_\ell \) is a string of length two with \( a_1 \in \{0, 1, 2\} \) (representing no connection or connections to 2 or 4) and \( a_2 \in \{0, 1\} \) (representing no connection or a connection to 4). 
The possible configurations for vertices 3 and 5 can be represented by a mixed-radix string of length two, with the first digit in base three (for vertex 3’s connections) and the second digit in base two (for vertex 5’s connection). 
The valid mixed-radix strings representing these configurations are \( 01, 10, 11, 20, 21 \). 
Note that the string \( 00 \) is omitted from the listing, as a spanning tree requires at least one connection between levels \( \ell-1 \) and \( \ell \).

\begin{algorithm}[t]
    \caption{Generating mixed-radix strings starting with $\alpha$ in reflectable Gray code order. }
    \label{alg:kg-ary}
    \begin{algorithmic}[1]
        \Procedure{GenMR}{$i, k=k_1k_2 \cdots k_j, \alpha=a_1a_2 \cdots a_j$}
        \State{$i^\prime \gets m[i]$}
            \For{$r = 1$ to $k_{i^\prime}$}
                \If{$i = \vert \alpha \vert$}
                    \If{$s \neq 0$}
                    \State{Output $\alpha, \delta$}
                    \State{$\delta \gets \emptyset$}
                    \EndIf
                \Else
                    \State \Call{GenMR}{$i + 1, k, \alpha$}
                    \State{$j^\prime \gets m[i+1]$}
                    \If{$\alpha_{j^\prime} = k_{j^\prime} - 1$}
                        {$d_{j^\prime} \gets 1$}
                    \Else
                        {$~d_{j^\prime} \gets -d_{j^\prime}$}
                    \EndIf
                \EndIf
                \If{$r < k_{i^\prime}$}
                    \State $s \gets s + \big((\alpha_{i^\prime} + d_{i^\prime} + k_{i^\prime}) \mod k_{i^\prime}\big) - \alpha_{i^\prime}$
                    \State $\alpha_{i^\prime} \gets (\alpha_{i^\prime} + d_{i^\prime} + k_{i^\prime}) \mod k_{i^\prime}$
                    \State $\delta \gets \delta, i$ 
                \EndIf
            \EndFor
        \EndProcedure
    \end{algorithmic}
\end{algorithm}

To generate a Gray code for spanning trees of general graphs, we generate a Gray code for mixed-radix strings representing connections between vertices in \( P \) and \( C \), satisfying the three conditions in Section~\ref{sec2:pivotGenS}.
\begin{lemma}
The set of mixed-radix strings of length $j$ is a reflectable language.
\end{lemma}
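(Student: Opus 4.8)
The plan is to exhibit, for every position $i$, two admissible symbols $x_i$ and $y_i$ that can extend an arbitrary length-$(i-1)$ prefix of a word in the language, which is exactly what the definition of a reflectable language demands. The key observation I would start from is that the set of mixed-radix strings of length $j$ with radices $k_1, k_2, \ldots, k_j$ has no cross-position constraints: a sequence $w_1 w_2 \cdots w_{i-1}$ is a prefix of some word in the language if and only if $w_t \in \{0, 1, \ldots, k_t-1\}$ for each $t \le i-1$, and any such prefix can be completed by appending an arbitrary symbol $w_i \in \{0,1,\ldots,k_i-1\}$ followed by any admissible suffix.

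Given this, the proof proceeds by choosing $x_i = k_i - 1$ and $y_i = k_i - 2$ for each position $i$ with $k_i \ge 2$ (the two largest admissible symbols at position $i$). Since $x_i$ and $y_i$ lie in the radix range $\{0,1,\ldots,k_i-1\}$ of position $i$, both $w_1 \cdots w_{i-1} x_i$ and $w_1 \cdots w_{i-1} y_i$ are prefixes of words in the language for every admissible prefix $w_1 \cdots w_{i-1}$, which establishes reflectability. Positions with $k_i = 1$ are degenerate — the only admissible symbol is $0$, so no reflection is performed there — and hence do not affect the argument. This mirrors the proof of the earlier lemma for $k$-ary strings but is in fact simpler, because here the all-zero string $0^j$ is not excluded and therefore the restriction $k > 2$ is not needed.

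The one point that I expect to require a little care is that the reflectable-language framework of Li and Sawada in~\cite{li2009gray} is stated over a single fixed alphabet, whereas here each position $i$ effectively draws from its own alphabet $\{0,1,\ldots,k_i-1\}$. I would handle this by taking $\Sigma = \{0,1,\ldots,\max_i k_i - 1\}$ as the ambient alphabet and checking that the chosen symbols $x_i, y_i$ always stay within the radix bound of their position, so that the framework's recursive unrolling — which reflects the order of the children of each node at level $i-1$ and uses $x_i, y_i$ as the first and last children — never produces an out-of-range digit. Once this bookkeeping is in place the framework applies directly, yielding not only reflectability but also, as needed later, the efficient Gray-code generation realized by {\sc GenMR}.
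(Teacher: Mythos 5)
Your proposal is correct and takes essentially the same route as the paper, which also proves reflectability by choosing $x_i = k_i - 1$ and $y_i = k_i - 2$ for each position. Your extra remarks on positions with $k_i = 1$ and on the per-position alphabets are sensible bookkeeping that the paper leaves implicit (handling the $k_i = 1$ case only later, in the proof of the Gray-code property of {\sc GenMR}), but they do not change the argument.
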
\label{lem:reflectableG}
\begin{proof}
Let \( k_i \) denote the radix of the character \( a_i \) in a mixed-radix string \( a_1 a_2 \cdots a_j \) of length \( j \), where \( 0 \leq a_i < k_i \) for each \( i \). 
 By setting $x_i = k_i - 1$ and $y_i = k_i - 2$ for each $i$, observe that mixed-radix strings of length $j$ are reflectable.
\end{proof}
The algorithm {\sc GenK} is modified to create {\sc GenMR}, which generates a Gray code for mixed-radix strings by replacing the parameter $k$ with an array $(k_1, k_2, \ldots, k_j)$, where $k_i \geq 1$ specifies the range of each digit $a_i$ in a string $\alpha = a_1 a_2 \cdots a_j$, with $a_i \in \{0, 1, \ldots, k_i - 1\}$. 
To ensure spanning tree connectivity, the all-zero string $0^j$ is excluded, requiring at least one non-zero digit. 
To omit $0^j$, we apply a simple trick when there exists an index $i < j$ with $k_i \geq 2$ and the last position has $k_j = 2$. 
In this case, we swap the levels of $i$ and $j$ when we generate the reflectable Gray code for mixed-radix strings.
This swap ensures that the string \( 0^j \) appears between the strings \( 0^{j-1}x \) and \( 0^{j-1}y \), where \( x \neq y \), allowing us to remove \( 0^t \) while maintaining a 1-Gray code. 
A graphical interpretation of this simple trick is illustrated in Figure~\ref{fig:MRtrick}.
It is also worth noting that if there exists $k_i = 1$ for some position $i$, then $a_i = 0$ is the only possibility, which does not affect the 1-Gray code property of the resulting listing.  
Pseudocode of the algorithm is shown in Algorithm~\ref{alg:kg-ary}.  
To run the algorithm, we make the initial call {\sc GenMR}$(i, k , \alpha)$, where $i$ is the current level of the recursive computation tree (initialized to 1), $k = (k_1, k_2, \ldots, k_j)$ is the array of radices for each digit in $\alpha$, and $\alpha$ is the initial string of the listing. 
A complete Python implementation of {\sc GenMR} is provided in Appendix \ref{sec:genk_code}.

\begin{figure}[t]
    \centering
    \includegraphics[width=0.8\linewidth]{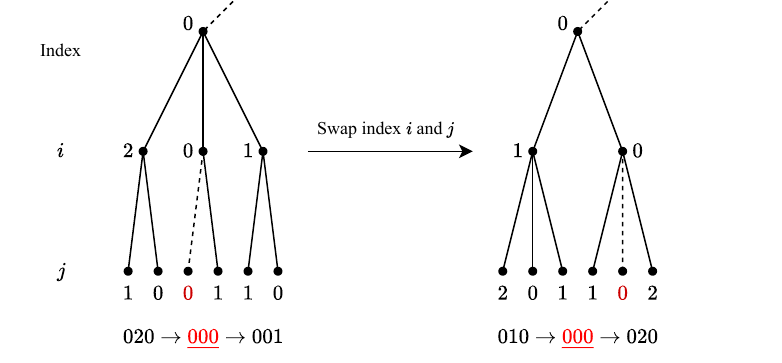}
    \caption{Swapping the ordering of $c_i$ and $c_j$ in $C$ to generate a reflectable Gray code. }
    \label{fig:MRtrick}
\end{figure}

\begin{lemma} \label{lem:genk_gcg}
Consecutive strings generated by \textsc{GenMR} differ by a single symbol change, or by swapping a 0 and a 1 which the strings have only one single non-zero character. 
\end{lemma}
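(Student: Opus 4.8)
The plan is to mirror the proof of Lemma~\ref{lem:genk_gc}, splitting on the radices $k_1,\dots,k_j$ of the positions. If every $k_i\ge 3$, then the set of mixed-radix strings of length $j$ with $0^j$ \emph{deleted} is already a reflectable language: with $x_i=k_i-1$ and $y_i=k_i-2$, which are now both non-zero, every prefix of an admissible word extends by $x_i$ (and by $y_i$) to an admissible word different from $0^j$, so by the Li--Sawada framework~\cite{li2009gray} \textsc{GenMR} outputs a $1$-Gray code, and consecutive strings then differ by a single symbol change. This disposes of the generic case without touching the trick.

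When some $k_i$ equals $1$ or $2$ the deleted set need not be reflectable (because $y_i=0$, or no non-zero symbol exists at that position), so instead I would invoke reflectability of the \emph{full} set of mixed-radix strings, including $0^j$, which is Lemma~\ref{lem:reflectableG}. One first notes that the level swap performed inside \textsc{GenMR} (via the map $m$ in Algorithm~\ref{alg:kg-ary}) preserves this property: permuting which coordinate is handled at which recursion depth leaves the set of all mixed-radix strings reflectable, since any prefix of the appropriate length still extends by every admissible symbol. Hence \textsc{GenMR}, run with the swap, produces a $1$-Gray code for the full set, and the only remaining point is to understand the effect of skipping the single word $0^j$ — the branch $s=0$ in Algorithm~\ref{alg:kg-ary}, where output is suppressed.

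For that, let $u$ and $v$ be the words emitted immediately before and immediately after the skipped $0^j$. Each of $u,v$ differs from $0^j$ in exactly one coordinate, so $u$ and $v$ differ in at most two coordinates; if it is the same coordinate we are done (single symbol change), and otherwise each of $u$ and $v$ has exactly one non-zero entry. The point of the level-swap trick is precisely to control which of these occurs and, in the two-coordinate case, to force those lone non-zero entries to equal $1$: by moving a coordinate of radix $\ge 2$ to the deepest recursion level (the hypothesis that some $k_i\ge2$ with the last position of radix $2$), the word $0^j$ is pushed either strictly inside the block of words sharing the prefix $0^{j-1}$ — so both neighbours differ from it in that deepest coordinate only — or to a controlled block boundary where its two neighbours differ from it only in coordinates of radix $2$, whence each neighbour's unique non-zero entry is $1$; this is the admissible $0$--$1$ swap, and it is what Figure~\ref{fig:MRtrick} depicts. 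The degenerate configurations (all $k_i=1$, or only one position with $k_i\ge2$, or a radix-$1$ last position reducing to a shorter instance) yield an empty or single-element listing after deleting $0^j$ and are handled separately in one line each.

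I expect the main obstacle to be exactly this last verification: tracing the Li--Sawada recursion in \textsc{GenMR} to confirm that, after the level swap, $0^j$ is flanked by two words whose only differences from $0^j$ lie where the trick claims, so that their mutual difference is a single symbol change or a single-non-zero $0$--$1$ swap. Everything else — the reflectability bookkeeping, the invariance under the coordinate permutation, and the degenerate cases — is routine and parallels the $k$-ary argument already given.
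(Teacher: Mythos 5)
Your proposal is correct and takes essentially the same route as the paper: reflectability of the full mixed-radix language (Lemma~\ref{lem:reflectableG}), the level-swap trick to control the neighbours of $0^j$, and reduction of the residual swap case to the binary behaviour already established in Lemma~\ref{lem:genk_gc}. The verification you flag as the main obstacle is exactly what the paper's case split settles: if some $k_i>2$, the swap places a radix-$>2$ coordinate at the deepest level, so $0^j$ lies strictly between $0^{j-1}x$ and $0^{j-1}y$ with $x,y\neq 0$ and its removal is a single symbol change, while if every $k_i\le 2$ the listing coincides with \textsc{GenK}'s binary listing and the $0$--$1$ swap claim follows verbatim.
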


\begin{proof}
The proof follows a similar approach to that of Lemma~\ref{lem:genk_gc}. 
If $k_i = 2$ for all $i$, the listing is binary, and \textsc{GenMR} generates the same listing as \textsc{GenK} in Algorithm~\ref{alg:k-ary}. 
Thus, consecutive strings differ by a single bit change or a swap between a 0 and a 1. 
If there exists some $k_i = 1$, then $a_i = 0$, fixing the digit 0 at that level $i$, and the fixed digit 0 does not affect the 1-Gray code property. 
Otherwise, if there exists some $k_i > 2$, our simple trick on swapping $c_i$ with $c_j$ in $C$ with $k_j= 2$ ensures that the last character has base $k_j > 2$. 
By Lemma~\ref{lem:reflectableG}, matrix-radix strings form a reflectable language, and \textsc{GenMR} produces a 1-Gray code when the string $0^j$ is included in the listing. 
Observe that $0^j$ appears between strings $0^{j-1}x$ and $0^{j-1}y$ when $k_j > 2$, where $x \neq y$ and $x, y \neq 0$. Removing the string $0^j$ maintains the 1-Gray code property, as the transition from $0^{j-1}x$ to $0^{j-1}y$ involves a single symbol change in the last position.
\end{proof}

\subsection{Connected components.}

The second modification addresses the connectivity of subtrees. 
Consider the initial spanning tree of the graph $G$ in Figure~\ref{fig:general_graph}. 
In our algorithm {\sc GenS} for complete graphs, the initial spanning tree is the path $1-2-\cdots-n$, corresponding to the tuple $T = (10^{n-1}, 10^{n-2}, \ldots, 1)$. 
However, such an initial tree is not feasible for $G$. 
For instance, if vertex 1 is only connected to vertex 2, vertices 6 and 7 become disconnected from the rest of $G$. 
Thus, we design a mechanism to ensure that the spanning tree accounts for the connectivity of each component in $C$ with respect to the vertices in $V$.

Given a vertex set \( C \) at level \( \ell \), we apply depth-first search or breadth-first search to identify the connected components \( \{C_1, C_2, \ldots, C_m\} \) within \( C \). 
Each connected component \( C_i \) must have at least one vertex connected to a parent in \( P \); otherwise, it would be disconnected from the rest of the graph, preventing the formation of a spanning tree. 
We denote \( t_{\ell, i} \) as a mixed-radix string of length \( |C_i| \), where each character represents the connection of the corresponding vertex in \( C_i \) to a vertex at level \( \ell-1 \). 
To ensure connectivity, the string \( 0^{|C_i|} \), which indicates no connections to \( P \), is excluded for each \( t_{\ell, i} \). 
We then apply the algorithm {\sc GenMR} to generate all possible mixed-radix strings \( t_{\ell, i} \) for each connected component \( C_i \), excluding \( 0^{|C_i|} \). 
Thus, {\sc GenMR} enables the generation of all possible connection configurations for the rooted tree, ensuring that each component \( C_i \) maintains at least one connection to \( P \).
As an example, consider the initial spanning tree for the graph \( G \). 
At level zero, \( P = \{1\} \), and we consider level one. 
The algorithm performs a depth-first search or a breadth-first search to identify the connected components in \( C \), yielding \( C_1 = \{2, 3, 4, 5\} \) and \( C_2 = \{6, 7\} \). 
For \( C_1 \), only vertices 2 and 4 are adjacent to vertex 1 in \( P \), so \( t_{1, 1} \) is a mixed-radix string of length four, with characters for vertices 2 and 4 in \( \{0, 1\} \) (no connection or connection to vertex 1) and characters for vertices 3 and 5 in \( \{0\} \) (no connection possible). 
For \( C_2 \), both vertices 6 and 7 are adjacent to vertex 1, so \( t_{1, 2} \) is a mixed-radix string of length two, with each character in \( \{0, 1\} \) (no connection or connection to vertex 1). 
The valid strings for each component exclude \( 0^{|C_i|} \), ensuring at least one connection to \( P \).

\begin{figure}
    \centering
    \includegraphics[width=0.35\linewidth]{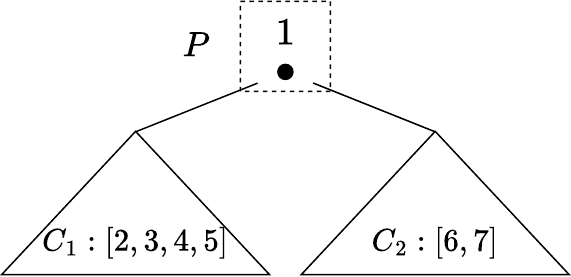}
    \caption{A rooted tree of $G$ rooted at vertex 1 with two connected components $C_1 = \{ 2 , 3 , 4 , 5 \}$ and $C_2 = \{ 6 , 7 \}$ at level 1.}
    \label{fig:componentG}
\end{figure}

The case for complete graph can actually be considered as a special case of the algorithm to generate spanning trees for general graphs, where there is only one single connected component in a complete graph. 

\subsection{Two-dimensional recursion for connected components.}
For complete graphs, the algorithm {\sc GenS} traverses the rooted tree level by level, generating pivot Gray codes for spanning trees. For general graphs, we account for connected components, introducing a second dimension to the recursion.

We design a two-dimensional recursion to iterate through all possible mixed-radix strings $t_{\ell, i}$ for each connected component $C_i$ at level $\ell$ and all configurations at subsequent levels. In the first dimension, the algorithm iterates over all possible mixed-radix strings $t_{\ell, i}$ for each connected component $C_i$ at level $\ell$ by invoking {\sc GenMR} to generate each $t_{\ell, i}$. For each update to $t_{\ell, i}$, we recursively call {\sc GenMR} to generate all possible mixed-radix strings $t_{\ell, i+1}$ for the next connected component $C_{i+1}$. In the second dimension, after each update to any $t_{\ell, i}$ at level $\ell$, the algorithm descends to level $\ell+1$, identifies all connected components at level $\ell+1$ based on the current configuration at level $\ell$, and recursively generates all possible mixed-radix strings $t_{\ell+1, i}$ for each connected component at level $\ell+1$ and possible configurations at subsequent levels.

We now demonstrate how the algorithm generates spanning trees for the graph $G$. Initially, vertex 1 is at level 0 of the rooted tree, and the algorithm enumerates all possible strings $t_{\ell, i}$ for each connected component at level $\ell$. For the connections between levels 0 and 1, we have $P = \{1\}$ and $C = \{2, 3, 4, 5, 6, 7\}$, with two connected components: $C_1 = \{2, 3, 4, 5\}$ and $C_2 = \{6, 7\}$. 
Since only vertices 3 and 5 in $C_1$ are adjacent to vertex 1 in $G$, the string $t_{1,1}$ for $C_1$ is a mixed-radix string of length four with radices $k_1 = 1$ (for vertex 2), $k_2 = 2$ (vertex 3), $k_3 = 1$ (vertex 4), and $k_4 = 2$ (vertex 5). 
Since both vertices 6 and 7 in $C_2$ are adjacent to vertex 1, the string $t_{1,2}$ for $C_2$ is a binary string of length two with radix $k_i = 2$ for each position. In the first dimension of the recursion, the algorithm generates all combinations of $t_{\ell, i}$, starting with $t_{1,1}$ for $C_1$. For each update to $t_{1,1}$, it generates all strings in $t_{1,2}$ for $C_2$ and all configurations at subsequent levels. 
Initially, $t_{1,1} = 0100$ (vertex 3 in $C_1$ connects to vertex 1) and $t_{1,2} = 10$ (vertex 6 in $C_2$ connects to vertex 1), yielding trees 1, 2, and 3 in Figure~\ref{fig:spanningG}. After exhausting all configurations for subsequent levels, the algorithm updates $t_{1,2}$ to $11$ (both vertices 6 and 7 connect to vertex 1), producing trees 4, 5, and 6 in Figure~\ref{fig:spanningG}. Next, it updates $t_{1,2}$ to $01$ (vertex 7 connects to vertex 1), yielding trees 7, 8, and 9 in Figure~\ref{fig:spanningG}. After exhausting all strings in $t_{1,2}$, the algorithm updates $t_{1,1}$ to $0101$ (vertices 3 and 5 connect to vertex 1) and sets $t_{1,2} = 01$, generating trees 10 to 14 in Figure~\ref{fig:spanningG}. It then updates $t_{1,2}$ to $10$ (vertex 6 connects to vertex 1) and continues this process until all spanning trees of $G$ are generated.

\subsection{Analysis of the edge-exchange Gray code generation algorithm for general graphs.}

The algorithm {\sc GenG} extends {\sc GenS} (Algorithm~\ref{alg:spanning_complete}) to generate Gray codes for spanning trees of general graphs by incorporating the changes described in above subsections. 
The resulting listing forms an edge-exchange Gray code for the spanning trees of a general graph $G$ but does not form a pivot Gray code. For example, in case (b) described in Section~\ref{sec2:pivotGenS}, a vertex $v$ at level $\ell$ may be the only vertex in its subtree with an edge to a vertex at level $\ell-1$, while a descendant $x \neq v$ has edges to other vertices at level $\ell$ within the same connected component (e.g., tree 24 and tree 25 in Figure~\ref{fig:spanningG}, also see Figure~\ref{fig:spanningG24}). Removing the edge $(v, u)$ to its parent $u$ at level $\ell-1$ requires adding an edge $(x, w)$, where $w \neq v$ is in the same component, to maintain connectivity. This edge exchange does not involve a common vertex, violating the pivot Gray code property, which requires consecutive trees to differ by a single edge exchange involving a common vertex.

\begin{figure}[t]
    \centering
    \includegraphics[width=1\linewidth]{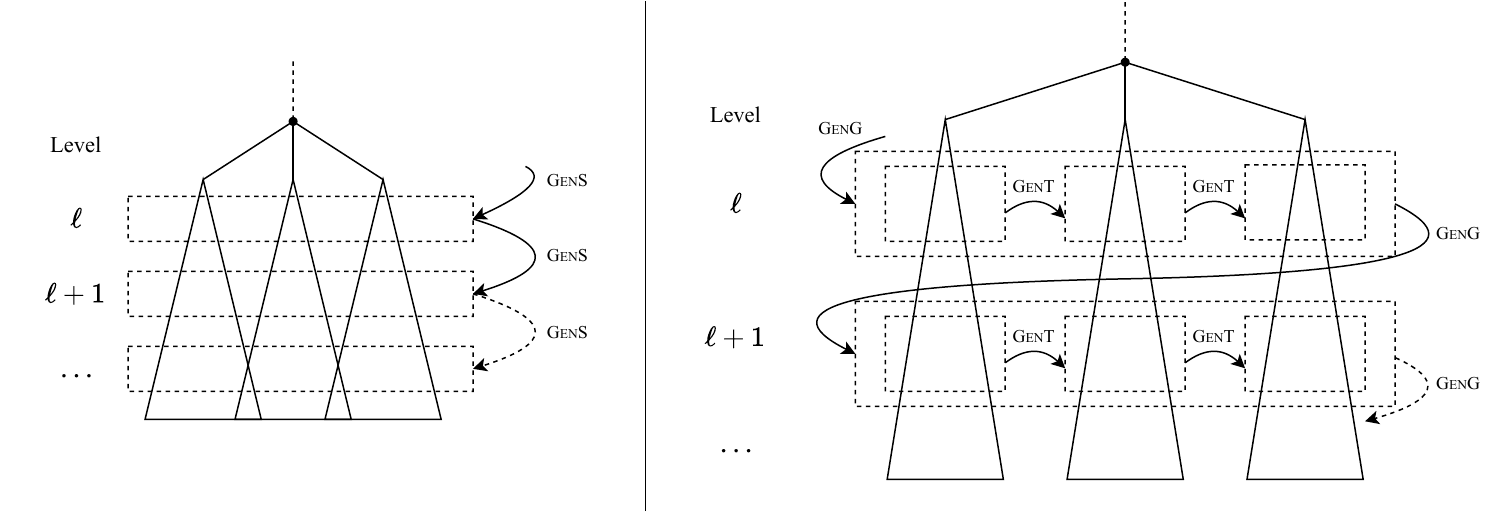}
    \caption{Two-dimensional recursion to iterate through all possible mixed-radix strings $t_{\ell, i}$ for each connected component $C_i$ at level $\ell$ and all configurations at subsequent levels.}
    \label{fig:two_dim}
\end{figure}

The following modifications are applied to {\sc GenS} to produce {\sc GenG}:

\begin{itemize}

\item At each level $\ell$ of the rooted tree, apply depth-first search or breadth-first search to identify the connected components in $C$ after computing the set $C$ (line 7 of {\sc GenG}, Algorithm~\ref{alg:spanning_general});

\item For each vertex $c_{ij}$ in connected component $C_i$, maintain a list $F_{ij}$ of vertices in $P$ to which $c_{ij}$ is adjacent in the graph $G$ (lines 10--12 of {\sc GenG});

\item Introduce a function {\sc GenT} (Algorithm~\ref{alg:spanning_general2}) to recursively generate all mixed-radix strings $t_{\ell, i}$ for each connected component $C_i$, mirroring lines 11--20 of {\sc GenS}. The function {\sc GenT} takes two parameters, where $i$ is the index of the current component $C_i$ in $\mathcal{C}$, and $\ell$ is the current level of the rooted tree;

\item Initiate the recursion by calling {\sc GenT}($1,  \ell$) to generate mixed-radix strings for all connected components at level $\ell$ (line 13 of {\sc GenG});

\item Within {\sc GenT}, recursively call {\sc GenT}($i+1, \ell$) to generate all mixed-radix strings for the next connected component $C_{i+1}$ (line 7 and line 21 of {\sc GenT});

\item For each update to the mixed-radix string $t_{\ell, i}$ in $C_i$, call {\sc GenG} to generate configurations for the next level $\ell+1$ when $i > |\mathcal{C}_\ell|$ where $\mathcal{C}_\ell$ is the number of connected components in level $\ell$, using the union of components $\bigcup C_i$ and the concatenated string $\prod t_i$ (lines 2--5 of {\sc GenT});

\item Determine the base (number of possible parents) for each vertex $c_{ij}$ in $C_i$, derived from $|F_{ij}| + 1$ to account for connections within $C_i$ (line 8 of {\sc GenT});

\item Use mixed-radix strings $t_{\ell, i}$ for each connected component $C_i$ instead of $k$-ary strings, with the {\sc NextMR} function invoking {\sc GenMR} to generate the next mixed-radix string (lines 10--11 of {\sc GenT});

\item For case (b), where $t_{i,\delta_0} = 0$, select an edge $(c_{a}, c_{b}) \in G$ such that $c_{a} = c_{\delta_0}$, or $c_{a}$ is a descendant of $c_{\delta_0}$ while $c_{b}$ is not a descendant of $c_{\delta_0}$, setting $c_{b}$ as the parent of $c_{a}$ (lines 18--20 of {\sc GenT}). 
\end{itemize}
Pseudocode of {\sc GenG} and {\sc GenT} to generate all spanning trees of general graphs is shown in Algorithm~\ref{alg:genG} and Algorithm~\ref{alg:genT}.
A complete Python implementation is provided in Appendix \ref{sec:geng_code}. 
\begin{figure}
\centering
\begin{minipage}{\textwidth}
\begin{algorithm}[H]
    \caption{
    Generating an edge-exchange Gray code for spanning trees of a general graph}
    \label{alg:spanning_general}
    \begin{algorithmic}[1]
        \Procedure{\sc GenG}{$V_{\ell-1}, t_{\ell-1}=a_1a_2\cdots a_{\vert V_{\ell-1} \vert}$}
        \State{$P_{\ell} \gets v_i \in V_{\ell-1} \mathrm{~where~} a_i > 0$}
        \State{$C_{\ell} \gets v_i \in V_{\ell-1} \mathrm{~where~} a_i = 0$}
        \If{$\vert C_{\ell} \vert = 0$}
            \State{Print spanning tree}
            \State{\textbf{return}}
        \EndIf
        \State{$\mathcal{C}_{\ell} \gets $ Find connected components in $C_{\ell}$}
        \For{$C_{\ell, i} \in \mathcal{C}_{\ell} = \big\{{C}_{\ell, 1},\dots C_{\ell, m}\big\}$}
        \State{$t_{\ell,i} = b_1b_2\cdots b_{\vert C_{\ell, i} \vert}\gets 0^{\vert C_{\ell, i} \vert}$}
        \For{$c_{ij} \in C_{\ell, i}$}
        \State{$F_{\ell, ij} \gets p_z \in P_{\ell} \mathrm{~where~} (p_z, c_{ij}) \in G$}
        \If{$\pi(c_{ij}) \in F_{\ell, ij}$}{~$t_{\ell, ij} \gets z \mathrm{~where~} \pi(c_{ij}) = F_{\ell, ij}[z]$}
        \EndIf
        \EndFor
        \EndFor

        \State{{\sc GenT}($1, \ell$)}
    
        \EndProcedure
    \end{algorithmic}\label{alg:genG}
\end{algorithm}
\begin{algorithm}[H]
    \caption{Generate all mixed-radix strings $t_{\ell, i}$ for each connected component $C_i$.}
    \label{alg:spanning_general2}
    \begin{algorithmic}[1]
        \Procedure{\sc GenT}{$i, \ell$}

        \If{$i > \vert \mathcal{C}_{\ell} \vert$}
        \State{$V_\ell \gets \bigcup_{i \le \vert \mathcal{C}_{\ell} \vert} C_{\ell, i}$}
        \State{$t_{\ell} \gets \prod_m t_{\ell, i}$}
        \State{{\sc GenG}$(V_{\ell}, t_{\ell}$)}
        \State{\textbf{return}}
        \EndIf
        \State{{\sc GenT}$(i+1, \ell)$}
        \State{$k_{\ell, i} \gets \big[ \vert F_{\ell, ij}\vert +1  \big]_{j \in [1, \dots, \vert F_{\ell, i}\vert ]}$}
        \State{$\big\{c_{1}, c_{2}, \dots\big\} \gets C_{\ell, i}$}
        
        \While{{\sc NextMR}$(t_{\ell, i}, k_{\ell, i}) \ne \emptyset$}
            \State{$t_{\ell, i}, \delta \gets ${\sc NextMR}$(t_{\ell, i}, k_{\ell, i})$}
            \If{$\vert \delta \vert > 1$}
                \State{$\pi(c_{\delta_1}) \gets F_{i\delta_1}[t_{\ell, i\delta_1}]$}{\color{blue}$\quad \quad \quad  \quad  \quad  \quad  \quad \quad \quad \quad \quad  \quad  \quad  \quad  \quad \quad \quad \quad \quad  \quad  \quad  \quad  \quad \quad \hspace{1em} \triangleright  \ $case (d)}
                \State{{\sc Lift}$(c_{\delta_1})$}
            \ElsIf{$k_{\delta_0} > 0$}
                \State{$\pi(c_{\delta_0}) \gets F_{i\delta_0}[t_{i\delta_0}]$}{\color{blue}$\quad \quad \quad  \quad  \quad  \quad  \quad \quad \quad \quad \quad  \quad  \quad  \quad  \quad \quad \quad \quad \quad  \quad  \quad  \quad  \quad \quad \quad \hspace{0.5em} \triangleright  \ $case (a) and (c)}
            \Else
                \State{Find an edge $(c_{a}, c_{b}) \in G$ where $\exists \beta \ge 0, c_{\delta_0} = \pi^\beta(c_{a})$ and $\nexists \gamma , c_{\delta_0} = \pi^\gamma(c_{b}) $}{\color{blue}$ \quad \hspace{0.2em} \triangleright  \ $case (b)}                \State{$\pi(c_{a}) \gets \pi(c_{b})$}
                \State{{\sc Lift}$(c_{a})$}
            \EndIf
            \State{{\sc GenT}$(i+1, \ell)$}
        \EndWhile
        \EndProcedure
    \end{algorithmic}\label{alg:genT}
\end{algorithm}
\end{minipage}
\end{figure}

\begin{figure}[t]
    \centering
    \includegraphics[width=0.65\linewidth]{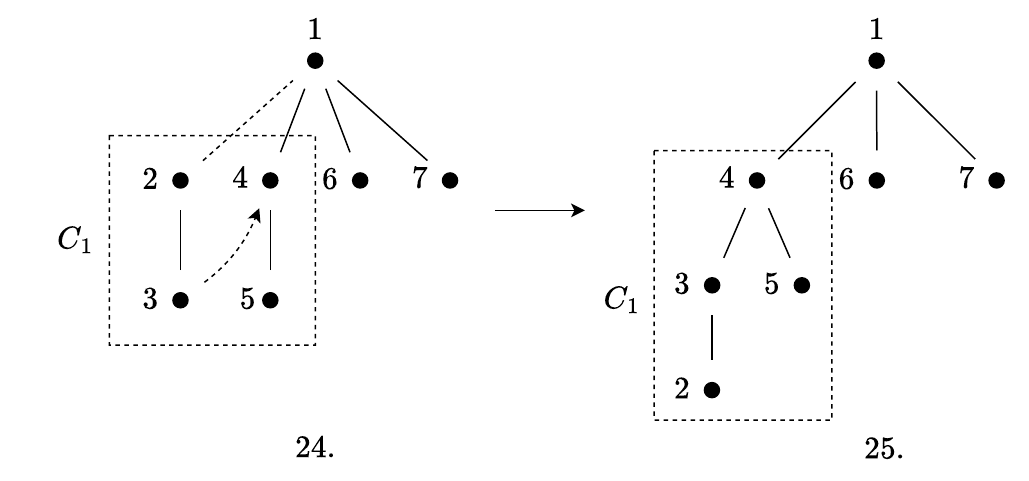}
    \caption{Tree 24 and tree 25 generated by {\sc GenG} for $G$. The change between tree 24 and tree 25 is an edge-exchange but not a pivot change.}
    \label{fig:spanningG24}
\end{figure}

\begin{theorem}
The algorithm {\sc GenG} generates an edge-exchange Gray code for the spanning trees of a general graph $G$.
\end{theorem}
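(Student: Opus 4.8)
The plan is to mirror the proof of Theorem~\ref{thm:GrayKN}. Two facts must be established: (i) every two consecutive spanning trees in the listing produced by {\sc GenG} differ by the removal of one edge and the addition of another, and (ii) every spanning tree of $G$ appears exactly once. Fact~(i) is the edge-exchange property, and together with (ii) it yields that the listing is an edge-exchange Gray code.

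For (i), I would first record the general-graph analogue of Lemma~\ref{lem:diff}. A transition in {\sc GenG} is driven by a single call to {\sc NextMR}, which by Lemma~\ref{lem:genk_gcg} changes the current mixed-radix string $t_{\ell,i}$ either in a single position or by swapping a $0$ and a non-zero character (the latter only when the string has exactly one non-zero character). The four cases (a)--(d) of Lemma~\ref{lem:diff} then transfer, with $P$ replaced by the set of admissible parents $F_{\ell,ij}\subseteq P$ dictated by the edges of $G$: a single-position change to a value $r>0$ is case~(a) or~(c), namely reassigning $\pi(c_{\delta_0})$ to the appropriate vertex of $F_{i\delta_0}$; a single-position change to $0$ is case~(b); and the $0$/non-zero swap is case~(d), handled by {\sc Lift}. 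The one genuinely new phenomenon, already flagged in the text preceding the theorem, is that case~(b) in a general graph may have to reattach a \emph{descendant} $c_a$ of the disconnected vertex $c_{\delta_0}$ to some vertex $c_b$ lying outside the subtree rooted at $c_{\delta_0}$, via an arbitrary edge $(c_a,c_b)\in G$; such an edge exists because the component $C_{\ell,i}$ is connected in $G$, so after deleting the edge from $c_{\delta_0}$ to $P$ the subtree rooted at $c_{\delta_0}$ must still be joined to the rest of $C_{\ell,i}$ by some $G$-edge, and {\sc Lift} re-roots that subtree consistently. Since this replacement edge need not share a vertex with the deleted edge, the transition is an edge exchange but not necessarily a pivot; in every case exactly one edge leaves and one enters, which is fact~(i).

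For (ii), I would state and use the general-graph analogue of Lemma~\ref{lem:exhaust}: fixing the subtree $S_{\ell-1}$ spanning levels $0$ to $\ell-1$, the spanning trees of $G$ extending it are in bijection with the tuples $(s_1,\dots,s_m)$, where $C_{\ell,1},\dots,C_{\ell,m}$ are the connected components of $C_\ell$ and each $s_i$ ranges over all mixed-radix strings for $C_{\ell,i}$ with its radix vector $k_{\ell,i}$, excluding the all-zero string $0^{|C_{\ell,i}|}$, which connectivity forbids. Then argue by contradiction as in Theorem~\ref{thm:GrayKN}. Suppose a spanning tree $T$ with tuple $(t_1,t_2,\dots,t_h)$ never appears. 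Inside a call {\sc GenT}$(1,\ell)$, the procedure iterates {\sc NextMR} over all mixed-radix strings of $C_{\ell,1}$ (excluding its all-zero string) and for each one recurses on {\sc GenT}$(i+1,\ell)$, which does the same for $C_{\ell,2}$, and so on; once all components at level $\ell$ are fixed, {\sc GenG} is invoked on level $\ell+1$. Hence every tuple $(s_1,\dots,s_m)$ is realized, and by the versatility of {\sc GenMR} (which may be started from any string, as guaranteed by the reflectable-language framework) it is realized no matter which string a preceding edge exchange left $t_{\ell,i}$ at. Therefore, if no listed tree has subtree matching $(t_1,\dots,t_\ell)$, none has subtree matching $(t_1,\dots,t_{\ell-1})$ either; descending to $\ell=1$, where $P_1=\{1\}$ and {\sc GenG} enumerates all valid mixed-radix configurations for every connected component of $\{2,\dots,n\}$, covering all admissible first components, we reach a contradiction. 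Uniqueness follows because the description above is a bijection, so no configuration can be produced twice.

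The main obstacle is the bookkeeping of the two-dimensional recursion rather than any single estimate: one must verify that interleaving the component-index recursion (over $C_{\ell,1},C_{\ell,2},\dots$) with the level recursion (from $\ell$ to $\ell+1$) visits each joint configuration exactly once, that after a case-(b) or case-(d) update the connected components and the admissible-parent lists $F_{\ell+1,\cdot}$ at deeper levels are recomputed correctly so that {\sc GenMR} resumes from the correct (possibly non-canonical) current string, and that {\sc Lift} always restores a valid rooted-tree encoding of an actual spanning tree (connectivity and acyclicity preserved, exactly one edge changed). Once these invariants are in place, the exhaustiveness count is a routine adaptation of the argument in Theorem~\ref{thm:GrayKN}.
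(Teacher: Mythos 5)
Your proposal follows essentially the same route as the paper's proof: the transition analysis reduces to the four cases of Lemma~\ref{lem:diff} with $P$ replaced by the admissible-parent lists, case (b) being justified by connectivity of the component together with the exclusion of the all-zero string by {\sc GenMR}, and exhaustiveness/uniqueness is shown by the same contradiction argument as in Theorem~\ref{thm:GrayKN} applied componentwise to the mixed-radix encodings. No substantive difference from the paper's argument.
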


\begin{proof}
The algorithm {\sc GenG}, with {\sc GenT}, applies the four cases in Lemma~\ref{lem:diff} to generate spanning trees of $G$. We analyze the edge exchanges in each case to confirm they produce an edge-exchange Gray code, where consecutive trees differ by exactly one edge exchange.

\begin{itemize}
\item {Case (a)}: $a_v$ updates from $0$ to $r > 0$. A vertex $v$ at level $q > \ell$ is disconnected from its parent $w$ and connected to a vertex $u$ at level $\ell-1$ where $u$ is the $r$-th smallest available parent for $v$. This removes edge $(v, w)$ and adds $(v, u)$, a pivot change (and thus an edge-exchange change), consistent with complete graphs in {\sc GenS};
\item {Case (c)}: $a_v$ updates from $r > 0$ to some $j > 0, j \ne r$. A vertex $v$ at level $\ell$ is disconnected from its parent $u$ at level $\ell-1$ and connected to another vertex $w$ at level $\ell-1$. This removes $(v, u)$ and adds $(v, w)$, a pivot change (and thus an edge-exchange change), consistent with complete graphs in {\sc GenS};
\item {Case (d)}: A bit swap updates $a_v = 1, a_w = 0$ to $a_w = 1, a_v = 0$. A vertex $u_1$ at level $\ell-1$ disconnects from its child $v$ at level $\ell$, reflecting $a_v \gets 0$. A vertex $w$ at level $q > \ell$, which is a descendant of $v$, disconnects from $\pi(w)$ and connects to the first valid parent $u_2$ at level $\ell$, reflecting $a_w \gets 1$. This removes $(v, u_1)$ and adds $(w, u_2)$ where $u_1$ may not be $u_2$, satisfying the edge-exchange Gray code property but violating the pivot Gray code property. The bit swap generated by {\sc GenMR} involves only swapping a 0 and a 1, and strings have only one non-zero character (Lemma~\ref{lem:genk_gcg}), the subtree rooted at $v$ must remain connected after the bit swap;

\item {Case (b)}: $a_v$ updates from $r > 0$ to $0$. A vertex $v$ at level $\ell$ is disconnected from its parent $u$ at level $\ell-1$, and the subtree rooted at $v$ is reconnected to a vertex $w$ in the same connected component $C_i$ at level $q > \ell$. 
In general graphs, $v$ may not be able to connect to such a vertex $w$ which is at level $q \ge \ell$ and not in the subtree rooted at $v$. 
Removing the edge $(v, u)$ requires adding an edge $(x, w)$, where $x$ is a descendant of $v$, to maintain connectivity.
Such a connection $(x, w)$ must exist because there are more than one vertex in the the same connected component $C_i$ at level $\ell$, as $t_{\ell,i}$ contains more than one non-zero character (otherwise, it would result in the string $0^{|t_{\ell,i}|}$, which {\sc GenMR} avoids generating).  
This edge exchange does not involve a common vertex when $v \neq x$, satisfying the edge-exchange Gray code property but violating the pivot Gray code property.

\end{itemize}

Thus, {\sc GenG} ensures that consecutive spanning trees differ by a single edge exchange, satisfying the edge-exchange Gray code property. To establish that {\sc GenG} generates all spanning trees of $G$ exactly once, we proceed by contradiction, following a similar approach to Theorem~\ref{thm:GrayKN} for complete graphs.

Suppose there exists a spanning tree $T$ of $G$ that does not appear in the listing generated by {\sc GenG}. Such a spanning tree can be encoded as a tuple $T = ((t_{1,1}, t_{1,2}, \ldots, t_{1,|C_1|}), (t_{2,1}, \ldots, t_{2,|C_2|}), \ldots, (t_{h,1}, t_{h,2}, \ldots, t_{h,|C_h|}))$, where $|C_i|$ denotes the number of connected components at level $i$ with respect to the subtree $S_{i-1}$ spanning levels 0 to $i-1$. If $T$ is absent, no spanning tree in the listing has a subtree $S_{h-1}$, spanning levels 0 to $h-1$, matching the first $h-1$ elements $((t_{1,1}, t_{1,2}, \ldots, t_{1,|C_1|}), \ldots, (t_{h-1,1}, t_{h-1,2}, \ldots, t_{h-1,|C_{h-1}|}))$ of $T$. Otherwise, {\sc GenT} would exhaustively generate a Gray code for $(t_{h,1}, t_{h,2}, \ldots, t_{h,|C_h|})$ to include $T$. Recursively applying this reasoning, no spanning tree has a subtree $S_{h-2}$ matching the first $h-2$ elements $((t_{1,1}, t_{1,2}, \ldots, t_{1,|C_1|}), \ldots, (t_{h-2,1}, t_{h-2,2}, \ldots, t_{h-2,|C_{h-2}|}))$ of $T$. 
Continuing applying the same reasoning implies no spanning tree matches $((t_{1,1}, t_{1,2}, \ldots, t_{1,|C_1|}))$. However, at level zero, the depth-first search or breadth-first search in {\sc GenG} (line 7) identifies all connected components, and {\sc GenT} generates all valid mixed-radix strings $t_{1,i}$ (via {\sc GenMR}), excluding $0^{|C_i|}$, covering all possible configurations for $(t_{1,1}, t_{1,2}, \ldots, t_{1,|C_1|})$. This contradicts the assumption that $T$ is missing. Thus, all spanning trees are generated exactly once, and the listing is an edge-exchange Gray code.
\end{proof}

\begin{theorem}
Spanning trees of complete graphs $G$ with $n$ vertices can be generated in edge-exchange Gray code order in $O(n^2)$ per tree, using $O(n^2)$ space.
\end{theorem}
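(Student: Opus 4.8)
The plan is to bound the running time by tallying the cost of each node in the recursion of {\sc GenG} (Algorithm~\ref{alg:genG}) together with its companion {\sc GenT} (Algorithm~\ref{alg:genT}), and then amortizing the nonconstant operations against the spanning trees that are output; throughout, write $m\le\binom n2=O(n^2)$ for the number of edges of $G$. First I would classify the per-call work. A call {\sc GenG}$(V_{\ell-1},t_{\ell-1})$ forms $P_\ell$ and $C_\ell$ in $O(n)$ time, runs a depth-first (or breadth-first) search on the subgraph induced by $C_\ell$ to obtain its components $\mathcal C_\ell$ in $O(|C_\ell|+m_\ell)=O(n^2)$ time, and builds the parent-candidate lists $F_{\ell,ij}$ and the initial mixed-radix strings $t_{\ell,i}$ in a further $O(m)=O(n^2)$ time. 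A call {\sc GenT}$(i,\ell)$ reads off the radices $k_{\ell,i}$ in $O(|C_{\ell,i}|)$ time and repeatedly invokes {\sc NextMR}; by Lemma~\ref{lem:genk_gcg} and the reflectable-language framework of~\cite{li2009gray}, this advances the mixed-radix Gray code and reports the change set $\delta$ in amortized $O(1)$ time per string. Per advance, rewiring $\pi$ is $O(1)$, a {\sc Lift} costs $O(n)$, and the case-(b) search for a graph edge $(c_a,c_b)$ crossing the boundary of the subtree rooted at $c_{\delta_0}$ costs $O(m)=O(n^2)$; all remaining operations are $O(1)$. Hence every elementary step is $O(n^2)$.

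Next I would charge these costs to output trees. Fix a call {\sc GenG}$(V_{\ell-1},\cdot)$ that ultimately outputs $N_{\mathrm{sub}}\ge 1$ trees. Two facts control its {\sc GenT}-work. First, the number of recursive {\sc GenG}$(V_\ell,\cdot)$ calls it issues equals $\prod_i s_{\ell,i}$, where $s_{\ell,i}$ is the number of valid mixed-radix strings for component $C_{\ell,i}$, and each such combination yields at least one spanning tree; so this number is at most $N_{\mathrm{sub}}$, and the tree-counts of these subcalls partition $N_{\mathrm{sub}}$. Second, the number of {\sc NextMR} advances telescopes: $\sum_i\bigl(\prod_{j<i}s_{\ell,j}\bigr)(s_{\ell,i}-1)=\prod_i s_{\ell,i}-1\le N_{\mathrm{sub}}-1$. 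Thus the {\sc GenT}-work of this call, counting the $O(n^2)$ worst-case cost of each advance (case-(b) search) and each {\sc Lift}, is $O(N_{\mathrm{sub}}\,n^2)$. Combining with the $O(n^2)$ setup cost of the call itself and recursing on its $\le N_{\mathrm{sub}}$ subcalls (whose tree-counts sum to $N_{\mathrm{sub}}$), an induction on the recursion would give total running time $O(N\,n^2)$, i.e.\ $O(n^2)$ per tree, where $N$ is the number of spanning trees of $G$. The $O(n^2)$ space bound follows as in Theorem~\ref{thm:gens_complexity}: along the active recursion path the {\sc GenMR} direction and permutation arrays use $O(n)$ space per active level, and since each edge of $G$ lies in at most one candidate list $F_{\ell,ij}$, all such lists together use $O(m)=O(n^2)$ space.

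The step I expect to be the main obstacle is making the induction of the previous paragraph actually close: the $O(n^2)$ setup billed to a call with $N_{\mathrm{sub}}=1$ is admissible only if such \emph{forced} calls — those issuing exactly one recursive call because every component at their level admits a unique valid string — do not chain into a long stretch of the recursion that performs $\Theta(n)$ depth-first passes while producing a single tree. The resolution I would pursue is a dichotomy: if the subgraph induced on $C_\ell$ is sparse, then the $O(|C_\ell|+m_\ell)$ pass is cheap and the shrinking of the sets along the chain makes the passes sum to $O(n^2)$; whereas if some component of $C_\ell$ is dense, then the lone vertex placed at level $\ell$ (or one of its forced descendants) becomes adjacent to many still-unplaced vertices, so the recursion branches within $O(1)$ further levels and produces enough trees to absorb the cost. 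Quantifying this — ``dense level $\Rightarrow$ many downstream trees; sparse level $\Rightarrow$ cheap pass'' — so that the accumulated setup cost is provably $O(N\,n^2)$, is the crux; once it is in place the bound of $O(n^2)$ per tree follows. (For a complete graph $G$ this obstacle disappears entirely, since $C_\ell$ is always a single dense component and the recursion branches at every level.)
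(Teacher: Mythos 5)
Your proposal follows essentially the same route as the paper's proof: both isolate the three operations that distinguish \textsc{GenG} from \textsc{GenS} (the component search on $C_\ell$, the construction of the candidate-parent lists $F_{\ell,ij}$, and the case-(b) search for a reconnecting edge), bound each by $O(n^2)$, treat all remaining work as constant amortized exactly as in Theorem~\ref{thm:gens_complexity}, and obtain the $O(n^2)$ space bound from the same inventory of data structures. The difference is one of rigor rather than of method. The paper's proof stops at the observation that these $O(n^2)$ steps ``may be required for each spanning tree generated'' and concludes the per-tree bound directly; it carries out neither your telescoping count of \textsc{NextMR} advances per call ($\prod_i s_{\ell,i}-1$) nor any explicit charging of per-call setup cost to output trees. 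In particular, the issue you single out as the crux---amortizing the DFS/BFS and $F$-list construction over chains of ``forced'' \textsc{GenG} calls that branch only once and emit no tree of their own, which naively could accumulate $\Theta(n)$ passes of cost up to $\Theta(n^2)$ per emitted tree---is not addressed in the paper at all; the published argument implicitly charges one round of setup per tree, which is exactly the point your dichotomy (``dense level $\Rightarrow$ many descendant trees; sparse level $\Rightarrow$ cheap pass'') tries to justify. So you have not missed a hidden lemma: you are attempting to prove something the paper simply asserts.

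As a self-contained argument your proposal is, by your own admission, incomplete---the dichotomy is stated but not proved, and the charging scheme also tacitly needs the fact that every \textsc{GenG} call on a partial configuration leads to at least one completed spanning tree---but none of the steps you do carry out is incorrect, the $O(n^2)$ per-operation bounds and the space accounting match the paper's, and your write-up is if anything more detailed than the published proof of this theorem.
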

\begin{proof}
The algorithm {\sc GenG}, with {\sc GenT}, extends {\sc GenS} (Algorithm~\ref{alg:spanning_complete}), which generates spanning trees of a complete graph $K_n$ in constant amortized time per tree using $O(n^2)$ space. For general graphs, {\sc GenG} introduces the following additional steps that increase the time complexity while maintaining the space complexity:
\begin{figure}[]
    \centering
    \includegraphics[width=0.84\linewidth]{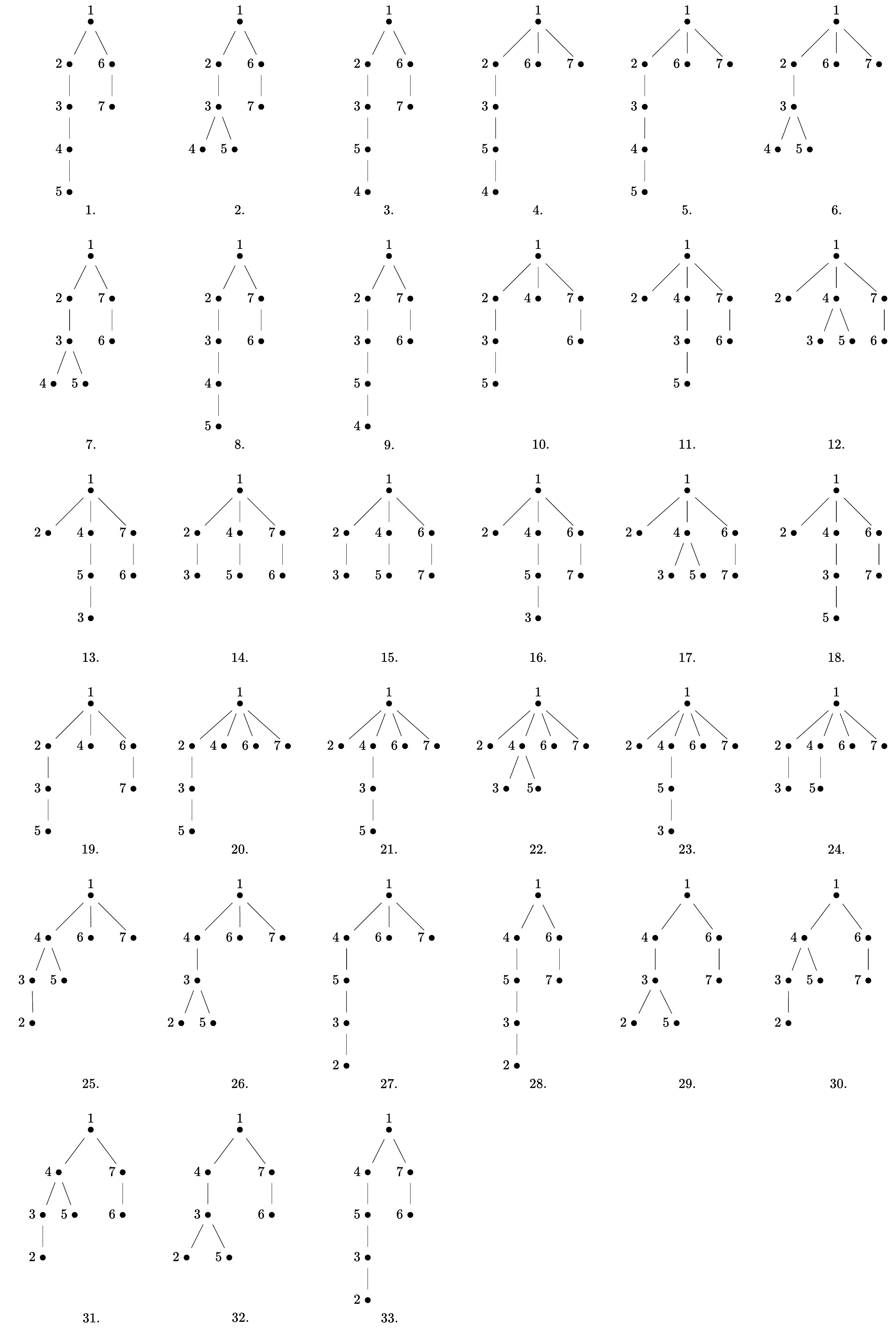}
    \caption{An edge-exchange Gray code for spanning trees of the graph $G$ in Figure \ref{fig:general_graph}.}
    \label{fig:spanningG}
\end{figure}
\begin{itemize}
\item Identify connected components in $C$ using depth-first search or breadth-first search, which takes $O(n^2)$ time in the worst case (line 7 of {\sc GenG}, Algorithm~\ref{alg:spanning_general});
\item For each vertex $c_{i,j}$ in the connected component $C_i$, construct the list $F_{i,j}$ of vertices in $P$ adjacent to $c_{i,j}$ in $G$, which takes $O(n^2)$ time across all vertices (lines 8--12 of {\sc GenG});
\item In case (b), identify a descendant $u$ of vertex $v$ and an edge $(u, w)$ to maintain connectivity, which takes $O(n^2)$ time in the worst case (lines 18--20 of {\sc GenT}, Algorithm~\ref{alg:spanning_general2}).
\end{itemize} 
In the worst case, the third step may be required for each spanning tree generated. Other operations, such as updating $t_{\ell,i}$ via {\sc GenMR} and recursive calls, take constant amortized time, consistent with {\sc GenS}. Thus, the total time per tree is dominated by the $O(n^2)$ operations, while the space usage remains $O(n^2)$ for storing $P$, $C$, $\mathcal{C}$, $t$, and $F$. Hence, {\sc GenG} generates spanning trees of a general graph $G$ in edge-exchange Gray code order in $O(n^2)$ time per tree, using $O(n^2)$ space.
\end{proof}

 The edge-exchange Gray code for spanning trees of the Petersen graph is provided in Appendix E. Each spanning tree is represented as a rooted tree with root at vertex 1 using a \emph{compact representation}, where a tree with $n$ vertices is encoded as a string $a_1 a_2 \cdots a_{n-1}$ of length $n-1$, with $a_i$ denoting the parent of vertex $i+1$. The compact representations of the rooted trees for graph $G$ in Figure \ref{fig:general_graph} are listed in Figure~\ref{fig:spanningG} as follows:

 \begin{center}
     123416, 123316, 125316, 125311, 123411, 123311, 123371, 123471, 125371, 121371, 141371,\\ 141471, 151471, 121471, 121416, 151416, 141416, 141316, 121316, 121311, 141311, 141411,\\ 151411, 121411, 341411, 341311, 351411, 351416, 341316, 341416, 341471, 341371, 351471. 
 \end{center}

\section{Final remarks.}
\label{sec:remark}

The algorithm {\sc GenG} generates spanning trees of a general graph \( G \) with \( n \) vertices in edge-exchange Gray code order in \( O(n^2) \) time per tree, using \( O(n^2) \) space. Optimizations are possible when specific graph properties are known. For instance, for complete graphs \( K_n \), {\sc GenS} (Algorithm~\ref{alg:spanning_complete}) is a specialized version of {\sc GenG} that leverages the complete connectivity of \( K_n \) to achieve constant amortized time per tree. A similar optimization can be applied to complete bipartite graphs \( K_{m,n} \), enabling generation in constant amortized time per tree.

\begin{theorem} \label{thm:bipartite_complexity}
Spanning trees of a complete bipartite graph $K_{m,n}$ with $m + n$ vertices can be generated in edge-exchange Gray code order in constant amortized time per tree, using $O((m+n)^2)$ space.
\end{theorem}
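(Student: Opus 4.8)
The plan is to obtain {\sc GenG} specialised to $K_{m,n}$ and to show that, thanks to the bipartite structure, each of the three operations that cost $O((m+n)^2)$ in the general-graph analysis becomes either $O(1)$ or rare, so the argument collapses to the one used for complete graphs in Theorem~\ref{thm:gens_complexity}. The starting observation is that, from the algorithm's point of view, $K_{m,n}$ is ``almost complete'': writing the bipartition as $V=A\cup B$, rooting at vertex~$1$, and noting that the levels of any spanning tree strictly alternate between $A$ and $B$, whenever {\sc GenG} processes level $\ell$ with $P$ (the level-$(\ell-1)$ vertices) contained in one part $Q\in\{A,B\}$, the subgraph induced on the set $C$ of still-unplaced vertices is the complete bipartite graph on $(C\cap\bar Q,\,C\cap Q)$. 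Thus the only departure from $K_n$ is the part constraint on who may occupy level $\ell$; there are no ``missing edges inside a part''. I would take $A$ and $B$ as given (or compute them once in $O((m+n)^2)$ time) and maintain the running counts $|C\cap A|$ and $|C\cap B|$.

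From this, the first two bottlenecks disappear. The connected components of $C$ are forced by the bipartition: if both $C\cap\bar Q$ and $C\cap Q$ are nonempty there is a single component, and otherwise $C$ is an independent set whose components are singletons (and the latter can only arise with $C\subseteq\bar Q$, since a $Q$-side vertex can never attach to $P$). Hence the breadth-first search on line~7 of {\sc GenG} is replaced by a constant-time test on the two counts. Likewise, for $c_{ij}\in\bar Q$ the neighbour list of $c_{ij}$ inside $P$ is all of $P$, and for $c_{ij}\in Q$ it is empty, so the lists $F_{\ell,ij}$ need not be materialised at all: a one-bit part label per vertex together with $|P|$ supplies every base $|F_{\ell,ij}|+1$ in $O(1)$. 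Maintaining $P$, $C$, the counts, and the connection strings $t_{\ell,i}$ incrementally from the change set $\delta$ returned by {\sc NextMR} makes the per-configuration bookkeeping $O(1)$, and the ``forced'' configurations — a single vertex attaching to its unique available parent, or an entire $\bar Q$-side each forced to the unique parent when $|P|=1$ — are recognised in $O(1)$ from the counts and simply printed.

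The hard part will be the third bottleneck, case~(b) of Lemma~\ref{lem:diff}. For $K_n$, {\sc GenS} reattaches a just-detached vertex $v$ directly to a sibling at level $\ell$; this is impossible in $K_{m,n}$ because all level-$\ell$ vertices share the part $\bar Q$ and are mutually non-adjacent. The fix is to reattach \emph{across} the bipartition: if $v$ has a child $x$ (necessarily in $Q$), attach $x$ to a sibling $c_j$ of $v$ that has a parent in $P$; if $v$ is a leaf, attach $v$ itself to a child (necessarily in $\bar Q$, hence adjacent to $v$) of such a sibling $c_j$, and call {\sc Lift}. To make the selection $O(1)$ I would maintain, per level $\ell$, a doubly linked list of the level-$\ell$ vertices currently having a parent in $P$, each carrying a pointer to one of its children; the existence of a suitable pair is precisely the argument used in the proof that {\sc GenG} produces an edge-exchange Gray code, since the active string $t_{\ell,i}$ has more than one nonzero character and the reattachment needs to follow only one tree edge. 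The accompanying {\sc Lift} costs $O(|C|)$ but, by Lemma~\ref{lem:genk_gcg}, is triggered by a $0$--$1$ swap at most once per $2^t$ strings, so it amortises to $O(1)$ per tree.

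Finally I would assemble the bound as in Theorem~\ref{thm:gens_complexity}: each {\sc NextMR} step is $O(1)$ amortised, every other per-configuration operation is $O(1)$ by the incremental bookkeeping above, each occurrence of {\sc Lift} is paid for by the $\Theta(2^t)$ strings around it, and the only situation in which a {\sc GenG}/{\sc GenT} call could do work superlinear in its output is a component exhausted by a single forced configuration, which the counts let us detect and handle in $O(1)$. Summing over the at most $m+n$ levels of the rooted tree then gives constant amortised time per spanning tree, and the space is dominated, exactly as for $K_n$, by one direction array of length $\le m+n$ per level together with the parent-pointer, part, and sibling-list tables, i.e.\ $O((m+n)^2)$. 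I expect the genuine obstacle to be case~(b): making the cross-bipartition reattachment simultaneously correct (it must restore connectivity and count as a single edge exchange) and $O(1)$ amortised, and verifying that, together with the incremental maintenance of $P$, $C$, the counts, and the strings $t_{\ell,i}$, no residual $\Omega(|C|)$ cost survives per tree.
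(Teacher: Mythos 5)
Your proposal matches the paper's proof in essence: specialize {\sc GenS} to the complete bipartite structure and repair only case (b) by reattaching across the bipartition (a child of the detached vertex to a sibling that still has a parent in $P$, or the detached leaf to a child of such a sibling), with the target located in $O(1)$ via linked structures over the level-$\ell$ vertices with parents — exactly the paper's linked-hash-table construction. Two small slips do not affect correctness: the sibling's child lies in $Q$, not $\bar Q$ (which is precisely why it is adjacent to $v$), and the case-(b) {\sc Lift} needs no swap-frequency amortization — that argument belongs to case (d), while here the lift reverses only the single edge between $v$ and its child and is therefore $O(1)$ outright, which is the paper's own argument.
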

\begin{proof}
The proof adapts the {\sc GenS} algorithm to complete bipartite graphs \( K_{m,n} \). In \( K_{m,n} \), with partitions \( A \) and \( B \) containing $m$ and $n$ vertices respectively, the vertices in set \( P \) must belong to the same partition $A$ or $B$. 
We then partition \( C \) into two sets: one set can connect to \( P \), and the other cannot. Most operations follow those in {\sc GenS}, except for line 19 of Algorithm~\ref{alg:spanning_complete}, where a vertex \( c_i \) at level \( \ell \) cannot connect to another \( c_j \) at the same level due to the bipartite constraint. The strategy is to connect \( c_i \) to a child of a neighbor at level \( \ell \) if such a child exists.
To achieve constant amortized time per tree, we optimize finding valid connections using a \emph{linked hash table}, where each vertex stores its children in a hash table with linked lists. 
Implementation details of the linked hash table are similar to $C'$ discussed in Appendix \ref{sec:gens_opt}. 
Each vertex maintains the {linked hash table} with vertex indices as keys and linked lists of its children.
At each level \( \ell \), a {linked hash table} \( C' \) is constructed, containing vertices \( c_i \in C' \) at level \( \ell \) that connect to a vertex at level \( \ell-1 \) and have at least one child, with constant amortized time lookups and updates.
When progressing to level \( \ell+1 \), \( C' \) is updated in constant amortized time to maintain this property.
Now in case (b) when \( t_\ell \) sets \( a_i = 0 \) for a vertex \( c_i \) at level \( \ell \), \( c_i \) disconnects from its parent \( \pi(c_i) \) by removing the entry from the parent’s linked hash map.
If \( c_i \) has at least one child (detected in constant time via its linked hash table), a valid connection \( (c_a, c_b) \) is identified in constant time, where \( \pi(c_a) = c_i \) and \( c_b \) is the last vertex in \( C' \).
The {\sc Lift} operation takes constant time, as it modifies only \( c_i \) and \( c_a \).
Otherwise, if \( c_i \) has no children, a valid connection \( (c_i, c_j) \) is used, where \( \pi(c_j) \) is the last vertex in \( C' \) and can be found in constant time. 
\end{proof}

\newpage
Similar strategies can be applied to generate Gray codes for spanning trees of fan graphs and wheel graphs. 
\begin{theorem} \label{thm:fan_complexity}
Spanning trees of a fan graph $F_{n}$ with $n$ vertices can be generated in edge-exchange Gray code order in $O(n)$-amortized time per tree, using $O(n^2)$ space.
\end{theorem}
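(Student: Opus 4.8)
The plan is to specialize {\sc GenG} (Algorithm~\ref{alg:spanning_general}) to the fan graph $F_n$, labelling the hub as vertex $1$ so that every spanning tree is rooted at the hub, and then to show that the three steps responsible for the $O(n^2)$ term in the general analysis each collapse to $O(n)$ for $F_n$, while every other operation stays constant amortized exactly as in {\sc GenS} (Theorem~\ref{thm:gens_complexity}). The key structural fact is that, once the hub sits at level $0$, it never belongs to $C$ again, so at every later level the induced subgraph $G[C]$ is a subgraph of the path $v_1 - v_2 - \cdots - v_{n-1}$ and hence a disjoint union of path intervals. Consequently the connected components at level $\ell$ (line~7 of {\sc GenG}) are exactly the maximal runs of consecutive path vertices not yet placed in levels $0,\dots,\ell-1$, and they are found by one left-to-right scan of $C$ in $O(|C|) = O(n)$ time, with no general DFS/BFS needed.

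For the adjacency lists $F_{\ell,ij}$ (lines~8--12 of {\sc GenG}) I would use that every path vertex of $F_n$ has degree at most $3$, so for $\ell \ge 2$ each $c_{ij} \in C$ has at most two neighbours in $P$ (the hub is already gone) and for $\ell = 1$ exactly one; hence $|F_{\ell,ij}| = O(1)$, building all the lists across a level costs $\sum_i O(|C_i|) = O(n)$, and the radix array $k_{\ell,i}$ handed to {\sc GenMR} has $O(1)$-bounded entries, so each {\sc NextMR} step stays constant amortized via Lemma~\ref{lem:genk_gcg} and the reflectable-language framework. The remaining source of cost is case~(b) of Lemma~\ref{lem:diff} inside {\sc GenT} (lines~18--20 of Algorithm~\ref{alg:spanning_general2}): when the mixed-radix digit of $c_{\delta_0}$ drops to $0$, one must find an edge $(c_a,c_b) \in F_n$ with $c_a$ a descendant of (or equal to) $c_{\delta_0}$ and $c_b$ outside the subtree rooted at $c_{\delta_0}$, both lying in the same component $C_i$. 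Because $t_{\ell,i}$ has at least two non-zero digits (otherwise it would equal $0^{|C_i|}$, which {\sc GenMR} never outputs), the subtree rooted at $c_{\delta_0}$ is a proper non-empty subset of the path interval $C_i$, so some path edge of that interval has exactly one endpoint in it; such an edge is a valid choice and is located by scanning $C_i$ together with the parent pointers restricted to $C_i$ in $O(|C_i|) = O(n)$ time, after which the {\sc Lift} operation and the accompanying parent-pointer updates cost $O(n)$ in the worst case, which is within the $O(n)$ budget.

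Combining these, each spanning tree of $F_n$ is emitted after $O(n)$ amortized work: the component scan, the degree-bounded list construction, and the case-(b) search each contribute $O(n)$, while the {\sc GenMR} updates, parent-pointer changes and recursive descents contribute constant amortized time as in {\sc GenS}; and the space is dominated by the per-level direction arrays of {\sc GenMR} together with the parent and component bookkeeping, giving $O(n^2)$ overall. I expect the main obstacle to be the case-(b) analysis: one must verify both that the crossing edge always exists (which follows from the interval structure of $C_i$ and the ``at least two non-zero digits'' property) and that using it preserves the edge-exchange Gray code property, i.e. that the subtree rooted at $c_{\delta_0}$ remains connected after the digit swap, which is the same argument used in the proof that {\sc GenG} yields an edge-exchange Gray code; one should also double-check that the interval structure of the components is genuinely maintained at every level once the hub has been removed, since that is precisely what makes the $O(n)$ scans legitimate rather than $O(n^2)$.
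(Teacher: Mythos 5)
The paper itself states this theorem without a proof (it only says ``similar strategies can be applied'' after the complete bipartite case), so there is no official argument to match; judged on its own, your structural observations are sound and in the intended spirit: with the hub at level $0$, the sets $C$ induce subpaths, components are maximal runs found by a scan, every vertex has at most two (at level $1$, exactly one) neighbours in $P$ so all radices are $O(1)$, and the case-(b) crossing edge exists by the same ``at least two non-zero digits'' argument used for general graphs and can be located in $O(n)$.

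The genuine gap is the amortization itself, which is asserted rather than proved. The paper's constant-amortized argument for $K_n$ charges the $O(t)$ per-call preprocessing against the at least $t-1$ configurations that the call's loop produces; for $F_n$ this charging breaks, because a call can have $|C|=\Theta(n)$ remaining vertices yet admit exactly one valid configuration (a one-sided interval forces a unique attachment at every subsequent level). Concretely, in the branch where only vertex $2$ is attached to the hub at level $1$, every later level is forced, so a chain of $\Theta(n)$ calls, each performing your $\Theta(|C|)$ component scan and list construction, produces a single spanning tree (the path $1\!-\!2\!-\!\cdots\!-\!n$) at cost $\Theta(n^2)$; similar forced legs hang off every two-sided interval. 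So ``each of the three steps costs $O(n)$ per call'' does not yield $O(n)$ amortized per tree: you must additionally show that the total number of calls (equivalently, of one-configuration forced prefixes) is $O(\#\text{trees})$, or more precisely that $\sum_{\text{calls}}|C|=O(n\cdot\#\text{trees})$ -- a counting argument that such deep forced chains are exponentially rare relative to the $\Theta(\varphi^{2n})$ spanning trees of $F_n$ -- or else modify the algorithm in the spirit of the paper's bipartite proof so that per-call work is proportional to the number of configurations generated (maintaining components and the $F$-lists incrementally rather than rescanning $C$ at every level), leaving only the case-(b) search and {\sc Lift} as the $O(n)$-cost events. A smaller point to verify is that marking the descendants of $c_{\delta_0}$ for the case-(b) search is done by a subtree traversal (child lists) rather than repeated ancestor walks, so that it is genuinely $O(|C_i|)$ and not $O(|C_i|\cdot\mathrm{depth})$.
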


\begin{theorem} \label{thm:wheel_complexity}
Spanning trees of a wheel graph $W_{n}$ with $n$ vertices can be generated in edge-exchange Gray code order in $O(n)$-amortized time per tree, using $O(n^2)$ space.
\end{theorem}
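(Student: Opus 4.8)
The plan is to execute {\sc GenG} on the wheel $W_n$ while replacing each of its three $O(n^2)$-time operations — finding the connected components of $C$ (line~7 of {\sc GenG}), building the adjacency lists $F_{\ell,ij}$ (lines~8--12), and the case-(b) reconnection together with {\sc Lift} (lines~18--20 of {\sc GenT}) — by an $O(n)$-time procedure that exploits the rigid shape of $W_n$, and then to show these costs amortise to $O(n)$ per spanning tree. I would relabel the hub of $W_n$ as vertex~$1$ and root every spanning tree there, as in Section~\ref{sec:pivot}; since relabelling is a bijection on spanning trees, any edge-exchange Gray code obtained this way gives one for $W_n$.

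The driving observation is that the edges of $W_n$ are exactly the $n-1$ rim edges (a cycle on $\{2,\dots,n\}$) together with the $n-1$ spokes at the hub. Consequently, for any subtree $S_{\ell-1}$ rooted at the hub, the set $C$ of not-yet-placed vertices contains no copy of the hub, so the subgraph of $W_n$ it induces uses only rim edges and is a disjoint union of paths — the \emph{arcs} of the rim cut by the already-placed rim vertices. From this I would derive: (i) the components $\mathcal{C}_\ell$ are exactly these arcs and are found by a single walk around the rim in $O(n)$ time; (ii) every rim vertex has degree at most $3$ in $W_n$, so every $F_{\ell,ij}$ has constant size and all of them are built in $O(n)$ time; and (iii) in case~(b) the subtree that detaches from its level-$(\ell-1)$ parent is a contiguous segment of one arc, so a descendant $x$ carrying an edge $(x,w)$ to a vertex $w$ outside that subtree must be one of the (at most two) ends of the segment, and $(x,w)$ together with the subsequent {\sc Lift}$(x)$ is handled in $O(n)$ time. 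As in the analysis of {\sc GenG}, Lemma~\ref{lem:genk_gcg} ensures that the swap produced by {\sc GenMR} in case~(b) keeps that segment internally connected, so one added edge suffices.

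For the amortisation I would isolate the \emph{forced} calls of {\sc GenG}: a call at level~$\ell$ is forced when each arc of $C$ admits exactly one admissible string $t_{\ell,i}$, which — since for $\ell\ge 2$ the parent set $P_\ell$ contains no hub, leaving every interior arc-vertex with no admissible parent — is exactly when each arc reaches $P_\ell$ at a single end. The central claim is that a forced call stays forced until $C$ is empty: peeling the unique reachable end of each arc yields strictly shorter arcs whose new far ends still cannot reach the next level. Hence the spanning tree below a forced call is uniquely determined and can be written down directly in $O(|C|)=O(n)$ time, so the entire chain of forced recursive calls is bypassed. After contracting these chains, every internal node of the computation tree has at least two children, so the number of branching nodes is at most the number of leaves, i.e.\ the number of spanning trees of $W_n$; charging the $O(n)$ per-node scan/adjacency cost, the $O(n)$ cost of each {\sc Lift} (one per {\sc GenMR} sublisting, and {\sc GenT}'s nested calls make these total a constant times the branching factor of the node), and the $O(n)$ cost of each forced tail to the leaves below, the total is $O(n)$ times the number of spanning trees. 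The space remains $O(n^2)$, inherited from {\sc GenG}: an $O(n)$-length direction array for {\sc GenMR} at each of the $O(n)$ levels, plus parent pointers, arc bookkeeping, and the concatenated strings. The fan graph $F_n$ (Theorem~\ref{thm:fan_complexity}) is treated by the identical argument with the rim cycle replaced by a rim path, so that $C$ still induces a disjoint union of paths.

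The step I expect to be the main obstacle is this forced-chain analysis — proving that ``forced'' propagates down the recursion, so that a forced call's descendants form one directly computable spanning tree rather than one that must be simulated level by level, and making the contraction explicit enough that the branching nodes are genuinely bounded by the number of leaves and the per-node work is genuinely $O(n)$. It is also the point at which the bound becomes $O(n)$-amortised rather than constant: unlike the complete bipartite case of Theorem~\ref{thm:bipartite_complexity}, where a linked hash table localises every update, the arc recomputation and the case-(b) reconnection on a rim cycle cannot be maintained in constant amortised time.
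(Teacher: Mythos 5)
The paper does not actually prove this statement: Theorems~\ref{thm:fan_complexity} and~\ref{thm:wheel_complexity} are stated bare, accompanied only by the remark that ``similar strategies'' to the complete-bipartite case (Theorem~\ref{thm:bipartite_complexity}, which adapts {\sc GenS} with a linked hash table to localise the case-(b) reconnection) can be applied, so there is no paper proof to match yours against. Your proposal supplies a genuine argument, and its route is structural rather than data-structural: you exploit that after removing placed vertices the unplaced part of $W_n$ consists of rim arcs, that every $F_{\ell,ij}$ has constant size, that the case-(b) reconnection must occur at an end of a contiguous rim segment, and — the essential point you correctly identified — that ``forced'' levels (each arc reachable from $P_\ell$ through exactly one admissible connection) propagate downward, so a forced call determines its unique spanning tree and can be emitted in $O(n)$ time instead of being simulated level by level. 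This bypass is really needed: without it a single spanning tree whose construction consumes the rim one vertex per level incurs $\Theta(n)$ levels of $\Theta(n)$ preprocessing, i.e.\ $\Theta(n^2)$, so the naive per-level accounting only gives $O(n^2)$ amortised; your contraction plus the ``every internal node has $\ge 2$ children'' count is what brings it to $O(n)$. The argument is sound, with three small points to tighten: (i) at level~1 the unplaced vertices induce the whole rim \emph{cycle}, not a path, though this does not affect any of the $O(n)$ bounds or the branching count; (ii) the charging scheme implicitly uses that \emph{every} admissible configuration (not just the initial one) extends to at least one spanning tree — true, because excluding $0^{|C_i|}$ per arc guarantees each residual sub-arc abuts a newly placed vertex, but this should be stated, since it is what makes ``number of leaves $\le$ number of spanning trees'' usable; (iii) rather than claiming {\sc Lift}s number ``a constant times the branching factor,'' it is cleaner to bound the total number of {\sc GenMR} transitions at a node by $O(\text{number of its children})$ and charge $O(n)$ per transition, which already suffices. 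Compared with the paper's implied plan of engineering constant- or near-constant-time updates in the style of Theorem~\ref{thm:bipartite_complexity}, your forced-chain contraction is a different and self-contained way to reach the stated $O(n)$-amortised bound, at the cost of modifying the control flow of {\sc GenG} rather than only its data structures.
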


\noindent However, the Gray codes produced for \( K_{m,n} \), \( F_n \), and \( W_n \) are edge-exchange Gray codes, not pivot Gray codes, similar to those for general graphs. 
Future research directions include developing pivot Gray codes for other special graph classes (for example, planar graphs would be interesting). Another avenue is to develop ranking and unranking algorithms for the spanning trees generated by {\sc GenG}, which may require ranking and unranking the underlying Gray codes for \( k \)-ary or mixed-radix strings. 

\section*{Acknowledgments.}
This research is supported by the Macao Polytechnic University research grant (Project code: RP/FCA-02/2022).


{\small
\bibliographystyle{siamplain}
\bibliography{myrefs}
}

\newpage
\appendix
\section{Optimization for {\sc GenS}.}\label{sec:gens_opt}

In this section, we present optimizations for {\sc GenS} (Algorithm~\ref{alg:spanning_complete}) that enable the generation of pivot Gray codes for spanning trees of \( K_n \) in constant amortized time per tree, as detailed in Theorem~\ref{thm:gens_complexity}.

\textbf{(Line 8 -- 9) Computing the initial $t_\ell$.} The initial $t_\ell = b_1b_2 \cdots b_{\vert C\vert}$ describes the connections between level $\ell-1$ and level $\ell$. We construct a hash table $m$ with $t$ key-value pairs. The keys of the hash table $m$ are the indices $i$ of each $v_i \in V$. The values for each element in $m$ are set to 0 by default, and the value is updated by traversing $p_i \in P$, setting $m[p_i] \gets i$. Then we update each $b_i \gets m[\pi(c_i)]$ where $b_i$ indicates that $\pi(c_i) = p_{b_i}$. 

\textbf{(Line 19) Connect $c_i$ to a sibling $c_j$ at level $\ell$.}
A hash table $C^\prime$ chaining with a doubly linked list can be constructed for optimizing the operation of connecting a disconnected vertex $c_i$ to another vertex $c_j$ in level $\ell$. The structure of $C^\prime$ is shown in Figure \ref{fig:cprime}. $C^\prime$ has at most $\vert C \vert$ key-value pairs, where each key of $c_i$ points to a tuple $(-1, -1)$ initially. A pointer $C^\prime_{-1}$ pointing to the last activated vertex is set to $(-1)$ by default. For each $c_i \in C$ where $b_i > 0$, we activate $c_i$ by setting $C^\prime [i] \gets (C^\prime_{-1}[0], C^\prime_{-1})$ then $C^\prime_{-1}[0] \gets i$. 
When a vertex connects to a $p_i \in P$, we activate this vertex by the operation described above.
When a vertex $c_j$ disconnects from $\pi(c_j) \in P$, we link $C^\prime[C^\prime[j][0]]$ and $C^\prime[C^\prime[j][1]]$ to maintain the structure of doubly linked list and set $C^\prime[j] \gets (-1, -1)$ to deactivate $c_j$. Because the activated vertices in $C^\prime$ are in level $\ell$, so they must not be descendants of $c_j$. The valid link $(c_j, c_{C^\prime_{-1}})$ can be found in constant time. 

\begin{figure}[h]
    \centering
    \includegraphics[width=0.6\linewidth, trim={85 5 80 0}, clip]{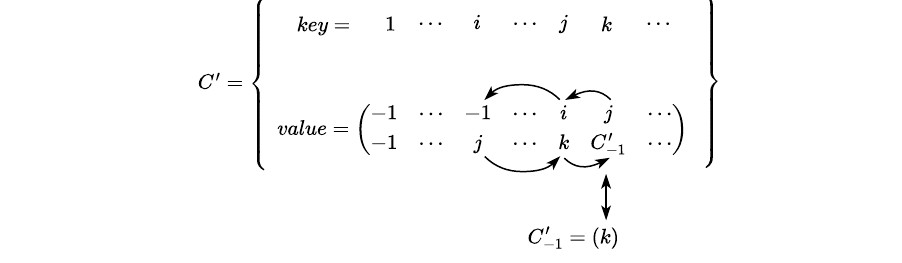}
    \caption{Structure of hash table $C^\prime$ chaining with a linked list.}
    \label{fig:cprime}
\end{figure}

\textbf{(Line 2 -- 6) Output spanning tree.}
The original {\sc GenS} outputs a spanning tree after traversing the input $V$ and ensuring $\vert C \vert = 0$, which takes $O(t)$ time complexity. We optimize the output operation leveraging $C^\prime$ mentioned above.
Note that $C^\prime$ contains the vertices in level $\ell$. We can set a variable for $\vert C^\prime \vert$ that records the number of activated vertices in $C^\prime$ can be updated after each connection and disconnection. When $\vert C \vert = \vert C^\prime \vert$, which indicates no vertex in the next level, {\sc GenS} prints the spanning tree directly, which skips the construction of $P$ and $C$ in the next level.

Pseudocode of the optimized algorithm is provided in Algorithm \ref{alg:spanning_complete_opt}. 
A complete Python implementation of the optimized algorithm is provided in Appendix \ref{sec2:gens_opt_code}.

\begin{algorithm}
    \caption{Optimized algorithm for generating pivot Gray codes for spanning trees of complete graphs in constant amortized time per spanning tree.}
    \label{alg:spanning_complete_opt}
    \begin{algorithmic}[1]
        \Procedure{\sc GenS}{$\ell, V, t_{\ell-1}, n_c$}
        \If{$n_c = 0$}
            \State{Print spanning tree}
            \State{\textbf{return}}
        \EndIf
        \State{$P \gets v_i \in V \mathrm{~where~} t_{\ell-1, i} > 0$}
        \State{$C \gets v_i \in V \mathrm{~where~} t_{\ell-1, i} = 0$}
        \State{$t_{\ell} \gets 0^{\vert C \vert}$}

        \State{$C^\prime \gets \{\}$}

        \State{$m \gets \{p_i \to i\}_{p_i \in P}$}
        \For{$c_i \in C$}
            \If{$\pi(c_i) \in m$}
                \State{$t_{\ell, i} \gets m[\pi(c_i)]$}
                \State{Append $i$ to $C^\prime$}
            \EndIf
        \EndFor
        \State{{\sc GenS}($\ell + 1, C, t_{\ell}, \vert C \vert - \vert C^\prime\vert$)}
        \For{{\sc NextK}$(t_{\ell}, \vert P \vert +1 ) \neq \emptyset$}
        \State{$t_{\ell}, \delta \gets ${\sc NextK}$(t_{\ell}, \vert P \vert +1 )$}
            \If{$\vert \delta \vert > 1$}
                \State{$\pi(c_{\delta_1}) \gets \pi(c_{\delta_0})$}
                \State{{\sc Lift}$(c_{\delta_1})$}
                \State{Remove $\delta_0$ from $C^\prime$}
                \State{Append $\delta_1$ to $C^\prime$}
            \ElsIf{$t_{\ell, \delta_0} > 0$}
                \State{$\pi(c_{\delta_0}) \gets p_{b_{\delta_0}} \in P$}
                \If{$\delta_0$ not in $C^\prime$}{~Append $\delta_0$ to $C^\prime$}
                \EndIf
            \Else
                \State{Remove $\delta_0$ from $C^\prime$}
                \State{$\pi(c_{\delta_0}) \gets c_{C^\prime_{-1}}$}
            \EndIf
            \State{{\sc GenS}($\ell + 1, C, t_{\ell}, \vert C \vert - \vert C^\prime\vert$)}
        \EndFor
        \EndProcedure
    \end{algorithmic}
\end{algorithm}
\clearpage
\subsection{Python code to generate pivot Gray codes for spanning trees of complete graphs.}\label{sec2:gens_opt_code}
~

\lstset{style=mystyle}
\begin{lstlisting}[language=Python]
from k_ary import k_ary  

class Node:
    def __init__(self, val):
        self.val = val
        self.parent = None

class LinkedList:
    def __init__(self):
        self.map = {}
        self.last = -1
        self.len = 0

    def append(self, val):
        self.map[val] = [self.last, -1]
        if self.last != -1:
            self.map[self.last][1] = val
        self.last = val
        self.len += 1

    def remove(self, val):
        val_last = self.map[val][0]
        val_next = self.map[val][1]
        if val_last != -1:
            self.map[val_last][1] = val_next
        if val_next != -1:
            self.map[val_next][0] = val_last
            self.last = val_next
        else:
            self.last = val_last
        del self.map[val]
        self.len -= 1
    
node_list = []

def lift_node(node):
    if node.parent is not None:
        lift_node(node.parent)
        node.parent.parent = node

def spanning_level(V, t_, n_c):
    if n_c == 0:
        yield node_list
        return

    P = [V[i] for i in range(len(V)) if t_[i] > 0]
    C = [V[i] for i in range(len(V)) if t_[i] == 0]
    t_l = [0 for _ in range(n_c)]
    C_ = LinkedList()

    m = {}
    for i in range(len(P)):
        m[P[i].val] = i+1

    for i in range(len(C)):
        if C[i].parent.val in m:
            t_l[i] = m[C[i].parent.val]
            C_.append(i)

    yield from spanning_level(C, t_l, len(C) - C_.len)

    K = k_ary([len(P) for _ in range(n_c)], t_l)
    next(K)

    for t_l, idx in K:
        if len(idx) > 1:
            temp = C[idx[0]].parent
            C[idx[0]].parent = None
            lift_node(C[idx[1]])
            C[idx[1]].parent = temp
            C_.append(idx[1])
            C_.remove(idx[0])

        elif t_l[idx[0]] > 0:
            C[idx[0]].parent = P[t_l[idx[0]]-1]
            if not idx[0] in C_.map:
                C_.append(idx[0])

        else:
            C_.remove(idx[0])
            C[idx[0]].parent = C[C_.last]

        yield from spanning_level(C, t_l, len(C) - C_.len)

if __name__ == '__main__':

    def str_tree(nodes):
        res = ''
        for nd in nodes:
            res += '{}->{}; '.format(nd.val+1, nd.parent.val+1 if nd.parent is not None else -1)

        return res

    print('Enter n:')
    n = int(input())

    prev = None
    for i in range(0, n):
        new_node = Node(i)
        if prev is not None:
            new_node.parent = prev
        node_list.append(new_node)
        prev = new_node
    print('==============================================')

    cnt = 0
    for t in spanning_level(node_list, [1] + [0 for _ in range(n - 1)], n - 1):
        print(str_tree(t))
        cnt += 1

    print('Total: ' + str(cnt))

\end{lstlisting}
\clearpage

\section{Python code to generate Gray codes for \texorpdfstring{$k$}{k}-ary strings and mixed-radix strings.}\label{sec:genk_code}
~


\lstset{style=mystyle}
\begin{lstlisting}[language=Python]
def k_ary(k_list, alpha):

    n = len(k_list)
    d = [1 if a != 1 else -1 for a in alpha]
    idx = []
    s = [sum(alpha)]

    order = [i for i in range(n)]
    for i in range(n-1, -1, -1):
        if k_list[i] > 1:
            order[n-1], order[i] = i, n-1
            break

    def _next(i):
        i_ = order[i]

        for r in range(k_list[i_] + 1):

            if i == n-1:
                if s[0] != 0:
                    if len(idx) > 1 and idx[0] == idx[1]:
                        idx.pop()
                    yield alpha, idx
                    idx[:] = []
            else:
                i_next = order[i + 1]
                yield from _next(i + 1)
                d[i_next] = 1 if alpha[i_next] == k_list[i_next] else -d[i_next]

            if r < k_list[i_]:
                a_b = alpha[i_]
                a_a = (alpha[i_] + d[i_] + k_list[i_] + 1) % (k_list[i_] + 1)
                s[0] += a_a - a_b
                alpha[i_] = a_a
                idx.append(i_)

    yield from _next(0)


\end{lstlisting}

\clearpage

\section{Python code to generate an edge-exchange Gray code for spanning trees of a general graph.}\label{sec:geng_code}
~


\lstset{style=mystyle}
\begin{lstlisting}[language=Python]
from k_ary import k_ary  
import numpy as np

class Node:
    def __init__(self, val):
        self.val = val
        self.parent = None

node_list = []
adj = [[0]]

def lift_node(node):
    if node.parent is not None:
        lift_node(node.parent)
        node.parent.parent = node

def get_connected_components(nodes):
    visited = {nd.val: False for nd in nodes}
    cc_list = []
    for nd in nodes:
        if not visited[nd.val]:
            visited[nd.val] = True
            group = [nd]
            dfs(nd, nodes, visited, group)
            cc_list.append(group)
    return cc_list

def dfs(cnode, nodes, visited, group):
    for nd in nodes:
        if not visited[nd.val] and adj[cnode.val][nd.val] == 1:
            visited[nd.val] = True
            group.append(nd)
            dfs(nd, nodes, visited, group)

def check_k(parents, children):
    c = [0 for _ in range(len(children))]
    p_list = [[] for _ in range(len(children))]
    for i_c in range(len(children)):
        n_p = 0
        for i_p in range(len(parents)):
            if adj[parents[i_p].val][children[i_c].val] == 1:
                p_list[i_c].append(parents[i_p])
                n_p += 1
                if children[i_c].parent.val == parents[i_p].val:
                    c[i_c] = n_p            
    k_list = [len(p) for p in p_list]
    return c, k_list, p_list

def get_connection(cnode, nodes):
    m = {nd.val: [] for nd in nodes}
    for nd in nodes:
        try:
            m[nd.parent.val].append(nd)
        except KeyError:
            pass

    nodes_under_c = [cnode]
    temp_nodes = [cnode]
    while len(temp_nodes) > 0:
        new_nodes = []
        for tnd in temp_nodes:
            new_nodes += m[tnd.val]
            del m[tnd.val]
        nodes_under_c += new_nodes
        temp_nodes = new_nodes

    nodes_other = [nd for nd in nodes
                   if m.get(nd.val) is not None and nd.val != cnode.val]

    for nd1 in nodes_under_c:
        for nd2 in nodes_other:
            if adj[nd1.val][nd2.val] == 1:
                return nd1, nd2

def spanning_level(V, t_):
    C = [V[i] for i in range(len(V)) if t_[i] == 0]
    if len(C) == 0:
        yield node_list
        return

    P = [V[i] for i in range(len(V)) if t_[i] > 0]
    C_ = get_connected_components(C)
    F = [check_k(P, C_sub) for C_sub in C_]

    def spanning_subtree(i_tree):
        if i_tree >= len(F):
            yield from spanning_level(sum(C_, []), sum([g[0] for g in F], []))
            return

        yield from spanning_subtree(i_tree+1)

        K = k_ary(F[i_tree][1].copy(), F[i_tree][0].copy())
        next(K)

        for t_l_i, idx in K:
            if len(idx) > 1:
                C_[i_tree][idx[0]].parent = None
                lift_node(C_[i_tree][idx[1]])
                C_[i_tree][idx[1]].parent = F[i_tree][2][idx[1]][t_l_i[idx[1]] - 1]

            elif t_l_i[idx[0]] > 0:
                C_[i_tree][idx[0]].parent = F[i_tree][2][idx[0]][t_l_i[idx[0]] - 1]

            else:
                nd1, nd2 = get_connection(C_[i_tree][idx[0]], C_[i_tree])
                C_[i_tree][idx[0]].parent = None
                lift_node(nd1)
                nd1.parent = nd2

            F[i_tree][0][:] = t_l_i[:]
            yield from spanning_subtree(i_tree+1)

    yield from spanning_subtree(0)

def get_first_spanning_tree():
    n = len(adj)
    check_list = [True for _ in range(n)]
    node_list[:] = [Node(i) for i in range(n)]
    check_list[0] = False
    for i in range(n):
        for j in range(n):
            if check_list[j] and adj[i][j] == 1:
                check_list[j] = False
                node_list[j].parent = node_list[i]

    return node_list

if __name__ == '__main__':
    def str_tree(nodes, vmap=None):
        res = ''
        for nd in nodes:
            res += '{}->{}; '.format(nd.val+1, nd.parent.val+1 if nd.parent is not None else -1)
        return res

    print('Enter type: #1 complete graph, #2 fan graph, #3 wheel graph, #4 petersen, #5 custom graph')
    i_type = int(input())

    if i_type != 4:
        print('Enter n:')
        n = int(input())
    else:
        n = 10

    if i_type == 1:
        adj = [[1 for _ in range(n)] for _ in range(n)]
        
    elif i_type == 2 or i_type == 3:
        # fan graph
        adj = np.zeros((n, n), dtype=int)
        adj[0, :] = 1
        adj[:, 0] = 1
        for i in range(1, n):
            adj[i, max(0, i - 1):min(n, i + 2)] = 1
            adj[max(0, i - 1):min(n, i + 2), i] = 1

        if i_type == 3:
            # wheel graph
            adj[1, -1] = 1
            adj[-1, 1] = 1

    elif i_type == 4:
        # petersen graph
        adj = np.array([
            [0, 1, 0, 0, 1, 1, 0, 0, 0, 0],
            [1, 0, 1, 0, 0, 0, 1, 0, 0, 0],
            [0, 1, 0, 1, 0, 0, 0, 1, 0, 0],
            [0, 0, 1, 0, 1, 0, 0, 0, 1, 0],
            [1, 0, 0, 1, 0, 0, 0, 0, 0, 1],
            [1, 0, 0, 0, 0, 0, 0, 1, 1, 0],
            [0, 1, 0, 0, 0, 0, 0, 0, 1, 1],
            [0, 0, 1, 0, 0, 1, 0, 0, 0, 1],
            [0, 0, 0, 1, 0, 1, 1, 0, 0, 0],
            [0, 0, 0, 0, 1, 0, 1, 1, 0, 0],
        ])
        
    elif i_type == 5:
        print('Enter edges: e.g., 1,2; 2,3')
        edge_str = input().replace(' ', '').split(';')
        edge_list = []
        for es in edge_str:
            if es:
                edge_list.append(tuple(map(int, es.split(','))))
        adj = np.zeros((n, n), dtype=int)
        for edge in edge_list:
            adj[edge[0]-1, edge[1]-1] = 1
            adj[edge[1]-1, edge[0]-1] = 1

    else:
        raise Exception('Invalid input')

    get_first_spanning_tree()  # bfs

    cnt = 0
    for t in spanning_level(node_list, [1] + [0 for _ in range(len(adj)-1)]):
        print(str_tree(t))
        cnt += 1
    print('Total: ' + str(cnt))

\end{lstlisting}

\clearpage

\section{Pivot Gray code of \texorpdfstring{$K_5$}{K5} generated by {\sc GenS}.}\label{sec:gens_k5}
~
\begin{figure}[H]
    \centering
    \includegraphics[width=0.89\linewidth, trim={0 20 0 0}, clip]{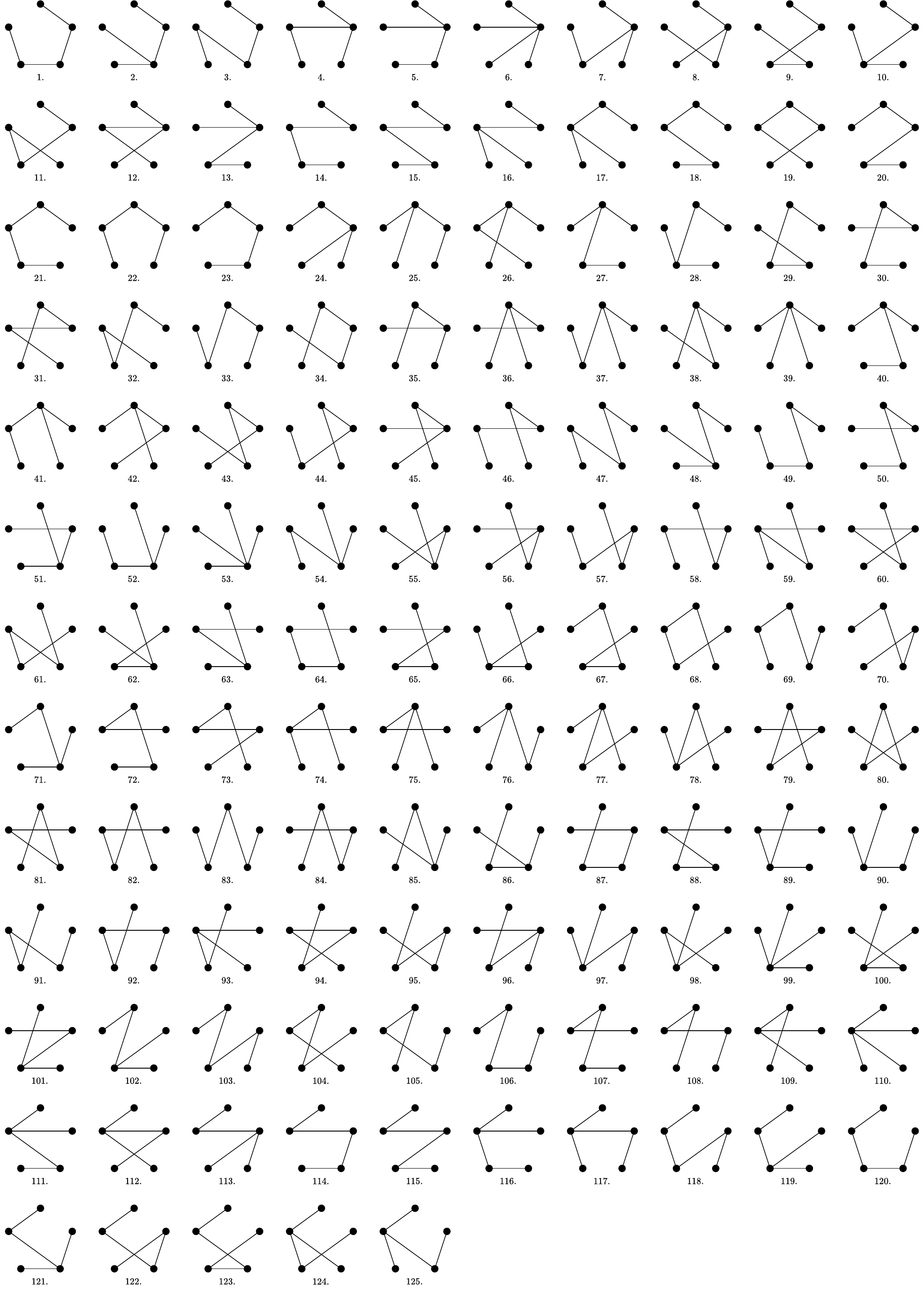}
\end{figure}

\section{Edge-exchange Gray code of the Petersen graph generated by {\sc GenG} in compact representation}\label{sec:geng_petersen}
~

{\scriptsize\noindent
{123112345, 123112a45, 123112a75, 123112375, 129112375, 129112a75, 129112675, 123112645, 123112675, 123112677, 123112678, 123112648, 123112647, 129112677, 129112678, 129112378, 129112a77, 129112377, 123112347, 123112a47, 123112348, 123112378, 123112a77, 123112377, 123119347, 123119348, 12311a348, 12311a378, 123119a47, 12911a378, 12911a678, 123119647, 123119648, 12311a648, 12311a678, 12311a675, 12311a645, 123119645, 12911a675, 12911a375, 12911aa75, 123119a45, 123119345, 12311a345, 12311a375, 12311aa75, 12311aa45, 12311aa65, 12311a365, 123119365, 123119a65, 129119a65, 129119365, 12911a365, 12911aa65, 12911a665, 129119665, 123119665, 12311a665, 12311a668, 123119667, 123119668, 129119668, 129119667, 12911a668, 12911a368, 129119a67, 129119367, 129119368, 123119368, 123119367, 123119a67, 12311a368, 123112368, 123112a67, 123112367, 129112367, 129112a67, 129112368, 129112668, 129112667, 123112667, 123112668, 123112665, 129112665, 129112365, 129112a65, 123112a65, 123112365, 125112365, 125112a65, 125112665, 125112667, 125112668, 125112368, 125112a67, 125112367, 125119367, 125119368, 125119a67, 12511a368, 12511a668, 125119667, 125119668, 125119665, 12511a665, 12511a365, 12511aa65, 125119a65, 125119365, 125119345, 125119a45, 12511aa45, 12511a345, 12511a375, 12511aa75, 12511a675, 125119645, 12511a645, 12511a648, 125119648, 125119647, 12511a678, 12511a378, 125119a47, 125119347, 125119348, 12511a348, 125112348, 125112a47, 125112347, 125112377, 125112a77, 125112378, 125112678, 125112677, 125112647, 125112648, 125112645, 125112675, 125112375, 125112a75, 125112a45, 125112345, 145112345, 145112375, 145112a75, 145112a45, 185112a45, 185112a75, 185112675, 185112645, 145112645, 145112675, 145112677, 145112678, 145112648, 145112647, 185112647, 185112648, 185112678, 185112677, 185112a77, 185112a47, 145112348, 145112378, 145112377, 145112a77, 145112a47, 145112347, 145119347, 145119348, 14511a348, 145119a47, 14511a378, 185119a47, 185119647, 185119648, 18511a648, 18511a678, 14511a678, 145119647, 145119648, 14511a648, 14511a645, 145119645, 14511a675, 18511a675, 185119645, 18511a645, 18511aa45, 185119a45, 18511aa75, 14511a375, 145119345, 14511a345, 14511aa45, 145119a45, 14511aa75, 14511aa65, 145119a65, 145119365, 14511a365, 18511aa65, 185119a65, 185119665, 18511a665, 14511a665, 145119665, 145119667, 145119668, 14511a668, 18511a668, 185119667, 185119668, 145119368, 145119367, 145119a67, 14511a368, 185119a67, 185112a67, 145112368, 145112367, 145112a67, 145112667, 145112668, 185112668, 185112667, 185112665, 145112665, 145112365, 145112a65, 185112a65, 183112a65, 149112365, 149112a65, 189112a65, 189112665, 183112665, 149112665, 149112667, 149112668, 183112668, 183112667, 189112667, 189112668, 149112368, 149112367, 149112a67, 189112a67, 183112a67, 183119a67, 14911a368, 149119368, 149119367, 149119a67, 189119a67, 189119667, 189119668, 18911a668, 18311a668, 183119667, 183119668, 149119668, 149119667, 14911a668, 14911a665, 149119665, 183119665, 18311a665, 18911a665, 189119665, 189119a65, 149119a65, 14911aa65, 18911aa65, 14911a365, 149119365, 183119a65, 18311aa65, 18311aa45, 18311aa75, 18911aa75, 14911aa75, 14911a375, 183119a45, 183119645, 18311a645, 18311a675, 18911a675, 14911a675, 14911a678, 183119647, 183119648, 18311a648, 18311a678, 18911a678, 189112a77, 149112a77, 149112377, 183112a77, 149112378, 183112a47, 183112647, 183112648, 183112678, 189112678, 189112677, 183112677, 149112677, 149112678, 149112675, 183112645, 183112675, 189112675, 189112a75, 183112a75, 149112a75, 149112375, 183112a45, 183192a45, 183182a45, 183182a65, 189182a65, 149182a65, 149192675, 149192375, 189192675, 149182375, 183192675, 183192a75, 183182a75, 189182a75, 189192a75, 149192a75, 149182a75, 149182a77, 189182a77, 189192a77, 149192a77, 149192377, 149192677, 189192677, 149182377, 183192677, 183192a77, 183182a77, 183192678, 149182378, 149192378, 149192678, 189192678, 189182a67, 183182a47, 183182a67, 183192a47, 149182a67, 14918aa65, 14919aa75, 14918aa75, 18918aa75, 18318aa75, 18319aa75, 18919aa75, 18319aa45, 18318aa45, 18318aa65, 18918aa65, 18919a675, 14919a375, 14919a675, 14918a375, 18319a675, 183189a65, 183189a45, 189189a65, 183199a45, 149189a65, 145189a65, 14518aa65, 14518aa75, 14519aa75, 14519aa45, 14518aa45, 145189a45, 145199a45, 145199345, 145189345, 145199645, 14519a645, 14518a345, 14519a345, 14519a375, 14518a375, 14518a365, 14519a675, 145189365, 18519a675, 18519a645, 185199645, 185199a45, 185189a45, 18518aa45, 18519aa45, 18519aa75, 18518aa65, 18518aa75, 185189a65, 145189368, 14518a378, 14518a368, 14519a378, 145189367, 145189347, 145189348, 145199348, 145199347, 14519a348, 14518a348, 14519a648, 145199648, 145199647, 145199a47, 145189a47, 185189a47, 185199a47, 185199648, 185199647, 18519a648, 185192648, 185192678, 185192677, 185182a77, 185192a77, 185182a67, 185182a47, 185192a47, 185192647, 145192647, 145192347, 145192a47, 145182a47, 145182347, 145182367, 145182a67, 145182a77, 145182377, 145192377, 145192677, 145192a77, 145192378, 145182378, 145192678, 145182368, 145182348, 145192348, 145192648, 145192645, 145182345, 145192345, 145182365, 145182375, 145192375, 145192675, 145192a75, 145182a75, 145182a65, 145182a45, 145192a45, 185192a45, 185182a45, 185182a65, 185182a75, 185192a75, 185192675, 185192645, 125192645, 125192345, 125182345, 125182a45, 125192a45, 125182a65, 125182365, 125182375, 125192375, 125192675, 125192a75, 125182a75, 125182a77, 125192677, 125192a77, 125182a67, 125182a47, 125192647, 125192a47, 125192347, 125182347, 125182367, 125182377, 125192377, 125192378, 125182378, 125182368, 125182348, 125192348, 125192648, 125192678, 125199648, 125199a47, 125199647, 125189a47, 12519a648, 12519a348, 12518a348, 125189347, 125199347, 125199348, 125189348, 125189368, 12518a368, 12518a378, 12519a378, 125189367, 125189365, 125189a65, 12518aa65, 12519aa75, 12518aa75, 12519a675, 12519a375, 12518a375, 12518a365, 12518a345, 12519a345, 12519a645, 12519aa45, 12518aa45, 125189a45, 125199a45, 125199645, 125199345, 125189345, 123189345, 123199345, 123189365, 123199645, 123199a45, 123189a45, 123189a65, 12318aa65, 12318aa45, 12318aa75, 12319aa75, 12319aa45, 12319a645, 12319a675, 12319a375, 12319a345, 12318a345, 12318a365, 12318a375, 12918a375, 12919a375, 12918a365, 12919a675, 12919aa75, 12918aa75, 12918aa65, 129189a65, 129189365, 129189367, 12919a378, 12918a378, 129189368, 12918a368, 12319a648, 123189a47, 123199a47, 123199647, 123199648, 123199348, 12319a348, 12318a348, 123189348, 123189347, 123199347, 123189367, 123189368, 12318a368, 12318a378, 12319a378, 123192378, 123182378, 123182377, 123192377, 123192677, 123192a77, 123182a77, 123192678, 123192648, 123192647, 123192a47, 123182a47, 123182a67, 123182367, 123192347, 123182347, 123182348, 123192348, 123182368, 129182368, 129182367, 129182a67, 129192678, 129192677, 129192a77, 129182a77, 129182377, 129192377, 129192378, 129182378, 129182375, 129192375, 129192675, 129192a75, 129182a75, 129182a65, 129182365, 123182365, 123192345, 123182345, 123192645, 123192a45, 123182a45, 123182a65, 123182a75, 123192a75, 123192675, 123192375, 123182375, 123482375, 123492375, 123492378, 123482378, 123a82378, 123a92378, 123a92348, 123a82368, 123a82348, 123482345, 123482365, 123492345, 123492348, 123482368, 123482348, 123482a45, 123492a45, 123492645, 123492648, 123482a65, 123a92648, 123a92678, 123482a75, 123492a75, 123492675, 123492678, 123492677, 123482a77, 123492a77, 123a92a77, 123a82a77, 123a92677, 123a92647, 123a82a67, 123a82a47, 123a92a47, 123492a47, 123482a47, 123482a67, 123492647, 123492347, 123482367, 123482347, 123a82347, 123a82367, 123a92347, 123a92377, 123a82377, 123482377, 123492377, 129492377, 129482377, 129a82377, 129a92377, 125a92377, 125a82377, 125a82347, 125a82367, 129a82367, 125a92347, 129482367, 129482a67, 125a92647, 125a92a47, 125a82a47, 125a82a67, 129a82a67, 129a82a77, 129a92a77, 129a92677, 129492677, 129482a77, 129492a77, 125a92a77, 125a82a77, 125a92677, 125a92678, 129482a75, 129492a75, 129492675, 129492678, 129a92678, 129a82368, 125a82368, 125a82348, 129482368, 125a92348, 129482365, 129482375, 129492375, 129492378, 129a92378, 129a82378, 129482378, 125a82378, 125a92378, 125a9a378, 125a9a348, 129a9a378, 125a99348, 12949a378, 12948a378, 129489368, 12948a368, 129a8a368, 125a8a368, 125a89368, 129a89368, 125a89348, 125a8a348, 125a8a378, 129a8a378, 129a89367, 129489365, 129489367, 12948a365, 125a89367, 12349a675, 12349aa75, 12348aa75, 12348aa65, 123489a65, 123489a45, 123499a45, 123499645, 12349a645, 12349aa45, 12348aa45, 12349a648, 123499648, 123499a47, 123499647, 123489a47, 123a89a47, 123a99a47, 123a99647, 123a99648, 123a9a648, 123a9a348, 123a99348, 123a89348, 123a8a348, 123a8a378, 123a8a368, 123a89368, 123a9a378, 123a89367, 123a89347, 123a99347, 123499347, 12349a345, 123499345, 123489345, 12348a345, 123489347, 123489367, 12348a375, 12348a365, 123489365, 12349a375, 12349a378, 12348a378, 12348a368, 123489368, 123489348, 12348a348, 12349a348, 123499348, 149492378, 149492375, 149492a75, 149a92378, 149a92678, 149492678, 149492675, 189492675, 189492678, 189a92678, 189492a75, 189482a75, 149a82378, 149482378, 149482375, 149482a75, 145a92678, 183492675, 183492678, 183a92678, 185a92678, 185a82a67, 185a82a47, 185a92a47, 183a92a47, 183a82a47, 183a82a67, 189a82a67, 149a82a67, 145a92a47, 145a82a47, 145a82a67, 145a82367, 145a82347, 145a92347, 145a92647, 185a92647, 189482a67, 183482a47, 183482a67, 183492a47, 149482a67, 149482a77, 189482a77, 189492a77, 149492a77, 149492377, 149492677, 189492677, 149482377, 149a82377, 149a92377, 149a92677, 189a92677, 189a92a77, 149a92a77, 149a82a77, 189a82a77, 183a82a77, 185a82a77, 145a82a77, 145a92a77, 183a92a77, 185a92a77, 185a92677, 145a92677, 145a92377, 183a92677, 145a82377, 183492677, 183492a77, 183482a77, 183412a77, 145a12377, 145a12a77, 183a12a77, 185a12a77, 189a12a77, 149a12a77, 149a12377, 149412377, 149412a77, 189412a77, 183412a47, 145a12347, 145a12a47, 183a12a47, 185a12a47, 189412a75, 149a12378, 149412378, 149412375, 149412a75, 149412675, 149412677, 149a12677, 145a12677, 145a12678, 149412678, 149a12678, 145a12648, 145a12647, 185a12647, 183412647, 183a12647, 183412645, 183412648, 183a12648, 185a12648, 185a12678, 183a12678, 189a12678, 189412678, 183412678, 183412675, 189412675, 189412677, 183412677, 183a12677, 185a12677, 189a12677, 189a12667, 189412667, 189412665, 189412668, 189a12668, 183a12668, 183412668, 185a12668, 183412665, 183412667, 183a12667, 185a12667, 145a12667, 145a12668, 149a12668, 149412668, 149412665, 149412667, 149a12667, 149a12367, 149412367, 149412a67, 149a12a67, 189a12a67, 189412a67, 189412a65, 149a12368, 149412368, 149412365, 149412a65, 145a12a67, 183a12a67, 185a12a67, 145a12367, 183412a67, 183419a67, 145a19367, 145a19a67, 183a19a67, 185a19a67, 18941aa65, 149a1a368, 14941a368, 14941a365, 14941aa65, 149419a65, 149419368, 149419365, 149a19368, 149a19367, 149a19a67, 149419a67, 149419367, 189419a67, 189a19a67, 189419a65, 189419665, 189419667, 189a19667, 18941a665, 18941a668, 189a1a668, 189a19668, 189419668, 183419668, 183a19668, 185a19668, 185a1a668, 18341a668, 183a1a668, 18341a665, 183419665, 183419667, 183a19667, 185a19667, 145a19667, 145a1a668, 145a19668, 149a19668, 149419668, 14941a668, 149a1a668, 14941a665, 149419665, 149419667, 149a19667, 183a19647, 18341a675, 18341a645, 183419645, 183419647, 183419648, 18341a648, 18341a678, 183a1a678, 183a1a648, 183a19648, 185a19648, 185a1a648, 185a1a678, 189a1a678, 18941a678, 14941a678, 145a19648, 145a1a648, 145a1a678, 149a1a678, 123a19a47, 12341aa75, 12341aa45, 123419a45, 123419a47, 123419347, 12341a345, 123419345, 12341a375, 12341a378, 12341a348, 123419348, 123a19348, 123a1a348, 123a1a378, 123a19347, 129a1a378, 125a1a348, 125a1a378, 125a19348, 12941a378, 12941a678, 125a19648, 125a1a648, 125a1a678, 129a1a678, 123a19647, 12341a675, 12341a645, 123419645, 123419647, 123419648, 12341a678, 12341a648, 123a1a648, 123a1a678, 123a19648, 123a19668, 123419668, 125a19668, 129a19668, 129419668, 129419665, 129419667, 129a19667, 125a19667, 123a19667, 123419665, 123419667, 12341a665, 12941a665, 12941a668, 129a1a668, 125a1a668, 123a1a668, 12341a668, 12341a368, 12341a365, 123a1a368, 125a1a368, 129a1a368, 12941a365, 12941a368, 129419368, 129419365, 129419367, 129a19367, 129a19368, 125a19368, 125a19367, 123a19367, 123a19368, 123419368, 123419365, 123419367, 123419a67, 123419a65, 123a19a67, 125a19a67, 129a19a67, 129419a65, 129419a67, 12941aa65, 12341aa65, 123412a65, 129412a65, 129412a67, 129a12a67, 125a12a67, 123a12a67, 123412a67, 123412367, 123a12367, 125a12367, 129a12367, 129412367, 129412365, 129412368, 129a12368, 125a12368, 123a12368, 123412365, 123412368, 123412668, 123a12668, 123412665, 123412667, 123a12667, 125a12667, 125a12668, 129a12668, 129412668, 129412665, 129412667, 129a12667, 129a12677, 129412677, 125a12677, 129412675, 129412678, 129a12678, 125a12678, 125a12648, 125a12647, 123a12647, 123412647, 123412645, 123412648, 123a12648, 123a12678, 123412678, 123412675, 123412677, 123a12677, 123a12377, 123412377, 123412347, 123a12347, 123a12a47, 123412a47, 123412a77, 123a12a77, 123412a75, 123412a45, 123412345, 123412348, 123a12348, 123a12378, 123412375, 123412378, 129412378, 129a12378, 129412375, 125a12378, 125a12348, 129412a75, 129412a77, 129a12a77, 125a12a77, 125a12a47, 125a12347, 125a12377, 129412377, 129a12377, 389a12677, 385a12677, 389412677, 385a12647, 389412675, 383412675, 383412677, 383412647, 383412645, 383a12647, 383a12677, 383a19647, 383419647, 383419645, 38341a645, 38341a675, 78341a675, 78341a645, 783419647, 783419645, 783a19647, 783a19648, 783419648, 785a19648, 785a1a648, 789a1a678, 785a1a678, 78941a678, 78341a678, 78341a648, 783a1a648, 783a1a678, 383a1a678, 383a1a648, 38341a648, 38341a678, 38941a678, 389a1a678, 385a1a678, 385a1a648, 385a12648, 385a12678, 389a12678, 385a19648, 389412678, 383412678, 383412648, 383419648, 383a19648, 383a12678, 383a12648, 345a12648, 345a19648, 745a19648, 345a12678, 725a19648, 725a1a648, 745a1a648, 745a1a678, 725a1a678, 723a1a678, 729a1a678, 749a1a678, 723a1a648, 349a1a678, 345a1a678, 345a1a648, 34941a678, 72341a648, 72341a678, 72941a678, 74941a678, 749419a67, 749419367, 749a19367, 749a19a67, 749a19368, 749419368, 749419365, 749419a65, 729419a65, 729419365, 729419368, 789419a65, 729a19368, 729a19367, 729a19a67, 789a19a67, 789419a67, 729419367, 729419a67, 389419a67, 389a19a67, 389419a65, 349419a65, 349419368, 349419365, 349a19368, 349a19367, 349a19a67, 349419a67, 349419367, 349412367, 349412368, 34941a368, 349412a67, 349412a65, 34941aa65, 34941a365, 349412365, 74941a365, 74941aa65, 74941a368, 749a1a368, 349a12a67, 349a12367, 349a12368, 349a1a368, 38941aa65, 72941aa65, 78941aa65, 389412a65, 72941a365, 725a19367, 723a19367, 723a19a67, 725a19a67, 723419a67, 723419367, 783419a67, 783a19a67, 785a19a67, 745a19a67, 745a19367, 723419365, 723419368, 723a19368, 723419a65, 725a19368, 385a19a67, 383a19a67, 345a19a67, 345a19367, 383419a67, 383419667, 383a19667, 383419665, 385a19667, 785a19667, 783419665, 783419667, 783a19667, 383a12667, 383412667, 383412665, 38341a665, 385a12667, 78341a665, 78341a668, 783a1a668, 785a1a668, 385a12668, 383412668, 383a12668, 383a1a668, 38341a668, 385a1a668, 385a19668, 383419668, 383a19668, 783a19668, 783419668, 785a19668, 789a19668, 789419668, 389419668, 389a19668, 389a12668, 389412668, 38941a668, 389a1a668, 789a1a668, 78941a668, 78941a665, 389a12667, 389412667, 389412665, 38941a665, 389419665, 389419667, 389a19667, 789a19667, 789419665, 789419667, 729419667, 729419665, 729a19667, 749a19667, 749419665, 749419667, 349419667, 349419665, 349a19667, 349a12667, 349412667, 349412665, 34941a665, 74941a665, 72941a665, 72941a668, 729a1a668, 349a12668, 349412668, 34941a668, 74941a668, 749a1a668, 349a1a668, 349a19668, 349419668, 729419668, 729a19668, 749a19668, 749419668, 723419668, 723a19668, 725a19668, 745a19668, 345a19668, 345a12668, 345a1a668, 723a1a668, 725a1a668, 745a1a668, 72341a668, 723419667, 723a19667, 725a19667, 745a19667, 723419665, 345a19667, 34519a675, 34518a365, 34518a375, 34519a375, 74519a375, 74518a375, 74518a365, 74519a675, 345192675, 345192375, 345182375, 345182365, 345189365, 745189365, 745189345, 745199345, 745199645, 345192645, 345182345, 345192345, 345199345, 345189345, 345199645, 34519a645, 34518a345, 34519a345, 74519a345, 74518a345, 74519a645, 74519aa45, 74518aa45, 34518aa45, 34519aa45, 345192a45, 345182a45, 345189a45, 345199a45, 745199a45, 745189a45, 745189a65, 345192a75, 345182a75, 345182a65, 345189a65, 34518aa65, 34518aa75, 34519aa75, 74519aa75, 74518aa65, 74518aa75, 72518aa75, 72518aa65, 72519aa75, 78519aa75, 78518aa65, 78518aa75, 38518aa75, 38518aa65, 38519aa75, 385192a75, 385182a75, 385182a65, 385189a65, 785189a65, 725189a65, 725189a45, 725199a45, 385192a45, 385182a45, 385189a45, 785189a45, 785199a45, 385199a45, 38519aa45, 38518aa45, 72518aa45, 72519aa45, 78519aa45, 78518aa45, 72518a345, 72519a345, 72519a645, 78519a645, 38519a645, 385192645, 385199645, 725199345, 725199645, 785199645, 725189345, 72518a375, 72519a375, 72519a675, 78519a675, 72518a365, 38519a675, 385199647, 385199648, 385199a47, 785199a47, 725199347, 725199a47, 725199348, 725199648, 785199648, 785199647, 725199647, 725189347, 725189a47, 785189a47, 725189348, 385189a47, 38519a648, 385192648, 78519a648, 385192647, 72519a648, 345182a67, 345182a77, 345192a77, 345192677, 345192678, 345192378, 345182378, 345182368, 345182367, 345182377, 345192377, 345189367, 345189368, 34518a378, 34518a368, 34519a378, 74519a378, 74518a378, 74518a368, 745189368, 745189367, 745189347, 745189348, 745199348, 745199347, 745199a47, 745199647, 745199648, 745189a47, 74519a648, 74519a348, 74518a348, 34518a348, 345182347, 345182348, 345192348, 345192347, 34519a348, 34519a648, 345192a47, 345192647, 345192648, 345182a47, 345189a47, 345199a47, 345199647, 345199648, 345199348, 345199347, 345189347, 345189348, 383189a45, 383182a75, 383182a45, 389182a75, 389182a65, 389189a65, 383189a65, 383182a65, 783189a65, 789189a65, 783189a45, 783199a45, 389192a75, 383192a75, 383192a45, 383199a45, 349189a65, 729189a65, 749189a65, 349182a65, 723189a65, 72319a675, 72319a375, 72318a375, 72918a375, 72919a375, 72919a675, 74919a675, 74919a375, 78919a675, 74918a375, 78319a675, 72318a365, 72319a345, 72318a345, 72319a645, 72918a365, 38919a675, 34919a375, 34919a675, 34918a375, 38319a675, 38319aa75, 38919aa75, 38918aa75, 38318aa75, 38318aa45, 38318aa65, 38918aa65, 38319aa45, 34918aa65, 34918aa75, 34919aa75, 74919aa75, 72319aa75, 72919aa75, 72918aa75, 72318aa75, 74918aa75, 74918aa65, 72318aa45, 72318aa65, 72918aa65, 72319aa45, 78319aa45, 78318aa45, 78318aa65, 78918aa65, 78918aa75, 78318aa75, 78319aa75, 78919aa75, 78911aa75, 78311aa75, 78311aa45, 72311aa45, 72311aa75, 72911aa75, 74911aa75, 34911aa75, 38311aa45, 38311aa75, 38911aa75, 34911a375, 72311a345, 72311a375, 72911a375, 74911a375, 783119a45, 389112a75, 383112a75, 383112a45, 383119a45, 383119645, 383112675, 383112645, 389112675, 783119645, 78311a645, 78311a675, 78911a675, 38911a675, 38311a645, 38311a675, 72311a675, 72911a675, 74911a675, 72311a645, 34911a675, 34911a678, 72311a648, 72311a678, 72911a678, 74911a678, 783119647, 389112677, 383112677, 383112647, 383119647, 383119648, 383112678, 383112648, 389112678, 38911a678, 38311a678, 38311a648, 78311a648, 78311a678, 78911a678, 783119648, 749119368, 749119367, 749119a67, 729119a67, 729119367, 789119a67, 729119368, 389119a67, 349119a67, 349119367, 349119368, 349112368, 349112367, 34911a368, 349112a67, 74911a368, 783119a67, 723119367, 723119a67, 723119368, 383119a67, 383119667, 783119667, 783119668, 383119668, 383112668, 38311a668, 78311a668, 383112667, 389112667, 389112668, 38911a668, 78911a668, 789119668, 389119668, 389119667, 789119667, 729119667, 749119667, 349119667, 349119668, 729119668, 749119668, 74911a668, 34911a668, 349112668, 72911a668, 349112667, 72311a668, 723119668, 723119667, 723119665, 72311a665, 72911a665, 74911a665, 34911a665, 349112665, 349119665, 729119665, 749119665, 789119665, 389119665, 389112665, 38911a665, 78911a665, 78311a665, 38311a665, 383112665, 383119665, 783119665, 783119a65, 723119a65, 383119a65, 383112a65, 38311aa65, 72311aa65, 78311aa65, 72311a365, 723119365, 729119365, 749119365, 349119365, 349112365, 34911a365, 72911a365, 74911a365, 74911aa65, 72911aa65, 78911aa65, 38911aa65, 34911aa65, 349112a65, 389112a65, 389119a65, 349119a65, 749119a65, 729119a65, 789119a65, 785119a65, 725119a65, 385119a65, 385112a65, 38511aa65, 72511aa65, 78511aa65, 72511a365, 725119365, 745119365, 345119365, 345112365, 34511a365, 74511a365, 74511aa65, 34511aa65, 345112a65, 345119a65, 745119a65, 745119665, 345119665, 345112665, 34511a665, 74511a665, 72511a665, 725119665, 785119665, 385119665, 385112665, 38511a665, 78511a665, 78511a668, 385112668, 38511a668, 72511a668, 74511a668, 345112668, 34511a668, 345119668, 745119668, 725119668, 785119668, 385119668, 385119667, 785119667, 725119667, 745119667, 345119667, 345112667, 385112667, 345112367, 345112368, 34511a368, 345112a67, 74511a368, 745119368, 745119367, 745119a67, 345119a67, 345119368, 345119367, 725119367, 725119a67, 785119a67, 725119368, 385119a67, 385119a47, 725119348, 725119347, 725119a47, 785119a47, 74511a378, 345112a77, 345112377, 345112378, 34511a378, 34511a348, 345112347, 345112348, 345112a47, 345119a47, 345119347, 345119348, 745119348, 745119347, 745119a47, 74511a348, 74511a648, 745119648, 345119648, 345112648, 34511a648, 72511a648, 725119648, 785119648, 78511a648, 38511a648, 385112648, 385119648, 385112678, 38511a678, 78511a678, 72511a678, 74511a678, 345112678, 34511a678, 345112677, 385112677, 385112647, 385119647, 785119647, 725119647, 745119647, 345112647, 345119647, 345119645, 345112645, 745119645, 725119645, 785119645, 385112645, 385119645, 38511a645, 78511a645, 72511a645, 74511a645, 34511a645, 34511a675, 74511a675, 72511a675, 78511a675, 38511a675, 385112675, 345112675, 345112375, 345112345, 345119345, 745119345, 74511a345, 34511a345, 34511a375, 74511a375, 74511aa75, 34511aa75, 34511aa45, 74511aa45, 745119a45, 345112a45, 345119a45, 345112a75, 385112a75, 385112a45, 385119a45, 785119a45, 725119a45, 72511aa45, 78511aa45, 38511aa45, 38511aa75, 72511aa75, 78511aa75, 72511a375, 72511a345, 725119345. 
} 
}

\end{document}